\documentclass[a4paper,12pt]{article}
\usepackage{geometry}
\geometry{top=1in,bottom=1.3in,left=1in,right=1in}

\usepackage{setspace}

\usepackage{enumitem}
\usepackage{amsmath,amssymb,amsthm}
\usepackage{mathtools}

\newtheorem{theorem}{Theorem}
\newtheorem{lemma}{Lemma}
\newtheorem{proposition}{Proposition}
\newtheorem{corollary}{Corollary}
\theoremstyle{definition}
\newtheorem{assumption}{Assumption}

\numberwithin{subassumption}{assumption}

\newtheorem{remark}{Remark}

\allowdisplaybreaks

\newcommand{\E}{E}
\newcommand{\I}{I}
\renewcommand{\d}{\mathrm{d}}
\newcommand\T{{ \mathrm{\scriptscriptstyle T} }}
\renewcommand{\Pr}{\mathrm{pr}}

\newcommand{\Alpha}{\mathrm{A}}
\newcommand{\indep}{\mathrel{\reflectbox{\rotatebox[origin=c]{90}{$\models$}}}}
\makeatletter
\newcommand*\dott{\mathpalette\dott@{.75}}
\newcommand*\dott@[2]{\mathbin{\vcenter{\hbox{\scalebox{#2}{$\m@th#1\bullet$}}}}}
\makeatother

\usepackage{booktabs}
\usepackage{multirow}
\usepackage{array}

\usepackage{tikz}
\usetikzlibrary{arrows.meta,automata,calc,positioning,shapes.arrows,shapes.geometric,shapes.multipart,decorations.pathreplacing,decorations.pathmorphing,positioning,swigs}
\tikzset{vertex/.style={inner sep=0pt,minimum size=1em},
         sq/.style={draw,rectangle,inner sep=0pt,minimum size=1em},
         hidden/.style={draw,shape=circle,preaction={fill=gray!50,fill opacity=0.5},inner sep=1pt,minimum size=1em},
         swig vsplit={gap=3pt,line color right=red},
         ell/.style={draw,inner sep=0pt,shape=ellipse}}
\tikzset{snake it/.style={decorate, decoration={snake, amplitude=.4mm,segment length=1.5mm,post length=1mm,pre length=1mm}}}

\usepackage{natbib}

\bibliographystyle{apalike}

\setlength{\bibsep}{0pt plus 0.3ex}

\usepackage{multibib}
\newcites{suppmat}{References}
\bibliographystylesuppmat{apalike}

\usepackage{times}
\usepackage{newtxmath}
\usepackage[cal=cm,bb=ams]{mathalpha}

\usepackage{authblk}

\usepackage{titlesec}
\titleformat*{\section}{\bfseries}
\titleformat*{\subsection}{\bfseries}
\titleformat*{\paragraph}{\bfseries}

\usepackage{minitoc}
\noptcrule

\usepackage{etoolbox}

\makeatletter
\patchcmd{\@maketitle}{\large}{\normalsize}{}{}
\patchcmd{\@maketitle}{\LARGE}{\large}{}{}
\makeatother

\title{Improving precision of cumulative incidence estimates in randomized controlled trials with external controls}
\date{}

\author[1]{Zehao Su\footnote{Corresponding author.}}
\author[1]{Helene C. W. Rytgaard}
\author[2]{Henrik Ravn}
\author[1]{Frank Eriksson}
\affil[1]{Section of Biostatistics, University of Copenhagen, Copenhagen, Denmark}
\affil[2]{Novo Nordisk A/S, S\o{}borg, Denmark}

\begin{document}

\maketitle

\begin{abstract}
Augmenting the control arm in clinical trials with external data can improve statistical power for demonstrating treatment effects.
In many time-to-event outcome trials, participants are subject to truncation by death.
Direct application of methods for competing risks analysis on the joint data may introduce bias, for example, due to covariate shifts between the populations.
In this work, we consider transportability of the conditional cause-specific hazard of the event of interest under the control treatment.
Under this assumption, we derive semiparametric efficiency bounds of causal cumulative incidences.
This allows for quantification of the theoretical efficiency gain from incorporating the external controls.
We propose triply robust estimators that can achieve the efficiency bounds, where the trial controls and external controls are made comparable through time-specific weights in a martingale integral.
We conducted a simulation study to show the precision gain of the proposed fusion estimators compared to their counterparts without utilizing external controls.
As a real data application, we used two cardiovascular outcome trials conducted to assess the safety of glucagon-like peptide-1 agonists.
Incorporating the external controls from one trial into the other, we observed a decrease in the standard error of the treatment effects on adverse non-fatal cardiovascular events with all-cause death as the competing risk.

\paragraph{Keywords}
  Competing risks; Data fusion; Semiparametric efficiency bound; Transportability.
\end{abstract}

\section{Introduction}

Randomized control trials (RCTs) are the gold standard for evaluation of new treatments.
Nonetheless, demonstrating the expected efficacy may require a substantial sample size, thereby requiring long duration of trials and driving up overall costs.
Motivated by these issues, recent years have seen a growing interest in the use of historical data in clinical trials.
In rare-disease trials, where it may be impractical and unethical to randomize a patient to the standard of care, regulatory bodies have discussed the feasibility of replacing trial controls with an external control arm \citep{fda2023considerations,ema2023reflection}.
Another example is hybrid control designs, in which the control arm in a clinical trial is augmented with external controls.
The external controls should match the characteristics of the trial controls to avoid introducing bias, and their transportability should be carefully assessed in the planning phase of trials.

Leveraging external controls in clinical trials is an instance of data fusion.
Despite the ubiquity of time-to-event outcomes in clinical trials, current literature on data fusion in causal inference mostly deals with continuous or binary outcomes.
In this work, we consider external control augmentation for the estimation of treatment effects on the time-to-event, where an individual is subject to multiple modes of failure.
Specifically, we wish to make inference on cumulative incidence functions defined on the counterfactual event time.

In the estimand framework, many transportability studies in survival analysis estimate the risk difference from at-risk indicators at predetermined timepoints \citep{ramagopalan2022transportability,zuo2022transportability,dang2023case}.
Unless the censoring rate is ignorable, risk estimators constructed from dichotomized event times suffer from censoring bias.
\citet{lee2022doubly} and \citet{cao2024transporting} provide a more formal treatment of the problem in generalizing treatment effects from a clinical trial to its superpopulation.
They propose estimators for the target population counterfactual survival curve assuming transportability of the survival time distribution after conditioning on relevant baseline covariates.
However, if the data contains competing events, their identification formula directly corresponds to the all-cause survival function, rather than the estimands desired here.
Moreover, in our application, we observe the outcome for both trial participants and external controls, hence requiring separate estimation strategies \citep{colnet2024causal}.

To accommodate competing risks, we work under the assumption of transportability of the cause-specific hazards, which are natural objects of interest in multi-state models.
Although other assumptions can be postulated, they are either unnecessarily strong, such as transportability of the joint distribution of the event time and type, or lacking of interpretability in the data generating process, such as transportability of the subdistribution function \citep{fine1999proportional}.
We construct semiparametrically efficient estimators by studying the nonparametric efficient influence functions of the parameters.
The resulting estimators show robustness against model misspecification different from existing nonparametric estimators for cumulative incidence functions without data fusion \citep{rytgaard2023estimation}.

In the absence of competing risks, a related line of work extends dynamic borrowing methods to survival analysis \citep{kwiatkowski2024case,tan2022augmenting,li2022conditional,sengupta2023emulating}.
These methods control the extent to which external controls are incorporated into the target population by modifying the data likelihood.
They estimate the hazard ratio between the active arm and the control arm, which has been criticized for lacking causal interpretation.
In contrast, we directly assume hazard transportability and consider robust estimators for marginal causal parameters with efficiency gain.

\section{Identifiability of causal cumulative incidence difference}

Without loss of generality, we consider two types of events: the event of interest (\(J=1\)) and the competing event (\(J=2\)).
For the underlying event time \(T\) and event type \(J\) censored by the censoring time \(C\), we observe the right-censored versions \(\tilde{T}=T\wedge C\) and \(\tilde{J}=\I(T\leq C)J\) with a maximum observation period of \((0,\tau]\).
Data is collected independently from two populations: the target population \((D=1)\) and the source population \((D=0)\).
Besides the outcome tuple \((\tilde{T},\tilde{J})\), a set of baseline covariates \(X\) is also observed in both populations.
In our application, the target population is the study population of an RCT with both an active treatment (\(A=1\)) and a control treatment (\(A=0\)), while the source population contributes only controls.
The supports of the baseline covariates in the RCT population and in the external control population are denoted by \(\mathcal{X}_{1}\) and \(\mathcal{X}_{0}\), respectively.

The observed data is sampled in a non-nested fashion, where random sampling is performed separately within the target population and the external control population \citep{dahabreh2021study}.
More concretely, we have a probability sample \((\tilde{T}_{i},\tilde{J}_{i},A_{i},X_{i})\) from the target population for \(i=1,2,\dots,n_{1}\) and another probability sample \((\tilde{T}_{i},\tilde{J}_{i},X_{i})\) from the external control population for \(i=n_{1}+1,n_{1}+2,\dots,n_{1}+n_{0}\).
The total sample size is denoted by \(n\).
For the asymptotic arguments that appear later, we need the following condition on the sampling scheme.
\begin{assumption}[Stable sampling probability]
  As \(n\to\infty\), \(n_{1}/n\to\alpha\in(0,1)\).
\end{assumption}
When the sample size \(n\) is large, we may view the joint sample as a random sample from some superpopulation distribution of \(O=(\tilde{T},\tilde{J},DA,X,D)\) such that \(\Pr(D=1)=\alpha\).

We are interested in the causal \(\tau\)-time cumulative incidence difference in the target population for both event types.
Under a specific treatment, the causal \(\tau\)-time cumulative incidence is defined as the average probability of having an event by time \(\tau\), had all subjects in the target population received that treatment.
Let the potential outcomes \(\{T(a),J(a)\}\) denote time to event and event type under the static intervention \(a=0,1\).
The population-level target parameters defined before can be represented by
\[
  \theta_{j}(a)=\Pr\{T(a)\leq \tau,J(a)=j \mid D=1\},\quad \theta_j=\theta_j(1)-\theta_j(0),
\]
for event type \(j\in\{1,2\}\).



\begin{assumption}[Causal assumptions]
  \label{asn:causal}
  \hfill
  \begin{enumerate}[label=(\roman*),nosep]
  \item (Consistency) \(T_i(a)=T_i\) and \(J_{i}(a)=J_{i}\) if \(A_{i}=a\) for \(a\in\{0,1\}\);
  \item (Randomization) \(\{T(a),J(a)\}\indep A\mid (X,D=1)\) and \(\Pr(A=a\mid X,D=1)>0\) for \(a\in\{0,1\}\).
  \end{enumerate}
\end{assumption}

With Assumption~\ref{asn:causal}, the target parameters defined on the counterfactual data distribution are identifiable from the uncensored data distribution.
Let \(F_{1j}(t\mid a,x)=\Pr(T\leq t,J=j\mid A=a,X=x,D=1)\) and \(F_{0j}(t\mid x)=\Pr(T\leq t,J=j\mid X=x,D=0)\) be the conditional cumulative incidence functions.
The causal \(\tau\)-time cause \(j\) cumulative incidence under the intervention \(a\) is identified by the g-formula
\[
  \theta_{j}(a)=\E\{F_{1j}(\tau\mid a,X)\mid D=1\}.
\]
To identify the parameter \(\theta_{j}(a)\) with the observed data, some conditions on the censoring time are needed.
Denote the survival functions of the all-cause event time by
\(S_{1}(t\mid a,x)= \Pr(T>t\mid A=a,X=x,D=1)\) and \(S_{0}(t\mid x)= \Pr(T>t\mid X=x,D=0)\), 
and denote the survival functions of the censoring time by \(S^{c}_1(t\mid a,x)= \Pr(C>t\mid A=a,X=x,D=1)\) and \(S^{c}_0(t\mid x)= \Pr(C>t\mid X=x,D=0)\).
\begin{assumption}[Censoring]
  \label{asn:censoring}
  \hfill
  \begin{enumerate}[label=(\roman*),nosep]
  \item (Positivity of censoring time) For all \(t\in(0,\tau]\),
  \begin{equation*}
  \begin{aligned}
    &S_1(t\mid a,x)>0 \Rightarrow S^{c}_1(t\mid a,x)>0,&\quad &\text{for }a\in\{0,1\}, x\in\mathcal{X}_{1};\\
    &S_0(t\mid x)>0 \Rightarrow S^c_0(t\mid x)>0,&\quad &\text{for }x\in\mathcal{X}_{1}\cap\mathcal{X}_{0}.
    \end{aligned}
  \end{equation*}
  \item\label{eqn:independent-censoring} (Independent censoring) \((T,J)\indep C\mid (A,X,D=1)\); \((T,J)\indep C\mid (X,D=0)\).
  \end{enumerate}
\end{assumption}
Under Assumption~\ref{asn:censoring}, the observed data likelihood at the realization \(o=(t,j,a,x,d)\) of \(O\sim P\) factorizes as
\begin{multline*}
  \d P(o)= \d P(x)\{\pi(x)e_1(a\mid x)\}^d\{1-\pi(x)\}^{(1-d)}\\
  \big[\{\d\Alpha_{1j}(t\mid a,x)\}^{I(j\neq 0)}S_{1}(t\!-\!\mid a,x)\big]^d\big[\{\d\Alpha_{0j}(t\mid x)\}^{I(j\neq 0)}S_{0}(t\!-\!\mid x)\big]^{(1-d)}\\
  \big(\big[\d\Alpha_{1}^c(t\mid a,x)\{1-\triangle\Alpha_{11}(t\mid a,x)-\triangle\Alpha_{12}(t\mid a,x)\}\big]^{I(j=0)}S_{1}^c(t\!-\!\mid a,x)\big)^d\\
  \big(\big[\d\Alpha_{0}^c(t\mid x)\{1-\triangle\Alpha_{01}(t\mid x)-\triangle\Alpha_{02}(t\mid x)\}\big]^{I(j=0)}S_{0}^c(t\!-\!\mid x)\big)^{(1-d)}
\end{multline*}
where \(\pi(x)= P(D=1\mid X=x)\) is the target population selection score, \(e_1(a\mid x)= P(A=a\mid X=x,D=1)\) is the treatment propensity score in the RCT,
and the infinitesimal increment of the conditional cumulative hazards of the events and censoring are
\begin{align*}
  \d\Alpha_{1j}(t\mid a,x) &= \frac{\d F_{1j}(t\mid a,x)}{S_1(t\!-\!\mid a,x)}, & \d\Alpha_{1}^c(t\mid a,x)&=-\frac{\d S_1^c(t\mid a,x)}{S_1^c(t\!-\!\mid a,x)}, \\
  \d\Alpha_{0j}(t\mid x) &= \frac{\d F_{0j}(t\mid x)}{S_0(t\!-\!\mid x)}, & \d\Alpha_{0}^c(t\mid x)&=-\frac{\d S_0^c(t\mid x)}{S_0^c(t\!-\!\mid x)}.
\end{align*}
Given Assumptions~\ref{asn:causal}--\ref{asn:censoring}, the parameter \(\theta_{j}(a)\) can be identified as a functional of the observed data distribution.
In Supplementary Material \S\ref{sec:notation-app}, we relate the quantities defined in the observed data distribution to those defined in the uncensored data distribution.

\section{Semiparametric theory for cumulative incidence}

\subsection{Transportability of the cause-specific hazard of the event of interest}
\label{sec:cif-hazard}

We propose a key transportability assumption under which the RCT controls and the external controls are compatible.
In data fusion, we adjust for prognostic variables with shifted distribution between the target population and the source population, so that conditional on these variables, the intervened populations are comparable in a certain respect.
The baseline covariates \(X\) are obviously sufficient for this purpose if
\begin{equation}
  \{N_1(0)(\tau),N_2(0)(\tau)\} \indep D \mid X.\label{eq:distribution-transportability-discrete}
\end{equation}
This strong condition states that the entire event processes under the control treatment become interchangeable between the populations, once the baseline covariates are controlled for.
We will discuss an example where \ref{eq:distribution-transportability-discrete} is violated, but a weaker transportability assumption sufficient for our purpose is fulfilled.

We motivate the assumption in the simplified case where the time to event is observed on a discrete grid.
Let \(\triangle N_j(0)(t)=\I\{T(0)=t,J(0)=j\}\) be the counterfactual indicator for an event of type \(j\) occurring at time \(t\in\{1,\dots,\tau\}\) under the control treatment and let \(N_j(0)(t)=\sum_{s=1}^{t}\triangle N_j(0)(s)\).
In addition to the variables in the previous section, we introduce shifted, unobserved prognostic variables \(U\), whose existence may violate \eqref{eq:distribution-transportability-discrete}, because \(D\) and \(\triangle N_1(0)(t)\) cannot be d-separated without blocking \(U\).
Consider a time-discretized data generating process encoded by the single-world intervention graph \citep{richardson2013single} displayed in Figure~\ref{fig:dag}.
At any timepoint, the variables \(U\) directly affect the competing event \(\triangle N_2(0)(t)\) but act only indirectly on the event of interest \(\triangle N_1(0)(t)\) through the history of the events \(\{N_1(0)(t-1), N_2(0)(t-1)\}\).
In this case, it holds that
\begin{equation}
  \triangle N_1(0)(t) \indep D \mid \{N_{1}(0)(t-1), N_2(0)(t-1),X\}\label{eq:hazard-transportability-discrete}
\end{equation}
for the event of interest without conditioning on the unobserved \(U\).
By definition,
\begin{multline*}
  \Pr\{\triangle N_1(0)(t)=1\mid N_{1}(0)(t-1)=0,N_2(0)(t-1)=0,X,D\}\\
  =\Pr\{T(0)=t, J(0)=1\mid T(0)\geq t,X,D\},
\end{multline*}
so the conditional independence \eqref{eq:hazard-transportability-discrete} is equivalent to transportability of the cause-specific hazard of the event of interest.

In continuous time, an analogous formulation to \eqref{eq:hazard-transportability-discrete} is the following.

\begin{assumption}[Transportability of conditional cause 1 hazard]
  \label{asn:transportability-hazard}
  \(\Alpha_{11}(0)(t\mid x)=\Alpha_{01}(0)(t\mid x)\) for \(x\in\mathcal{X}_{1}\cap\mathcal{X}_{0}\).
\end{assumption}

The interpretation is that for two subjects with the same baseline covariates, one in the RCT and one in the external population, given they have not experienced any event, the probability with which they immediately experience the event of interest are the same.
Unobserved variables like \(U\) can also be time-varying, as long as they have no direct effect on the event of interest.

\tikzset{v/.style={inner sep=0pt,outer sep=2pt},
  hidden/.style={draw,shape=circle,preaction={fill=gray!50,fill opacity=0.5},inner sep=1pt,minimum size=1em}}
\begin{figure}
  \centering
  \scalebox{0.8}{
  \begin{tikzpicture}
    \node[v] (D) at (0,7) { \(D\)};
    \node[v] (U) at (2.5,7) { \(U\)};
    \node[v] (X) at (0,5) { \(X\)};
    \node[v] (AA) at (0,3) { \(A\)};
    \node[v] (A) at (.7,3) { \(0\)};
    \draw (.35,2.7)--(.35,3.3);
    \node[v] (dN11) at (2.5,3) { \(\triangle N_1(0)(1)\)};
    \node[v] (dN21) at (2.5,5) { \(\triangle N_{2}(0)(1)\)};
    \node at (4.5,5) { \(\cdots\)};
    \node[v] (dN1t) at (10,3) { \(\triangle N_{1}(0)(t)\)};
    \node[v] (dN2t) at (10,5) { \(\triangle N_{2}(0)(t)\)};
    \path[-{Latex[length=1ex,width=.5ex]}]
    (D) edge (X)
    (D) edge (U)
    (D) edge[bend right] (AA)
    (X) edge (AA)
    (X) edge (dN11)
    (X) edge (dN21)
    (X) edge[out=45,in=180] (3.5,6)
    (U) edge (dN21)
    (U) edge (3.5,7)
    (A) edge (dN11)
    (A) edge (dN21)
    (A) edge[out=-45,in=180] (3.5,2)
    (dN11) edge (dN21)
    (dN11) edge (4,3)
    (dN21) edge (4,5)
    (dN1t) edge (dN2t)
    (5,3) edge node[midway,above]{\(N_{1}(0)(t-1),N_{2}(0)(t-1)\)} node[midway,below]{\(X\)} (dN1t)
    (5,5) edge node[midway,above]{\(N_{1}(0)(t),N_{2}(0)(t-1)\)} node[midway,below]{\(X,U\)} (dN2t);
  \end{tikzpicture}}
  \caption{Discrete-time single-world intervention graph of a data generating process satisfying Assumptions~\ref{asn:causal} and \ref{asn:transportability-hazard}.}
  \label{fig:dag}
\end{figure}
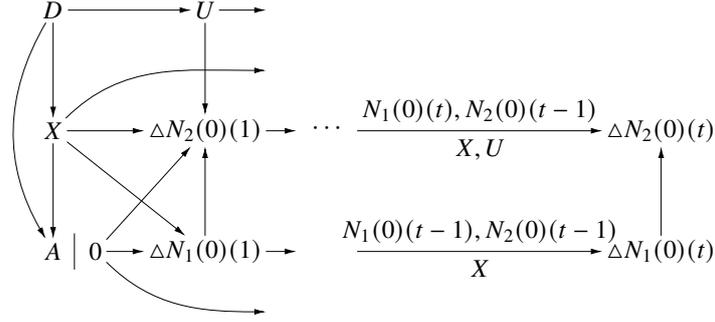

\subsection{Semiparametric efficiency bound}

If Assumption~\ref{asn:causal} is satisfied, Assumption~\ref{asn:transportability-hazard} further implies that
\begin{equation}
  \label{eqn:restriction}
\d{\Alpha}_{11}(t\mid 0,x)=\d{\Alpha}_{01}(t\mid x),\qquad x\in\mathcal{X}_{1}\cap\mathcal{X}_{0}.
\end{equation}
Consider the model \(\mathcal{P}\) of observed data distributions over \(O\) such that for any \(P\in\mathcal{P}\), the distribution of \(A\equiv 0\) is degenerate when \(D=1\), and the conditional cause 1 hazard under the control treatment \(A=0\) is transportable in the sense of \eqref{eqn:restriction}.
Since the hazard increments in \eqref{eqn:restriction} are not population-specific, we define \(\d\Alpha_{\dott 1}(t\mid 0,x)\) for all \(x\in\mathcal{X}_{1}\cup\mathcal{X}_{0}\) such that \(\d{\Alpha}_{\dott 1}(t\mid 0,x)= \d{\Alpha}_{11}(t\mid 0,x)\) if \(x\in\mathcal{X}_{1}\) and \(\d{\Alpha}_{\dott 1}(t\mid 0,x)= \d{\Alpha}_{01}(t\mid x)\) if \(x\in\mathcal{X}_{0}\).

In Proposition~\ref{ppn:eif-discontinuity} of the Supplementary Material, we characterize the semiparametric efficiency bounds of the parameters \(\theta_{1}(0)\) and \(\theta_{2}(0)\) in model \(\mathcal{P}\) under general formulations of the event processes with competing risks.
To gain insights on how the external controls can be most efficiently integrated under the transportability assumption of the cause 1 hazard, in Lemma~\ref{lem:eif} below, we state the efficient influence functions under a mild regularity condition on the cumulative hazards.
Let \(\mathcal{A}\) be the class of cumulative hazards \(\Alpha:(0,\tau]\to[0,\infty)\) which are c\`{a}dl\`{a}g, non-decreasing functions with finite variation and jump sizes no larger than \(1\).

\begin{assumption}[Disjoint discontinuity points]
  \label{asn:discontinuity}
  \(\int_{0}^\tau\triangle{\Alpha}_{11}(t\mid 0,x)\d {\Alpha}_{12}(t\mid 0,x)=0\) for \(x\in\mathcal{X}_1\) and \(\int_0^\tau\triangle{\Alpha}_{01}(t\mid x)\d{\Alpha}_{02}(t\mid x)=0\) for \(x\in\mathcal{X}_1\cap\mathcal{X}_0\).
\end{assumption}

In words, the first part of Assumption~\ref{asn:discontinuity} states that the conditional cumulative hazards \({\Alpha}_{11}(t\mid 0,x)\) and \({\Alpha}_{12}(t\mid 0,x)\) under the control treatment do not share any discontinuity point for any baseline covariates in the target population.
When there are jump points in the distribution function of the underlying event time \(T\) for a countable set of timepoints, the assumption implies that the probability that events of type \(1\) and type \(2\) are observed at the same time is \(0\).
By Assumption~\ref{asn:transportability-hazard}, this also implies that \(\triangle\Alpha_{01}(t\mid x)\triangle\Alpha_{12}(t\mid 0,x)=0\) for \(x\in\mathcal{X}_1\cap\mathcal{X}_0\).
The second part of the assumption can be interpreted analogously.
Assumption~\ref{asn:discontinuity} is certainly satisfied if \(T\) has a continuous distribution.
The assumption is also satisfied if the conditional cumulative hazard of either cause is continuous.
We will revisit Assumption~\ref{asn:discontinuity} when we construct estimators for the target parameters.

For \(a\in\{0,1\}\) and \(j,k\in\{1,2\}\), let \(N_j(t) = \I(\tilde{T}\leq t, \tilde{J}=j)\) denote the counting process for the observed event of type \(j\) and define
\begin{align*}
  H_{\dott}(t\mid x) &= \pi(x)e_1(0\mid x)(S_1S_1^c)(t\mid 0,x) + \{1-\pi(x)\}(S_0S_0^c)(t\mid x), \\
  H_1(t\mid a,x) &= e_1(a\mid x)(S_1S_1^c)(t\mid a,x), \\
  W_{kj}(t\mid a,x) &= \I(j=k)S_{1}(t\!-\!\mid a,x) - \frac{F_{1j}(\tau\mid a,x)-F_{1j}(t\mid a,x)}{1-\triangle\Alpha_{1k}(t\mid a,x)}.
\end{align*}
\begin{lemma}[Semiparametric efficiency bounds]
  \label{lem:eif}
  Suppose Assumptions~\ref{asn:censoring} and \ref{asn:discontinuity} hold.
  For \(a\in\{0,1\}\) and \(j\in\{1,2\}\), the efficient influence function of \(\theta_{j}(a)\) at \(P\in\mathcal{P}\) is
  \begin{align*}
    \label{eqn:eif}
\varphi_{j}(a)(O) &= \frac{\I(A=a)}{\alpha}\int_{0}^{\tau}\bigg\{\frac{\I(a=0)\pi(X)}{H_{\dott}(t\!-\!\mid X)}+\frac{\I(a=1)D}{H_{1}(t\!-\!\mid 1,X)}\bigg\}W_{1j}(t\mid a,X)\\
                  &\hphantom{=\frac{\I(A=a)}{\alpha}\int_{0}^{\tau}}\qquad \big\{\d N_1(t)-I(\tilde{T}\geq t)\d\Alpha_{11}(t\mid a,X)\big\} \\
                  &\hphantom{=}\quad +\frac{D}{\alpha}\int_{0}^{\tau}\frac{\I(A=a)}{H_1(t\!-\!\mid a,X)}W_{2j}(t\mid a,X)\big\{\d N_2(t)-I(\tilde{T}\geq t)\d\Alpha_{12}(t\mid a,X)\big\} \\
                      &\hphantom{=}\quad + \frac{D}{\alpha}\{{F}_{1j}(\tau\mid a,X)-\theta_{j}(a)\}.
  \end{align*}
  The semiparametric efficiency bound of \(\theta_{j}(a)\) at \(P\in\mathcal{P}\) is \(\E_{P}\varphi_{j}^2(a)\).
\end{lemma}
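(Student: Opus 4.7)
The plan is to derive the efficient influence function by the standard semiparametric recipe: (i) decompose the tangent space $\mathcal{T}\subset L_{0}^{2}(P)$ of the restricted model $\mathcal{P}$ using the likelihood factorisation given above, (ii) compute the pathwise derivative of $\theta_{j}(a)$ along regular one-dimensional submodels, and (iii) verify that the candidate $\varphi_{j}(a)$ lies in $\mathcal{T}$ and reproduces the pathwise derivative, so that it is the canonical gradient and the efficiency bound is automatically $\E_{P}\varphi_{j}^{2}(a)$.

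First I would decompose $\mathcal{T}$ into orthogonal components associated with $P(X)$, the selection score $\pi$, the RCT propensity $e_{1}$, the censoring hazards, the target-population hazards $\d\Alpha_{1j}(\cdot\mid a,x)$ for $(a,j)\neq(0,1)$, the source-population cause 2 hazard $\d\Alpha_{02}(\cdot\mid x)$, and the shared cause 1 control hazard $\d\Alpha_{\dott 1}(\cdot\mid 0,x)$. The restriction~\eqref{eqn:restriction} is the key structural feature: for $x\in\X_{1}\cap\X_{0}$, a direction $h(t,x)$ perturbing $\d\Alpha_{\dott 1}(\cdot\mid 0,x)$ induces a single pooled score
\[\int_{0}^{\tau} h(t,X)\bigl\{\d N_{1}(t)-\I(\tilde T\geq t)\d\Alpha_{\dott 1}(t\mid 0,X)\bigr\}\]
receiving contributions from both RCT controls ($D=1,A=0$) and external controls ($D=0$); the two populations' scores for the cause 1 control hazard are tied together in $\mathcal{T}$ rather than free.

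Next, differentiating $\theta_{j}(a)=\E\{F_{1j}(\tau\mid a,X)\mid D=1\}$ along a submodel with score $S$, I would use the Duhamel/product-integral identity to express $\dot F_{1j}(\tau\mid a,x)$ as the sum of two line integrals against the perturbations of $\d\Alpha_{11}(\cdot\mid a,x)$ and $\d\Alpha_{12}(\cdot\mid a,x)$. Assumption~\ref{asn:discontinuity} lets me drop cross jump terms and collects the weights into exactly $W_{1j}$ and $W_{2j}$; the covariate piece $\frac{D}{\alpha}\{F_{1j}(\tau\mid a,X)-\theta_{j}(a)\}$ accounts for the score of $P(X\mid D=1)$. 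For the two martingale integrals I would verify component by component that $\E_{P}\{\varphi_{j}(a)S\}$ equals the corresponding piece of the pathwise derivative, using the moment identities $\E\{\I(\tilde T\geq t,A=a)\mid X,D=1\}=H_{1}(t\!-\!\mid a,X)$ and $\E\{\I(\tilde T\geq t)\mid X,D=0\}=(S_{0}S_{0}^{c})(t\!-\!\mid X)$.

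The main obstacle is the cause 1 weight when $a=0$: projecting the nonparametric gradient onto the one-dimensional subspace spanned by pooled perturbations of $\d\Alpha_{\dott 1}(\cdot\mid 0,x)$ yields an inverse-variance-weighted average whose denominator is $H_{\dott}(t\!-\!\mid X)=\pi(X)e_{1}(0\mid X)(S_{1}S_{1}^{c})(t\!-\!\mid 0,X)+\{1-\pi(X)\}(S_{0}S_{0}^{c})(t\!-\!\mid X)$, precisely the expected pooled at-risk mass from both populations that carries information about $\d\Alpha_{\dott 1}$; the factor $\pi(X)$ in the numerator arises because $\theta_{j}(a)$ conditions on $D=1$. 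For $a=1$ and for cause 2 in either arm no score sharing occurs, so the denominator reduces to the usual $H_{1}(t\!-\!\mid a,X)$. The full algebra without Assumption~\ref{asn:discontinuity} requires extra correction terms at shared jump points of $\Alpha_{11}$ and $\Alpha_{12}$, which are handled by Proposition~\ref{ppn:eif-discontinuity} in the supplementary material; under Assumption~\ref{asn:discontinuity} those terms vanish and the martingale integrals collapse to the clean cause-specific form stated in the lemma.
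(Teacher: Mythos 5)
Your plan is correct in outline and reaches the right object, but it takes a genuinely different route from the paper's. The paper first proves a general result (Proposition~\ref{ppn:eif-discontinuity}) \emph{without} Assumption~\ref{asn:discontinuity}: it perturbs the observed-data subdistributions \(Q_{dk}\), expresses the transportability restriction as an integral identity on scores (the conditional-expectation constraint \eqref{eqn:restriction-expectation}), guesses a candidate with population-specific weights \(w_1\neq w_0\), and then verifies both that it is a gradient and that it satisfies the score constraint; Lemma~\ref{lem:eif} is then obtained in a second step by showing the general expression coincides \(P\)-almost surely with the simplified \(H_{\dott}\)-weighted form, via predictable-variation calculations in which the cross-jump terms vanish under Assumption~\ref{asn:discontinuity}. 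You instead work directly under Assumption~\ref{asn:discontinuity} with a hazard-based parametrization of the likelihood, characterize the constrained tangent component as a single pooled martingale score \(\I(DA=0)\int h\,\d M_{\dott 1}\) common to RCT and external controls, and obtain the \(\pi(X)/H_{\dott}(t\!-\!\mid X)\) weight as an inverse-information projection, with the moment identities you cite confirming the gradient property. This is legitimate and arguably more transparent for the lemma as stated: it explains \emph{why} the pooled at-risk mass appears rather than verifying a guessed formula. Two caveats you should make explicit if you write it out: (i) the identification of hazard-perturbation scores with martingale integrals \(\int\tilde h\,\d M\), and the claim that the constraint means ``same integrand in both populations,'' both require absorbing the \(1/\{1-\triangle\Alpha_{11}(t\mid 0,x)-\triangle\Alpha_{12}(t\mid 0,x)\}\)-type jump corrections, and the equality of these corrections across populations uses \emph{both} parts of Assumption~\ref{asn:discontinuity} (including \(\triangle\Alpha_{01}\,\d\Alpha_{02}=0\) on \(\mathcal{X}_1\cap\mathcal{X}_0\)); (ii) verifying that your candidate lies in the constrained tangent space is exactly where the paper's proof does its nontrivial algebra, so it must be carried out, even though in your parametrization it reduces to observing that the same weight \(\pi(X)W_{1j}/\{\alpha H_{\dott}\}\) multiplies both populations' cause-1 martingales. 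What the paper's longer route buys, and yours does not, is the general-jump Proposition~\ref{ppn:eif-discontinuity} and the alternative influence functions of Remark~\ref{rem:if}, which are reused in the proof of Corollary~\ref{cor:variance-reduction}.
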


The efficient influence functions of the parameters \(\theta_{j}(1)\) are identical to those presented by Eq. (4) in
\citet{rytgaard2023estimation}, with the only difference being that they are restricted to the distribution on the RCT population.
Since the nuisance parameters in \(\varphi_{j}(1)(O)\) are all variationally independent of the cumulative hazards \({\Alpha}_{11}(t\mid 0,x)\) and \({\Alpha}_{01}(t\mid x)\), Assumption~\ref{asn:transportability-hazard} does not change the characterization of the efficient estimators of \(\theta_{j}(1)\).

On the other hand, comparing the efficient influence function \(\varphi_{1}(0)(O)\) with the influence function of \(\theta_{1}(0)\) without using the information of external controls, we notice that the inverse weight \(1/H_{\dott}(t\!-\!\mid x)\) is applied for efficient use of data.
We can write
\[
  H_{\dott}(t\mid x)= \Pr(T>t, C>t,A=0\mid X=x)=P(A=0\mid X=x)P(\tilde{T}> t\mid A=0,X=x),
\]
which is a product of the probability of receiving the control treatment and the survival function of an event of any type, including censoring, defined on the artificial population conjoining the whole external population and the subset of the target population under the control treatment.
It should be noted, however, that the function \(W_{11}(t\mid 0,x)\) is identifiable from the target population only.
Therefore, the predictable process in the event-of-interest martingale integral from \(\varphi_{1}(0)(O)\) is a combination of pooled and unpooled quantities across populations.

\begin{corollary}
  \label{cor:variance-reduction}
  Under the same conditions in Lemma~\ref{lem:eif}, the semiparametric efficiency bound of \(\theta_{1}(0)\) under \(\mathcal{P}\) is at least as low as that under the model where restriction \eqref{eqn:restriction} is removed.
  The reduction is
  \begin{multline*} \E\bigg[\frac{\pi(X)\{1-\pi(X)\}}{\alpha^{2}}\int_{0}^{\tau}\frac{(S_0S_0^c)(t\!-\!\mid X)}{H_1(t\!-\!\mid 0,X)H_{\dott}(t\!-\!\mid X)}\\
    W_{11}^2(t\mid 0,X)\{1-\triangle\Alpha_{11}(t\mid 0,X)\}\d\Alpha_{11}(t\mid 0,X)\bigg].
  \end{multline*}
\end{corollary}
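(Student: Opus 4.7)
The plan is to compare $\varphi_{1}(0)$ from Lemma~\ref{lem:eif} with the efficient influence function $\tilde\varphi_{1}(0)$ of $\theta_{1}(0)$ in the larger model $\tilde{\mathcal{P}}$ obtained by dropping restriction~\eqref{eqn:restriction}. Since $\theta_{1}(0)$ depends only on the target-population distribution, the external controls carry no information about it in $\tilde{\mathcal{P}}$, and standard competing-risks semiparametric theory (cf.\ the contrast with \citet{rytgaard2023estimation} noted after Lemma~\ref{lem:eif}) yields $\tilde\varphi_{1}(0)$ identical in form to $\varphi_{1}(0)$ except that the cause-1 martingale weight $\I(A=0)\pi(X)/H_{\dott}(t\!-\!\mid X)$ is replaced by $D\cdot\I(A=0)/H_{1}(t\!-\!\mid 0,X)$. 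Since $\tilde\varphi_{1}(0)$ remains an influence function in the submodel $\mathcal{P}$, optimality of the EIF immediately yields $\E_{P}\varphi_{1}^{2}(0)\le \E_{P}\tilde\varphi_{1}^{2}(0)$, which proves the first assertion.

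For the explicit reduction I will orthogonally decompose both EIFs in $L^{2}(P)$ into three pieces: the cause-1 martingale integral, the cause-2 martingale integral, and the regression residual $D\{F_{11}(\tau\mid 0,X)-\theta_{1}(0)\}/\alpha$. The residual is orthogonal to the integrals because they have conditional mean zero given $(X,D,A)$; the two integrals are mutually orthogonal because Assumption~\ref{asn:discontinuity} prevents $N_{1}$ and $N_{2}$ from jumping simultaneously, so their predictable covariation vanishes. The cause-2 integrals and the regression residuals coincide in $\varphi_{1}(0)$ and $\tilde\varphi_{1}(0)$, so the entire variance gap reduces to the difference of the two cause-1 components.

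Next I will compute each cause-1 variance via the predictable-variation identity $\E(\int f\,\d M)^{2}=\E\int f^{2}\,\d\langle M\rangle$ with $\d\langle M\rangle(t)=I(\tilde T\ge t)\{1-\triangle\Alpha\}\d\Alpha$. For $\tilde\varphi_{1}(0)$, conditioning on $X$ and using $\E\{D\I(A=0)I(\tilde T\ge t)\mid X\}=\pi(X)H_{1}(t\!-\!\mid 0,X)$ gives a single contribution equal to $\alpha^{-2}\E\int \pi(X)W_{11}^{2}(t\mid 0,X)\{1-\triangle\Alpha_{11}(t\mid 0,X)\}\d\Alpha_{11}(t\mid 0,X)/H_{1}(t\!-\!\mid 0,X)$. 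For $\varphi_{1}(0)$, I will split the cause-1 integrand along $\{D=1,A=0\}$ and $\{D=0\}$; since these strata are disjoint and each piece has conditional mean zero given $X$, the cross-term vanishes, and by the transportability identity $\d\Alpha_{01}(\cdot\mid x)=\d\Alpha_{\dott 1}(\cdot\mid 0,x)$ on $\mathcal{X}_{1}\cap\mathcal{X}_{0}$ each piece is a proper counting-process martingale integral in its own filtration. After summing the two squared-integrand expectations, the bracket $\pi(X)H_{1}(t\!-\!\mid 0,X)+\{1-\pi(X)\}(S_{0}S_{0}^{c})(t\!-\!\mid X)$ appears and collapses to $H_{\dott}(t\!-\!\mid X)$ by its definition, leaving $\alpha^{-2}\E\int \pi^{2}(X)W_{11}^{2}(t\mid 0,X)\{1-\triangle\Alpha_{11}(t\mid 0,X)\}\d\Alpha_{11}(t\mid 0,X)/H_{\dott}(t\!-\!\mid X)$.

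Subtracting the two variances and applying the identity $H_{\dott}(t\!-\!\mid X)-\pi(X)H_{1}(t\!-\!\mid 0,X)=\{1-\pi(X)\}(S_{0}S_{0}^{c})(t\!-\!\mid X)$ produces the displayed formula. The main obstacle is the middle step: verifying $L^{2}$-orthogonality of the two population pieces of the cause-1 integral in $\varphi_{1}(0)$ and arranging the conditional expectations so that the defining identity for $H_{\dott}$ is triggered; both rely on transportability, and the remainder is routine semiparametric algebra.
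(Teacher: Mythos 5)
Your proposal is correct, and the displayed formula comes out right, but your bookkeeping differs from the paper's. The paper identifies the unrestricted-model EIF \(\tilde\varphi_1(0)\) with the RCT-only (Rytgaard-type) influence function exactly as you do (justified via Remark~\ref{rem:if}), but then computes a single quantity, \(\E\{\varphi_1(0)-\tilde\varphi_1(0)\}^2\): since the cause-2 integral and the \(D\{F_{11}(\tau\mid 0,X)-\theta_1(0)\}/\alpha\) term cancel in the difference, this is one combined cause-1 martingale integral with integrand \(\{\pi(X)/H_{\dott}-D/H_1\}W_{11}\), whose predictable variation is evaluated and simplified with \(H_{\dott}=\pi H_1+\{1-\pi\}(S_0S_0^c)\). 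Equating that variance of the difference with the reduction implicitly uses the projection property (the difference of a gradient and the efficient influence function in \(\mathcal{P}\) is orthogonal to \(\dot{\mathcal{P}}\), hence to \(\varphi_1(0)\)), a step the paper leaves unstated. You instead compute the two bounds separately through a three-way orthogonal decomposition (cause-1 integral, cause-2 integral, covariate residual) of each EIF and subtract; this sidesteps the Pythagorean step entirely but requires the within-EIF orthogonality you correctly supply — conditional mean zero given \((A,X,D)\) for the residual, vanishing predictable covariation \(\langle \tilde M_{11},\tilde M_{12}\rangle\) under Assumption~\ref{asn:discontinuity}, and the trivial disjoint-support argument for the \(D=1\) versus \(D=0\) pieces, with transportability making the \(D=0\) piece a genuine martingale with variation \(\{1-\triangle\Alpha_{11}(t\mid 0,X)\}\d\Alpha_{11}(t\mid 0,X)\). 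The variance evaluations \(\E\{D\I(A=0)\I(\tilde T\geq t)\mid X\}=\pi H_1(t-\mid 0,X)\), \(\E\{\I(A=0)\I(\tilde T\geq t)\mid X\}=H_{\dott}(t-\mid X)\), and the final identity \(H_{\dott}-\pi H_1=\{1-\pi\}(S_0S_0^c)\) coincide with the paper's, so the two arguments trade one orthogonality fact for another and are of comparable length; yours is slightly more self-contained about why the computed quantity \emph{is} the reduction, while the paper's avoids any decomposition of the individual EIFs.
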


In words, incorporating the external controls helps drop the lowest possible variance attainable by a regular estimator of the target parameter \(\theta_{1}(0)\) under the transportability assumption, if two conditions are met.
First, there is an overlap in the distributions of the baseline covariates between the populations.
Second, in this overlapped population, there is a non-trivial time span in the observation period during which an individual is at risk of experiencing the event of interest.

Corollary~\ref{cor:variance-reduction} shows that the variance reduction is accumulated over time with respect to the cumulative hazard \(\Alpha_{11}(t\mid 0,x)\), and the time-varying factors that determine the size of variance reduction cannot be teased apart.
We give some intuition on when the use of external controls provides large precision gain.
The product integral of any \(\Alpha\in\mathcal{A}\) is denoted by
\((\Pi\Alpha)(t)=\Pi_{s\in(0,t]}\{1-\d\Alpha(t)\}\).
Note that
\begin{multline*}
  \frac{(S_0S_0^c)(t\!-\!\mid x)}{H_{\dott}(t\!-\!\mid x)}=\bigg\{\pi(x)e_1(0\mid x)\frac{(\Pi\Alpha_{12}S_1^c)(t\!-\!\mid 0,x)}{(\Pi\Alpha_{02}S_0^c)(t\!-\!\mid x)}+\{1-\pi(x)\}\bigg\}^{-1}\\
  \I\big\{(S_0S_0^c)(t\!-\!\mid x)>0\big\}.
\end{multline*}
All other factors being equal, the reduction is more pronounced when the ratio between the product of product integrals
\[
\frac{(\Pi\Alpha_{12}S_1^c)(t\!-\!\mid 0,x)}{(\Pi\Alpha_{02}S_0^c)(t\!-\!\mid x)}
\]
is smaller.
In the extreme scenario where the said ratio is simply \(0\), the variance reduction formula gives
\begin{multline*}
  \E\bigg[\frac{\pi(X)}{\alpha^{2}}\I\{\pi(X)<1\}\int_{0}^{\tau}\I\big\{(S_0S_0^c)(t\!-\!\mid X)>0\big\}\\
  \frac{W_{11}^2(t\mid 0,X)}{H_1(t\!-\!\mid 0,X)}\{1-\triangle\Alpha_{11}(t\mid 0,X)\}\d\Alpha_{11}(t\mid 0,X)\bigg].
\end{multline*}
Effectively, the maximum possible reduction is the portion of asymptotic variance resulting from the martingale \(N_1(t)-\int_0^tI(\tilde{T}\geq s)\d\Alpha_{11}(s\mid 0,X)\) on the region where the indicator \(\I\big\{\pi(X)<1,(S_0S_0^c)(t\!-\!\mid X)>0\big\}\) stays \(1\).
In practical terms, when the hazard of the competing risk is much higher for subjects under the control treatment or when censoring occurs much earlier in the target population, the variance reduction is larger.
Intuitively, it is most beneficial to incorporate the external controls on the ground of hazard transportability when the hazard of the event of interest cannot be estimated well from the target population alone otherwise, due to the lack of such events in the observed data.

\subsection{Estimation}
In the following, we discuss the construction of estimators that asymptotically achieve the semiparametric efficiency bounds in Lemma~\ref{lem:eif}.
We present results for the parameter \(\theta_1(0)\) only.
An estimator for \(\theta_2(0)\) and its properties can be derived analogously.
The estimators of the parameters \(\theta_1(1)\) and \(\theta_2(1)\) do not involve the external control sample, and thus the estimation strategy for these parameters follows directly from \citet{rytgaard2023estimation}.

Suppose for the nuisance parameters, we have estimators
\[
  \big\{\hat{\Alpha}_{\dott 1}(t\mid 0,x),\hat{\Alpha}_{12}(t\mid 0,x),\hat{\Alpha}_{02}(t\mid x),\hat{\Alpha}^c_{1}(t\mid 0,x),\hat{\Alpha}^c_{0}(t\mid x)\big\}\subset\mathcal{A}
\]
and that \(\hat{e}_1(0\mid x)\) and \(\hat\pi(x)\) are valid probabilities.
The cumulative incidence function of cause \(1\) in the RCT sample are estimated by
\[
  \hat{F}_{11}(t\mid 0,x)=\int_{0}^{t}\hat{S}_1(s\!-\!\mid 0,x)\d\hat{\Alpha}_{\dott 1}(s\mid 0,x),
\]
where the integral is in the Lebesgue–Stieltjes sense, and the conditional survival function of the all-cause event time \(T\) is estimated by
\[
\hat{S}_1(t\mid 0,x)=(\Pi\hat{\Alpha}_{\dott 1}\Pi\hat{\Alpha}_{12})(t\mid 0,x).
\]
The survival functions of the censoring time are the product integrals \(\hat{S}^{c}_1=\Pi\hat{\Alpha}_1^c\) and \(\hat{S}^{c}_0=\Pi\hat{\Alpha}_0^c\), respectively.
Observing the efficient influence function \(\varphi_1(0)\) given in Lemma~\ref{lem:eif}, we define the uncentered efficient influence function and its plug-in version as
\begin{align*}
  \ell_1(0)(O) &= \varphi_1(0)(O) + \frac{D}{\alpha}\theta_1(0),\\
  \hat{\ell}_1(0)(O) &= \frac{1-A}{\hat\alpha}\hat\pi(X)\int_{0}^{\tau}\frac{\hat{W}_{\dott 1}(t\mid 0,X)}{\hat{H}_{\dott}(t\!-\!\mid X)}\big\{\d N_{1}(t)-\I(\tilde{T}\geq t)\d\hat{\Alpha}_{\dott 1}(t\mid 0,X)\big\} \\
               &\hphantom{=}\quad +\frac{D(1-A)}{\hat\alpha}\int_{0}^{\tau}\frac{\hat{W}_{21}(t\mid 0,X)}{\hat{H}_1(t\mid 0,X)}\big\{\d N_{2}(t)-\I(\tilde{T}\geq t)\d\hat{\Alpha}_{12}(t\mid 0,X)\big\} \\
               &\hphantom{=}\quad + \frac{D}{\hat\alpha}\hat{F}_{11}(\tau\mid 0,X),
\end{align*}
where
\begin{align*}
\hat{S}_{0}(t\mid x)&=(\Pi\hat\Alpha_{\dott 1})(t\mid 0,x)(\Pi\hat\Alpha_{02})(t\mid x),\\
\hat{H}_{\dott}(t\mid x) &= \hat\pi(x)\hat{e}_1(0\mid x)(\hat{S}_1\hat{S}_1^c)(t\mid 0,x) + \{1-\hat\pi(x)\}(\hat{S}_0\hat{S}_0^c)(t\mid x),\\
\hat{H}_{1}(t\mid x) &= \hat{e}_1(0\mid x)(\hat{S}_1\hat{S}_1^c)(t\mid 0,x),\\
  \hat{W}_{\dott 1}(t\mid 0,x)&=\hat{S}_1(t\!-\!\mid 0,x)-\frac{\hat{F}_{11}(\tau\mid 0,x)-\hat{F}_{11}(t\mid 0,x)}{1-\triangle\hat{\Alpha}_{\dott 1}(t\mid 0,x)},\\
    \hat{W}_{21}(t\mid 0,x)&=-\frac{\hat{F}_{11}(\tau\mid 0,x)-\hat{F}_{11}(t\mid 0,x)}{1-\triangle\hat{\Alpha}_{12}(t\mid 0,x)}.
\end{align*}
We propose the estimator
\[
  \hat\theta_1(0) = \frac{1}{n}\sum_{i=1}^{n}\hat\ell_1(0)(O_i)
\]
of \(\theta_1(0)\).

\begin{assumption}[Probability limits]
\label{asn:plim}
    There exist probability limits \(0\leq \bar\pi(x)\leq 1\), \(0\leq \bar{e}_1(0\mid x)\leq 1\) such that \(\|(\hat\pi-\bar\pi)(X)\|_P=o_{P}(1)\), \(\|(\hat{e}_1-\bar{e}_1)(0\mid X)\|_P=o_P(1)\),
    and
    \[
      \big\{\bar{\Alpha}_{\dott 1}(t\mid 0,x),\bar{\Alpha}_{12}(t\mid 0,x),\bar{\Alpha}_{02}(t\mid x),\bar{\Alpha}_{1}^c(t\mid 0,x),\bar{\Alpha}_{0}^c(t\mid x)\big\}\subset\mathcal{A}
    \]
    such that
    \begin{align*}
      \bigg\|\I\{\pi(X)>0\}\sup_{t\in(0,\tau]}|\hat{\Alpha}_{\dott 1}-\bar{\Alpha}_{\dott 1}|(t\mid 0,X)\bigg\|_{P} &= o_{P}(1), \\
      \bigg\|\I\{\pi(X)>0\}\sup_{t\in(0,\tau]}|\hat{\Alpha}_{12}-\bar{\Alpha}_{12}|(t\mid 0,X)\bigg\|_{P} &= o_{P}(1), \\
      \bigg\|\I\{0<\pi(X)<1\}\sup_{t\in(0,\tau]}|\hat{\Alpha}_{02}-\bar{\Alpha}_{02}|(t\mid X)\bigg\|_{P} &= o_{P}(1), \\
      \bigg\|\I\{\pi(X)>0\}\sup_{t\in(0,\tau]}|\hat{\Alpha}_{1}^c-\bar{\Alpha}_{1}^c|(t\mid 0,X)\bigg\|_{P} &= o_{P}(1), \\
      \bigg\|\I\{0<\pi(X)<1\}\sup_{t\in(0,\tau]}|\hat{\Alpha}_{0}^c-\bar{\Alpha}_{0}^c|(t\mid X)\bigg\|_{P} &= o_{P}(1).
    \end{align*}
\end{assumption}

\begin{theorem}[Asymptotic behavior]
  \label{thm:asymptotic}
  Suppose Assumptions~\ref{asn:censoring} and \ref{asn:plim} as well as Assumption~\ref{asn:regularity} in the Supplementary Material hold.
  Then \(\hat\theta_1(0)\overset{\mathrm{p}}{\to}\theta_1(0)\) if
  \begin{enumerate}[nosep,label=(\roman*)]
  \item \(\bar{\Alpha}_{\dott 1} = \Alpha_{\dott 1}\) and \(\bar{\Alpha}_{12} = \Alpha_{12}\);
  \item \(\bar{\Alpha}_{\dott 1} = \Alpha_{\dott 1}\), \(\bar{e}_1=e_1\), and \(\bar\pi=\pi\); or
  \item \(\bar{\Alpha}_{12} = \Alpha_{12}\), \(\bar{\Alpha}_{02} = \Alpha_{02}\), \(\bar{\Alpha}^c_{1} = \Alpha^c_{1}\), \(\bar{\Alpha}^c_{0} = \Alpha^c_{0}\), \(\bar{e}_1=e_1\), and \(\bar\pi=\pi\).
  \end{enumerate}
  Moreover,
  \[
    \hat\theta_1(0)-\theta_1(0)=\frac{1}{n}\sum_{i=1}^{n}\varphi_1(0)(O_i)+o_{P}(n^{-1/2})
  \]
  if \(\bar{\Alpha}_{\dott 1} = \Alpha_{\dott 1}\), \(\bar{\Alpha}_{12} = \Alpha_{12}\), \(\bar{\Alpha}_{02} = \Alpha_{02}\), \(\bar{\Alpha}^c_{1} = \Alpha^c_{1}\), \(\bar{\Alpha}^c_{0} = \Alpha^c_{0}\), \(\bar{e}_1=e_1\), \(\bar\pi=\pi\), and Assumption~\ref{asn:rate} in the Supplementary Material is satisfied.
\end{theorem}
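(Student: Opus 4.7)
The plan is to begin from the elementary decomposition
\[
\hat\theta_1(0) - \theta_1(0) = (P_n - P)\hat\ell_1(0) + \{P\hat\ell_1(0) - \theta_1(0)\},
\]
and control the empirical process term and the bias term separately. Since $P\ell_1(0) = \theta_1(0)$, the bias equals $P(\hat\ell_1(0) - \ell_1(0))$. The empirical process piece will be handled by Assumption~\ref{asn:regularity} in the Supplementary Material, which I would read as a Donsker-type condition: it ensures $\hat\ell_1(0)$ eventually lies in a $P$-Donsker class, so that $(P_n - P)\hat\ell_1(0) = (P_n - P)\bar\ell_1(0) + o_P(n^{-1/2})$, where $\bar\ell_1(0)$ is the uncentered efficient influence function with the probability limits of Assumption~\ref{asn:plim} plugged in.

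The core of the argument is a second-order expansion of the bias. Inside $\hat\ell_1(0)$, I would add and subtract the true cause-specific hazards so that each martingale increment splits into a true martingale increment (whose conditional expectation under $P$ is zero) plus a difference $(\hat\Alpha - \Alpha)$. I would then repeatedly expand the plug-in survival and censoring product integrals using the Duhamel identity, and similarly expand $\hat F_{11}(\tau\mid 0,X)$. After collecting terms, the bias becomes a sum of $P$-integrals of products of nuisance errors in which each product contains at least two distinct factors drawn from $\{\hat\Alpha_{\dott 1}-\Alpha_{\dott 1},\ \hat\Alpha_{12}-\Alpha_{12},\ \hat\Alpha_{02}-\Alpha_{02},\ \hat\Alpha^c_1-\Alpha^c_1,\ \hat\Alpha^c_0-\Alpha^c_0,\ \hat e_1 - e_1,\ \hat\pi - \pi\}$. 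The expansion must be organised so that every product contains a factor from the first list $\{\hat\Alpha_{\dott 1}-\Alpha_{\dott 1},\hat\Alpha_{12}-\Alpha_{12}\}$ (matching scenario (i)), or from the second list $\{\hat\Alpha_{\dott 1}-\Alpha_{\dott 1},\hat e_1-e_1,\hat\pi-\pi\}$ (matching scenario (ii)), or from the third list (matching scenario (iii)). Assumption~\ref{asn:discontinuity} enters precisely here to keep $1/(1-\triangle\hat\Alpha_{1k})$ in $\hat W_{k1}$ bounded and to eliminate cross-jump terms between causes 1 and 2 that would otherwise obstruct the clean product structure.

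With the expansion in hand, the remaining work is mechanical. For consistency under any of (i)--(iii), every product in the bias expansion has at least one factor that vanishes in probability by Assumption~\ref{asn:plim}; together with the $o_P(1)$ bound on $(P_n - P)\hat\ell_1(0)$ from the empirical process step this yields $\hat\theta_1(0) \overset{\mathrm{p}}{\to} \theta_1(0)$. For the asymptotic linear expansion, apply Cauchy--Schwarz to each product term; the product rate conditions in Assumption~\ref{asn:rate} then force $P(\hat\ell_1(0)-\ell_1(0)) = o_P(n^{-1/2})$. Combined with $(P_n - P)\hat\ell_1(0) = (P_n - P)\ell_1(0) + o_P(n^{-1/2})$ and the identity $\ell_1(0) = \varphi_1(0) + (D/\alpha)\theta_1(0)$ (where the $(D/\alpha)\theta_1(0)$ piece is reassembled into the mean-zero $(D/\alpha)\{F_{11}(\tau\mid 0,X)-\theta_1(0)\}$ component of $\varphi_1(0)$), this produces the advertised linear representation. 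The main obstacle is the bias expansion itself: with two nested martingale integrals, product-integral survival functions, and three robustness patterns to align simultaneously, the bookkeeping is intricate, and Assumption~\ref{asn:discontinuity} must be invoked at exactly the points where cause-1 and cause-2 jumps could otherwise couple.
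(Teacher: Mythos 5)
Your skeleton matches the paper's at a high level: split \(\hat\theta_1(0)-\theta_1(0)\) into an empirical-process term and a bias term, handle the former via the Donsker condition, and expand the bias via Duhamel-type identities so that Assumption~\ref{asn:rate} delivers the \(o_P(n^{-1/2})\) rate when all nuisances are correct. But the decisive step is missing. The paper's proof rests on an exact identity (Lemma~\ref{lem:diff}): after computing the compensators of the two martingale integrals and expanding \(\check{F}_{11}(\tau\mid 0,X)-F_{11}(\tau\mid 0,X)\) with the Duhamel equation, the three pieces of \(\check\ell_1(0)\) cancel so that the bias \(P\{\check\ell_1(0)-(\alpha/\check\alpha)\ell_1(0)\}\) collapses to exactly two terms, each a product of a factor of the form \(\check\pi b_1/\check b_1-\pi S_{12}/\check S_{12}\) (resp.\ \(b_2/\check b_2-S_{\dott 1}/\check S_{\dott 1}\)) and a factor \(\d(1-S_{\dott 1}/\check S_{\dott 1})\) (resp.\ \(\d(1-S_{12}/\check S_{12})\)). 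The robustness scenarios (i)--(iii) are read off from which factor vanishes in each of the two terms, and Assumption~\ref{asn:rate} is literally the statement that these two exact remainders are \(o_P(n^{-1/2})\). Your sketch replaces this with a generic second-order expansion ``organised so that every product contains a factor from the first list, or the second, or the third,'' which is not the right logical structure --- for consistency under each scenario separately, \emph{every} term of the bias must contain a factor killed by \emph{that} scenario --- and it defers the crucial cancellation (between the plug-in \(\hat F_{11}\) term and the two martingale integrals) to ``intricate bookkeeping.'' Your plan to finish with Cauchy--Schwarz also does not fit: as Remark~\ref{rem:rate} explains, because hazard estimators have jumps the remainders cannot in general be bounded by Cauchy--Schwarz products, which is exactly why Assumption~\ref{asn:rate} is phrased as smallness of two means of stochastic integrals rather than as product rate conditions.

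The empirical-process step is also incomplete as stated. Donsker membership of \(\hat\ell_1(0)\) does not by itself yield \((\mathbb{P}_n-P)\{\hat\ell_1(0)-\ell_1(0)\}=o_P(n^{-1/2})\); one additionally needs \(P\{\hat\ell_1(0)-\bar\ell_1(0)\}^2\overset{\mathrm{p}}{\to}0\), which the paper establishes in Lemma~\ref{lem:l2-convergence} through a ten-term bound converting the sup-norm hazard convergences of Assumption~\ref{asn:plim}, together with the boundedness and no-shared-jump conditions of Assumption~\ref{asn:regularity} (not Assumption~\ref{asn:discontinuity}, which you cite and which concerns the true hazards), into \(L_2(P)\) convergence of the full plug-in influence function with its ratios, product integrals and counting-process integrals. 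This lemma, plus the explicit treatment of \(\hat\alpha\) versus \(\alpha\) (the paper carries terms like \((\hat\alpha-\alpha)\theta_1(0)/\hat\alpha\) through both the consistency and linearity decompositions), is a substantive part of the argument that your proposal passes over, so as written the proposal is a plausible plan rather than a proof.
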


The first part of Theorem~\ref{thm:asymptotic} shows that the estimator \(\hat\theta_1(0)\) constructed from the efficient influence function is triply robust against model misspecification.
The consistency of \(\hat\theta_1(0)\) hinges on correct estimation of at least one of the cause-specific hazards, namely \(\Alpha_{\dott 1}(t\mid 0,x)\) or \(\Alpha_{12}(t\mid 0,x)\).
In particular, if the cause 1 hazard does not converge to the underlying hazard, the cause 2 hazards in both populations need to be modeled correctly.
Conditions for the asymptotic linearity of \(\hat\theta_1(0)\) are given in the second part of Theorem~\ref{thm:asymptotic}.
Apart from requiring the consistency of all nuisance models in their respective sense, the von Mises expansion of \(\hat\theta_1(0)\) around the true parameter \(\theta_1(0)\) demands that the two remainder terms in Assumption~\ref{asn:rate} converge as fast as \(o_{P}(n^{-1/2})\); see Remark~\ref{rem:rate} for details.

The estimator \(\hat\theta_1(0)\) attains the semiparametric efficiency bound in the model where the conditional cause 1 hazard under the control treatment is transportable.
However, there is no free lunch.
Compared to estimators that do not rely on the external controls, the proposed data fusion estimator involve additional nuisance models for the selection score \(\pi\) and the cumulative hazards \(\Alpha_{02}\) and \(\Alpha_0^c\).
If these models are not correctly estimated, we have no guarantee that \(\hat\theta_1(0)\) will be more efficient than estimators based solely on the RCT sample.

A final remark should be made in connection with Assumption~\ref{asn:discontinuity}.
When tied event times are observed for event types \(1\) and \(2\), the plug-in estimators based on the efficient influence function under Assumption~\ref{asn:discontinuity} might be unfounded.
We can avoid this issue if the event times are continuous by nature, and it is harmless to break the ties by numerical perturbations.
Otherwise, we can turn to fully discrete-time methods \citep{benkeser2018improved} or derive estimators based on Proposition~\ref{ppn:eif-discontinuity} to handle mixed event time distributions.

\subsection{Restricted mean time lost}
   
Another interpretable parameter in competing risks analysis is the \(\tau\)-restricted mean time lost to cause \(j\) \citep{andersen2013decomposition}, defined as
\[
  \gamma_{j}(a)=\E(\I\{J(a)=j\}[\tau-\{T(a)\wedge \tau\}]\mid D=1).
\]
We can extend the definition of the parameter \(\theta_{j}(a)\) to the population cumulative incidence of event type \(j\) under intervention \(a\) at time \(t\), which is \(\theta_j(a,t)=\Pr\{T(a)\leq t,J(a)=j\mid D=1\}\).
Given Assumptions~\ref{asn:causal}--\ref{asn:censoring}, the parameter \(\gamma_{j}(a)\) is identifiable as \(\gamma_{j}(a)= \int_{0}^{\tau}\theta_{j}(a,t)\d t\), where \(\theta_j(a,t)\) is treated as an observed data parameter.
If we view \(\theta_j(a,t)\) as a function of time, then \(\gamma_j(a)\) is simply the area under the cumulative incidence function capped at \(\tau\).

Restricted mean times lost are Hadamard differentiable functionals of the cumulative incidence functions.
Hence, their efficient influence functions can be obtained from Lemma~\ref{lem:eif} by the chain rule.

\begin{corollary}
  \label{cor:eif-rmtl}
  Under the same conditions as in Lemma~\ref{lem:eif}, the efficient influence function of \(\gamma_{j}(a)\) at \(P\in\mathcal{P}\) is \(\psi_{j}(a)(O)=\int_{0}^{\tau}\varphi_{j}(a,t)(O)\d t\), where \(\varphi_{j}(a,t)\) is the efficient influence function of \(\theta_j(a,t)\).
\end{corollary}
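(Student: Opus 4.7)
The plan is to apply the functional chain rule for pathwise differentiable parameters, treating $\gamma_j(a)$ as the composition of the cumulative incidence process $t\mapsto\theta_j(a,t)(P)$ with the integration functional $\phi: F\mapsto\int_0^\tau F(t)\d t$. First, I would extend the derivation of Lemma~\ref{lem:eif} to obtain, for every fixed $t\in(0,\tau]$, the efficient influence function $\varphi_j(a,t)(O)$ of $\theta_j(a,t)$ at $P\in\mathcal{P}$; the derivation is identical to that for $\varphi_j(a)$ up to replacing the terminal time $\tau$ inside the quantities $W_{kj}$ and $F_{1j}(\tau\mid a,x)$ by the running time $t$. Under the positivity conditions in Assumption~\ref{asn:censoring}, the inverse weights $1/H_{\dott}$ and $1/H_1$ as well as the functions $W_{kj}$ remain uniformly bounded in $t\in(0,\tau]$, so $t\mapsto\|\varphi_j(a,t)\|_{L^2(P)}$ is bounded on $(0,\tau]$ and $\psi_j(a)(O)=\int_0^\tau\varphi_j(a,t)(O)\d t$ is a well-defined element of $L^2(P)$ by Minkowski's integral inequality.

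Second, I would show that $\psi_j(a)$ is a gradient of $\gamma_j(a)$. The integration functional $\phi$ is continuous and linear on bounded c\`{a}dl\`{a}g functions on $[0,\tau]$ equipped with the supremum norm, hence Hadamard differentiable with derivative equal to itself. For any regular parametric submodel $\{P_\epsilon:\epsilon\in\mathbb{R}\}$ with score $s\in L^2(P)$, I differentiate $\gamma_j(a)(P_\epsilon)=\int_0^\tau\theta_j(a,t)(P_\epsilon)\d t$ under the integral sign, which is justified by dominated convergence using the uniform bound above to control the submodel derivatives, and apply Fubini's theorem to obtain
\begin{align*}
  \frac{\d}{\d\epsilon}\gamma_j(a)(P_\epsilon)\bigg|_{\epsilon=0}
  &=\int_0^\tau\E_P\{\varphi_j(a,t)(O)s(O)\}\d t \\
  &=\E_P\{\psi_j(a)(O)s(O)\},
\end{align*}
so that $\psi_j(a)$ is an influence function of $\gamma_j(a)$.

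To conclude that $\psi_j(a)$ is the efficient (minimum variance) influence function, I would verify that it lies in the tangent space $\mathcal{T}(P)\subseteq L^2(P)$ of $\mathcal{P}$ at $P$. Each $\varphi_j(a,t)$ belongs to $\mathcal{T}(P)$ since it is already an EIF, and $\mathcal{T}(P)$ is a closed linear subspace of $L^2(P)$; approximating the Bochner integral by Riemann sums $\sum_k(t_k-t_{k-1})\varphi_j(a,t_k)\in\mathcal{T}(P)$ and passing to the $L^2(P)$ limit places $\psi_j(a)$ in $\mathcal{T}(P)$ by closedness. The main technical obstacle is the uniform-in-$t$ control of $\varphi_j(a,t)$ and of the pathwise derivatives $\d\theta_j(a,t)(P_\epsilon)/\d\epsilon$ needed to justify the two interchange steps; both are secured by the strict positivity in Assumption~\ref{asn:censoring}, which keeps the denominators in the efficient influence function bounded away from zero uniformly on $(0,\tau]$.
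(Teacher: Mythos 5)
Your proposal is correct and follows essentially the same route as the paper, which justifies the corollary in one line by Hadamard differentiability of the map from the cumulative incidence process to the restricted mean time lost together with the chain rule. Your additional steps—verifying the gradient property via differentiation under the integral and Fubini, and placing $\psi_j(a)$ in the tangent space by approximating the time integral with Riemann sums and using closedness—are exactly the details the paper leaves implicit, so there is no substantive difference in approach.
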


Fusion estimators for \(\gamma_{j}(a)\) are straightforward integrals of the fusion estimators \(\hat\theta_{j}(a,t)\) for \(\theta_j(a,t)\) over time \(t\).
The asymptotics of these estimators can be established under conditions similar to those inside Theorem~\ref{thm:asymptotic}.
Particularly for asymptotic linearity of \(\hat{\gamma}_1(0)\), the rate conditions in Assumption~\ref{asn:rate} should be modified according to Remark~\ref{rem:rate-rmtl} in the Supplementary Material.

\section{Simulation study}

We investigated the performance of the fusion estimators compared to the RCT-only estimators in a simulation study.
The data at baseline \((X,D,A)\) were generated sequentially in the following manner:
\begin{align*}
  X &\sim 2\Phi[\mathrm{Normal}\{(0,0,0)^\T,\Sigma\}]-1,\\
  D\mid X &\sim\mathrm{Bernoulli}\{\pi(X)\}, \\
  A\mid (X,D) &\sim \mathrm{Bernoulli}(0.5D),
\end{align*}
where \(\pi(X)=\mathrm{expit}(-0.2+0.4X_1+0.2X_2+0.3X_3)\), \(\Phi\) is the distribution function of the standard normal distribution, and the covariance matrix is
\[
  \Sigma=
  \begin{pmatrix}
    1 & 0.25 & 0.25 \\
    0.25 & 1 & 0.25 \\
    0.25 & 0.25 & 1
  \end{pmatrix}.
\]
The uncensored event times were simulated from distributions with the following multiplicative hazards:
\begin{align*}
  \alpha_{11}(t\mid A,X) &= \alpha_{1}(t)\exp(0.5A + 0.2 X_{1}+ 0.7 X_{3}), \\
  \alpha_{12}(t\mid A,X) &= \alpha_{2}(t)\exp(1 + 0.05A + 0.8 X_{1}+0.5 X_{2} ), \\
  \alpha_{01}(t\mid X) &= \alpha_{1}(t)\exp(0.2 X_{1} + 0.7 X_{3}), \\
  \alpha_{02}(t\mid X) &= \alpha_{2}(t)\exp(0.5 X_{1} + 0.8 X_{2} - 0.3X_3),
\end{align*}
where the baseline hazards \(\alpha_{1}(t)\) and \(\alpha_{2}(t)\) both correspond to the hazard of the Weibull distribution with shape parameter \(0.7\) and scale parameter \(0.2\).
In other words, \(\alpha_{d}(t)=0.2 \cdot 0.7 t^{0.7-1}\).
The censoring times were simulated from distributions with the following multiplicative hazards:
\begin{align*}
  \alpha^{c}_1(t\mid A,X) &= \alpha^{c}(t)\exp\{0.5+0.05(1-A)X_1-0.05X_{3}\}, \\
  \alpha^{c}_0(t\mid X) &= \alpha^{c}(t)\exp(0.05X_{2}),
\end{align*}
where the baseline hazard \(\alpha^{c}(t)\) is the hazard of the Weibull distribution with shape parameter \(0.7\) and scale parameter \(0.24\).
Under this data generating mechanism, the proportion of samples from the external control population was around \(55\%\).

The target population selection score \(\hat{\pi}(x)\) and the propensity score of treatment in the target population \(\hat{e}_1(a\mid x)\) were estimated using logistic regressions.
The cause 1 hazard under the control treatment \(\d\hat{\Alpha}_{\dott 1}(t\mid 0,x)\) was fitted with a Cox model combining all samples under the control treatment and the event indicator \(\I(\tilde{J}=1)\).
The cause 2 hazards under the control treatment \(\d\hat{\Alpha}_{d2}(t\mid 0,x)\) were fitted with a Cox model within the respective population using the event indicator \(\I(\tilde{J}=2)\).
The two cause-specific hazards under the active treatment \(\d\hat{\Alpha}_{1j}(t\mid 1,x)\) were obtained with a multi-state Cox model in the RCT population using the state indicator \(\tilde{J}\).
The hazards of the censoring were fitted with a Cox model for each treatment within the respective population using the event indicator \(\I(\tilde{J}=0)\).
The nuisance function estimates \(\hat{S}_d,\hat{F}_{dj},\hat{S}^{c}_d\) were subsequently computed using the hazard estimates.

As an example, we present the nuisance estimators required for the estimator \(\hat{\theta}_{1}(0)\), which included \(\hat{S}_1(t\mid 0,x)\), \(\hat{F}_{11}(t\mid 0,x)\), \(\hat{S}^{c}_1(t\mid 0,x)\), \(\hat{S}_0(t\mid x)\), and \(\hat{S}^{c}_0(t\mid x)\).
The cumulative hazard estimates from Cox models are c\`{a}dl\`{a}g step functions.
The approximation \(\triangle\hat{\Alpha}_{\dott 1}(s\mid 0,x)\approx 1-\exp\{-\triangle\hat{\Alpha}_{\dott 1}(s\mid 0,x)\}\) was applied due to possible jumps whose sizes exceed one, ensuring that it fell between \(0\) and \(1\).
The survival function of the composite event in the RCT population was approximated by \(\hat{S}_1(t\mid 0,x)=\exp\bigl\{-\hat{\Alpha}_{\dott 1}(t\mid 0,x)-\hat{\Alpha}_{12}(t\mid 0,x)\bigr\}\).
The cumulative incidence function of the event of interest was computed using the Lebesgue-Stieltjes integral \(\hat{F}_{11}(t\mid 0,x)=\int_{0}^{t}\hat{S}_1(s\!-\!\mid 0,x)\d\hat{\Alpha}_{\dott 1}(s\mid 0,x)\).
Similarly, the survival function of the composite event in the external control population was \(\hat{S}_0(t\mid x)=\exp\bigl\{-\hat{\Alpha}_{\dott 1}(t\mid 0,x)-\hat{\Alpha}_{02}(t\mid x)\bigr\}\).
The Cox-estimated cumulative hazard \(\hat{\Alpha}_{\dott 1}(t\mid 0,x)\) did not share any discontinuity points with \(\hat{\Alpha}_{02}(t\mid x)\), since there were no ties among event times of different types.
The survival functions of the censoring time are \(\hat{S}^{c}_1(t\mid 0,x)=\exp\bigl\{-\hat{\Alpha}^{c}_1(t\mid 0,x)\bigr\}\) and \(\hat{S}^{c}_0(t\mid x)=\exp\bigl\{-\hat{\Alpha}^{c}_0(t\mid x)\bigr\}\), respectively.


We simulated data of sample size \(n\in\{750,1500\}\) from the described data generating mechanism.
The proposed estimator \(\hat{\theta}_{1}(0,t)\) for the cumulative incidence of the event of interest under control treatment was computed for three time points \(t\in\{0.25, 1, 2\}\).
The standard error of \(\hat{\theta}_{1}(0,t)\) was estimated by \(n^{-1/2}\) times the empirical \(L_2\)-norm of the efficient influence function \(\varphi_1(0)\).
The estimators \(\hat{\theta}_2(0,t)\), \(\hat{\theta}_1(1,t)\), and \(\hat{\theta}_2(1,t)\) for other cumulative incidences and the estimators \(\hat\theta_1\{t\}\) and \(\hat\theta_2\{t\}\) for the average treatment effects were also computed using the respective efficient influence functions.
To demonstrate the gain in precision, we compared the estimated asymptotic variance of the estimators above to the estimators that would be efficient if only the RCT data was available.
The exact expressions of all other estimators can be found in Supplementary Materials \S\ref{sec:sim-app}.
As an alternative effect measure, we also considered the treatment effect as the difference between restricted mean times lost capped at \(t\in\{0.25, 1, 2\}\).
The calculations were repeated \(1000\) times for each sample size.

Summary statistics of selected estimators from the simulation study are reported in Tables~\ref{tab:sim-cif}--\ref{tab:sim-rmtl}.
Results for the remaining estimators are deferred to Tables~\ref{tab:sim-cif-app-1}--\ref{tab:sim-rmtl-app-2} in the Supplementary Material.
All estimators have small empirical bias.
The averages of the plug-in standard errors align with the empirical root mean squared errors.
The coverage of the \(95\%\)-confidence intervals constructed from these plug-in standard errors appears largely correct.
The percentage reduction in variance is higher for the control parameters \(\theta_1(0,t)\) and \(\gamma_1(0,t)\) than it is for the treatment effects \(\theta_1\{t\}\) and \(\gamma_1\{t\}\).
This should be expected since the parameters \(\theta_1(1,t)\) and \(\gamma_1(1,t)\), and thus their estimators, do not use information from external controls.

\begin{table}
  \caption{Simulation results for estimators of cumulative incidences.}
  \label{tab:sim-cif}
  \footnotesize
  \centering

\begin{tabular}{rlrlrrrrrr}
{\(n\)} & {Estimand} & {\(t\)} & {Type} & {Mean} & {Bias} & {RMSE} & {SE} & {Coverage} & {Reduction}\\
\(750\) & \(\theta_1(0,t)\) & \(0.25\) & \(+\) & \(0.07\) & \(5.73\) & \(1.17\) & \(1.15\) & \(94.5\) & \(66.84\)\\
 &  &  & \(-\) & \(0.07\) & \(8.20\) & \(2.15\) & \(2.05\) & \(92.7\) & \(\cdot\)\\
 &  & \(1\) & \(+\) & \(0.14\) & \(3.66\) & \(1.60\) & \(1.60\) & \(95.1\) & \(69.42\)\\
 &  &  & \(-\) & \(0.14\) & \(5.50\) & \(2.99\) & \(2.92\) & \(92.3\) & \(\cdot\)\\
 &  & \(2\) & \(+\) & \(0.19\) & \(2.03\) & \(1.82\) & \(1.84\) & \(94.8\) & \(69.20\)\\
 &  &  & \(-\) & \(0.19\) & \(-4.92\) & \(3.46\) & \(3.35\) & \(93.5\) & \(\cdot\)\\
 & \(\theta_1(t)\) & \(0.25\) & \(+\) & \(0.04\) & \(-3.03\) & \(2.70\) & \(2.76\) & \(95.1\) & \(27.35\)\\
 &  &  & \(-\) & \(0.04\) & \(-5.50\) & \(3.17\) & \(3.25\) & \(96.3\) & \(\cdot\)\\
 &  & \(1\) & \(+\) & \(0.08\) & \(8.53\) & \(3.68\) & \(3.77\) & \(95.6\) & \(29.58\)\\
 &  &  & \(-\) & \(0.08\) & \(6.69\) & \(4.50\) & \(4.50\) & \(95.1\) & \(\cdot\)\\
 &  & \(2\) & \(+\) & \(0.09\) & \(14.66\) & \(4.07\) & \(4.23\) & \(96.2\) & \(30.43\)\\
 &  &  & \(-\) & \(0.09\) & \(21.61\) & \(5.04\) & \(5.08\) & \(94.7\) & \(\cdot\)\\
\(1500\) & \(\theta_1(0,t)\) & \(0.25\) & \(+\) & \(0.07\) & \(1.79\) & \(0.81\) & \(0.81\) & \(94.3\) & \(68.31\)\\
 &  &  & \(-\) & \(0.07\) & \(3.51\) & \(1.42\) & \(1.46\) & \(94.9\) & \(\cdot\)\\
 &  & \(1\) & \(+\) & \(0.14\) & \(3.60\) & \(1.09\) & \(1.13\) & \(95.5\) & \(70.17\)\\
 &  &  & \(-\) & \(0.14\) & \(-0.17\) & \(2.07\) & \(2.07\) & \(94.2\) & \(\cdot\)\\
 &  & \(2\) & \(+\) & \(0.19\) & \(2.23\) & \(1.25\) & \(1.30\) & \(96.0\) & \(70.13\)\\
 &  &  & \(-\) & \(0.19\) & \(-5.99\) & \(2.38\) & \(2.39\) & \(94.8\) & \(\cdot\)\\
 & \(\theta_1(t)\) & \(0.25\) & \(+\) & \(0.04\) & \(-3.05\) & \(2.00\) & \(1.95\) & \(94.1\) & \(27.73\)\\
 &  &  & \(-\) & \(0.04\) & \(-4.77\) & \(2.28\) & \(2.30\) & \(95.2\) & \(\cdot\)\\
 &  & \(1\) & \(+\) & \(0.07\) & \(-6.90\) & \(2.76\) & \(2.66\) & \(93.6\) & \(29.93\)\\
 &  &  & \(-\) & \(0.07\) & \(-3.13\) & \(3.24\) & \(3.18\) & \(94.0\) & \(\cdot\)\\
 &  & \(2\) & \(+\) & \(0.09\) & \(-12.92\) & \(3.06\) & \(2.99\) & \(94.2\) & \(30.98\)\\
 &  &  & \(-\) & \(0.09\) & \(-4.70\) & \(3.59\) & \(3.60\) & \(95.1\) & \(\cdot\)\\
\end{tabular}

  \medskip
  \begin{flushleft}
    Type: fusion estimator (\(+\)) or RCT-only estimator (\(-\)); Mean: average of estimates; Bias: Monte-Carlo bias, \(10^{-4}\); RMSE: root mean squared error, \(10^{-2}\); SE: average of standard error estimates, \(10^{-2}\); Coverage: \(95\%\) confidence interval coverage, \(\%\); Reduction: average of percentage reduction in squared standard error estimates, \(\%\).
  \end{flushleft}
\end{table}

\begin{table}
  \caption{Simulation results for estimators of restricted mean times lost.}
  \label{tab:sim-rmtl}
  \footnotesize
  \centering

\begin{tabular}{rlrlrrrrrr}
{\(n\)} & {Estimand} & {\(t\)} & {Type} & {Mean} & {Bias} & {RMSE} & {SE} & {Coverage} & {Reduction}\\
\(750\) & \(\gamma_1(0,t)\) & \(0.25\) & \(+\) & \(0.01\) & \(-0.20\) & \(0.20\) & \(0.20\) & \(94.6\) & \(64.49\)\\
 &  &  & \(-\) & \(0.01\) & \(0.66\) & \(0.36\) & \(0.35\) & \(92.1\) & \(\cdot\)\\
 &  & \(1\) & \(+\) & \(0.10\) & \(-4.03\) & \(1.17\) & \(1.16\) & \(94.6\) & \(67.61\)\\
 &  &  & \(-\) & \(0.10\) & \(0.12\) & \(2.12\) & \(2.07\) & \(93.0\) & \(\cdot\)\\
 &  & \(2\) & \(+\) & \(0.26\) & \(-18.36\) & \(2.78\) & \(2.74\) & \(93.4\) & \(68.04\)\\
 &  &  & \(-\) & \(0.26\) & \(-19.14\) & \(5.06\) & \(4.89\) & \(92.4\) & \(\cdot\)\\
 & \(\gamma_1(t)\) & \(0.25\) & \(+\) & \(0.01\) & \(-0.36\) & \(0.46\) & \(0.47\) & \(94.9\) & \(26.45\)\\
 &  &  & \(-\) & \(0.01\) & \(-1.21\) & \(0.55\) & \(0.56\) & \(95.9\) & \(\cdot\)\\
 &  & \(1\) & \(+\) & \(0.05\) & \(-0.03\) & \(2.62\) & \(2.73\) & \(96.0\) & \(28.21\)\\
 &  &  & \(-\) & \(0.05\) & \(-4.18\) & \(3.14\) & \(3.22\) & \(94.3\) & \(\cdot\)\\
 &  & \(2\) & \(+\) & \(0.14\) & \(2.81\) & \(6.04\) & \(6.33\) & \(96.5\) & \(29.05\)\\
 &  &  & \(-\) & \(0.14\) & \(3.58\) & \(7.37\) & \(7.52\) & \(95.3\) & \(\cdot\)\\
\(1500\) & \(\gamma_1(0,t)\) & \(0.25\) & \(+\) & \(0.01\) & \(-0.08\) & \(0.14\) & \(0.14\) & \(94.2\) & \(66.55\)\\
 &  &  & \(-\) & \(0.01\) & \(0.15\) & \(0.25\) & \(0.25\) & \(93.7\) & \(\cdot\)\\
 &  & \(1\) & \(+\) & \(0.10\) & \(-2.50\) & \(0.81\) & \(0.82\) & \(95.0\) & \(68.40\)\\
 &  &  & \(-\) & \(0.10\) & \(-3.70\) & \(1.46\) & \(1.47\) & \(94.7\) & \(\cdot\)\\
 &  & \(2\) & \(+\) & \(0.26\) & \(-9.76\) & \(1.88\) & \(1.94\) & \(95.2\) & \(68.73\)\\
 &  &  & \(-\) & \(0.26\) & \(-16.13\) & \(3.48\) & \(3.49\) & \(95.2\) & \(\cdot\)\\
 & \(\gamma_1(t)\) & \(0.25\) & \(+\) & \(0.01\) & \(-0.62\) & \(0.34\) & \(0.34\) & \(94.6\) & \(26.69\)\\
 &  &  & \(-\) & \(0.01\) & \(-0.85\) & \(0.39\) & \(0.40\) & \(94.9\) & \(\cdot\)\\
 &  & \(1\) & \(+\) & \(0.05\) & \(-4.73\) & \(2.01\) & \(1.94\) & \(93.5\) & \(28.43\)\\
 &  &  & \(-\) & \(0.05\) & \(-3.54\) & \(2.31\) & \(2.29\) & \(94.8\) & \(\cdot\)\\
 &  & \(2\) & \(+\) & \(0.13\) & \(-17.48\) & \(4.67\) & \(4.49\) & \(93.0\) & \(29.33\)\\
 &  &  & \(-\) & \(0.14\) & \(-11.11\) & \(5.39\) & \(5.35\) & \(94.6\) & \(\cdot\)\\
\end{tabular}

  \medskip
  \begin{flushleft}
    {Type: fusion estimator (\(+\)) or RCT-only estimator (\(-\)); Mean: average of estimates; Bias: Monte-Carlo bias, \(10^{-4}\); RMSE: root mean squared error, \(10^{-2}\); SE: average of standard error estimates, \(10^{-2}\); Coverage: \(95\%\) confidence interval coverage, \(\%\); Reduction: average of percentage reduction in squared standard error estimates, \(\%\).}
  \end{flushleft}
\end{table}

\section{Real data example}

In this data example, we use data from the clinical trials SUSTAIN-6 (ClinicalTrials.gov ID NCT01720446, \citealp{marso2016semaglutide}) and LEADER (ClinicalTrials.gov ID NCT01179048, \citealp{marso2016liraglutide}).
The overall objective is to incorporate the controls collected in LEADER (\(D=0\)) in the statistical analysis on the study population of SUSTAIN-6 (\(D=1\)) to boost the precision of estimates.
The number of subjects randomized to placebo is \(1649\) in SUSTAIN-6 and \(4672\) in LEADER.
The placebos are both subcutaneous injections matched to their corresponding active treatment.
The frequency of injection is once daily in LEADER but once weekly in SUSTAIN-6.
We proceed by regarding the three placebos as the same intervention.

\begin{table}
  \caption{Numbers of randomized subjects and events by arm in SUSTAIN-6 and LEADER.}
  \label{tab:sample-size}
  \footnotesize
  \centering
  \begin{tabular}{lrrrrrr}
    & \multicolumn{4}{c}{SUSTAIN-6} & \multicolumn{2}{c}{LEADER} \\
    & \multicolumn{4}{c}{(once-weekly)} & \multicolumn{2}{c}{(once-daily)} \\
    & \multicolumn{2}{c}{Semaglutide} & \multicolumn{2}{c}{Placebo} & Liraglutide & Placebo \\
    & \(1.0\) mg & \(0.5\) mg &  \(1.0\) mg & \(0.5\) mg & \(1.8\) mg & \(1.8\) mg \\
    Total & \(822\) & \(826\) & \(825\) & \(824\) & \(4668\) & \(4672\) \\
    Non-fatal cardiovascular event & \(29\) & \(38\) & \(48\) & \(53\) & \(242\) & \(271\) \\
    All-cause death & \(23\) & \(24\) & \(21\) & \(27\) & \(135\) & \(155\)
  \end{tabular}

  \medskip
  {Total: total number of randomized subjects at baseline; the other numbers count the non-fatal cardiovascular events and all-cause deaths  on or before day \(728\).}
\end{table}

We define the event of interest as the composite event of nonfatal myocardial infarction or nonfatal stroke (\(J=1\)), which we refer to as the non-fatal cardiovascular event.
We treat death from all causes as the competing event (\(J=0\)).
Time-zero in the analysis is the time of treatment or placebo randomization.
The first set of parameters we considered were the cumulative incidences \(\theta_{j}(a,t)\) for both events at week \(26\), week \(52\), week \(78\), and week \(104\) in the study population of SUSTAIN-6 of once-weekly semaglutide, 1.0 mg (\(A=1\)), as well as the average treatment effects \(\theta_j\{t\}=\theta_j(1,t)-\theta_j(0,t)\).
We set the limit of the time span as the end of the follow-up period in SUSTAIN-6.
We also considered the restricted mean times lost to the events \(\gamma_j(a)\) capped at week \(26\), week \(52\), week \(78\), and week \(104\) and the corresponding effect \(\gamma_j\{t\}=\gamma_j(1,t)-\gamma_j(0,t)\).
See Table~\ref{tab:sample-size} for a breakdown of sample sizes by randomization arm and the numbers of events observed until week \(104\).

The main analysis was carried out under the transportability assumption that the cause-specific hazard of the event of interest under placebo, conditioning on relevant baseline covariates, is the same in the study population of SUSTAIN-6 and that of LEADER.
The baseline covariates \(X\) to adjust for included age, sex, weight, duration of type-2 diabetes, glycated hemoglobin level, systolic and diastolic blood pressure, level of low-density lipoprotein cholesterol, smoking status, as well as history of ischemic heart disease, myocardial infarction, heart failure, ischemic stroke, and hypertension.
The inclusion-exclusion criteria for the studies are highly comparable.
Therefore, the transportability assumption implies that any difference in the marginal hazard of the event of interest can be attributed to the differences in the rates of death, in the rates of censoring, and/or in the baseline characteristics induced by sampling.
The causal and transportability assumptions are compatible with the local independence graph \citep{didelez2008graphical} without right-censoring in Figure~\ref{fig:data-example}.
The cause-specific hazards and hazards of censoring were estimated with the Cox proportional hazards model.
The hazards in the RCT sample were fitted separately within the treatment arms to ensure full treatment-covariate interaction and per-stratum baseline hazards.

\begin{figure}
  \centering
  \scalebox{0.8}{
  \begin{tikzpicture}
    \node[vertex] (A) at (162+72:1.5) {\(A\)};
    \node[vertex] (X) at (162+4*72:1.5) {\(X\)};
    \node[vertex] (D) at (162+5*72:1.5) {\(D\)};
    \node[vertex] (N1) at (162+2*72:1.5) {\(N_1\)};
    \node[vertex] (N2) at (162+3*72:1.5) {\(N_2\)};
    \node[above right=-1ex of X] {Baseline covariates};
    \node[right=0ex of N2] {All-cause death};
    \node[below right=-1ex of N1,align=left] {Non-fatal adverse\\[-.2pc] cardiovascular event};
    \node[below left=-1ex of A] {Treatment assignment};
    \node[left=0ex of D] {SUSTAIN-6/LEADER};
    \path[-latex]
    (A) edge (N1)
    (A) edge (N2)
    (D) edge (X)
    (D) edge (A)
    (D) edge (N2)
    (X) edge (N1)
    (X) edge (N2)
    (N1) edge[bend left,in=160,out=20] (N2)
    (N2) edge[bend left,in=160,out=20] (N1);
  \end{tikzpicture}
  }
  
  \caption{Hypothesized local independence graph of the variables used in the data example.
    The nodes \(N_{j}(t)\) are uncensored versions of the counting processes.}
  \label{fig:data-example}
\end{figure}
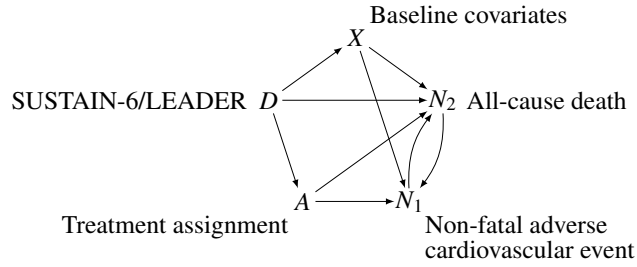

The results are reported in Tables~\ref{tab:analysis-cif}--\ref{tab:analysis-rmtl}.
We highlight the results at week \(104\).
The fusion estimate of \(\theta_1\{104\}\) demonstrates a decrease of \(2.72\) percentage points [\(95\%\)-confidence interval: \((-4.33,-1.12)\)] in the cumulative incidence of non-fatal cardiovascular event by semaglutide.
There appears to be no evidence for semaglutide's effect on the cumulative incidence of all-cause death \(\theta_2\{104\}\).
Semaglutide does not seem to lower the risk of non-fatal cardiovascular event because of an increased risk of cardiovascular death.
The restricted mean time lost to non-fatal cardiovascular event \(\gamma_1\{104\}\) reduces by \(1.15\) week [\(95\%\)-confidence interval: \((-2.24,-0.07)\)] with semaglutide.
Again, semaglutide does not appear to change the restricted mean time lost to all-cause death.
The estimates and confidence intervals for time points before week \(104\) do not hint at any treatment effect on non-fatal cardiovascular event, except for \(\theta_{1}\{78\}\).
The results for treatment-specific parameters are displayed in Tables~\ref{tab:analysis-cif-app}--\ref{tab:analysis-rmtl-app} of the Supplementary Material.

For treatment effects of semaglutide in terms of \(\theta_1\{t\}\) and \(\gamma_1\{t\}\), the fusion point estimate and the RCT-only estimates are rather comparable.
On the other hand, the length of confidence intervals is shortened by approximately \(9\%\) for all treatment effects.
Despite the inclusion of external controls amounting to nearly three times the controls in SUSTAIN-6, the precision gain is hardly impressive.
We believe the information bottleneck is the lack of subjects receiving the active treatment, since the size of the placebo group in SUSTAIN-6 was already twice as large as that of the semaglutide \(1.0\) mg group.
This is supported by the observation that the percentage reduction in standard errors is above \(20\%\) for the under-placebo parameters \(\theta_1(0,t)\) and \(\gamma_1(0,t)\).

To mimic the setup where the size of the control arm is much smaller than the size of the treatment arm, we randomly discarded \(75\%\) of the controls from SUSTAIN-6 and repeated the analysis.
The resulting fusion estimators for the treatment effects exhibited some deviation from the RCT-only estimators, but a large precision gain at approximately \(45\)--\(50\%\) was observed; see Tables~\ref{tab:analysis-cif-fewer-controls-app} and \ref{tab:analysis-rmtl-fewer-controls-app}.
Finally, to evaluate the impact of omitted variable bias, we performed a sensitivity analysis by removing history of cardiovascular diseases from the set of baseline covariates.
While the reduction in standard errors was twice as large compared to the main analysis, there was also a more substantial difference between point estimates of \(\theta_1\{104\}\) and \(\gamma_1\{t\}\); see Tables~\ref{tab:analysis-cif-sensitivity-no-history-app} and \ref{tab:analysis-rmtl-sensitivity-no-history-app}.
It is thus unclear whether the transportability assumption holds at all with this restricted set of baseline covariates.
Further details on the data example and results from the additional analysis are available in Supplementary Material \S\ref{sec:data-example-app}.

\begin{table}
  \caption{Cumulative incidence differences in the real data example.}
  \label{tab:analysis-cif}
  \footnotesize
  \centering
  
  \begin{tabular}{lrlrrr}
    Estimand & \(t\) (weeks) & Type & Estimate (\%) & 95\%-CI (\%) & Reduction\\
    \(\theta_1\{t\}\) & 26 & \(+\) & \(-0.26\) & \((-1.27,0.75)\) & \(8.31\)\\
             &  & \(-\) & \(-0.39\) & \((-1.49,0.71)\) & .\\
             & 52 & \(+\) & \(-0.61\) & \((-1.87,0.64)\) & \(8.65\)\\
             &  & \(-\) & \(-0.73\) & \((-2.10,0.65)\) & .\\
             & 78 & \(+\) & \(-1.99\) & \((-3.40,-0.57)\) & \(9.18\)\\
             &  & \(-\) & \(-1.78\) & \((-3.34,-0.22)\) & .\\
             & 104 & \(+\) & \(-2.72\) & \((-4.33,-1.12)\) & \(9.76\)\\
             &  & \(-\) & \(-2.56\) & \((-4.33,-0.78)\) & .\\
    \(\theta_2\{t\}\) & 26 & \(+\) & \(-0.49\) & \((-1.05,0.07)\) & \(-0.00\)\\
             &  & \(-\) & \(-0.49\) & \((-1.05,0.07)\) & .\\
             & 52 & \(+\) & \(-0.38\) & \((-1.26,0.50)\) & \(-0.00\)\\
             &  & \(-\) & \(-0.38\) & \((-1.26,0.50)\) & .\\
             & 78 & \(+\) & \(0.05\) & \((-1.17,1.28)\) & \(-0.00\)\\
             &  & \(-\) & \(0.05\) & \((-1.17,1.28)\) & .\\
             & 104 & \(+\) & \(-0.21\) & \((-1.71,1.28)\) & \(-0.00\)\\
             &  & \(-\) & \(-0.21\) & \((-1.71,1.28)\) & .\\
  \end{tabular}
  \medskip
  \begin{flushleft}
    {Type: fusion estimator (\(+\)) or RCT-only estimator (\(-\)); CI: confidence interval; Reduction: percentage reduction CI length, \(\%\).}
  \end{flushleft}
\end{table}

\begin{table}
  \caption{Restricted mean time lost differences in the real data example.}
  \label{tab:analysis-rmtl}
  \footnotesize
  \centering
  
  \begin{tabular}{lrlrrr}
    Estimand & \(t\) (weeks) & Type & Estimate (weeks) & 95\%-CI (weeks) & Reduction\\
    \(\gamma_1\{t\}\) & 26 & \(+\) & \(-0.10\) & \((-0.23,0.03)\) & \(10.32\)\\
             &  & \(-\) & \(-0.11\) & \((-0.25,0.04)\) & .\\
             & 52 & \(+\) & \(-0.22\) & \((-0.62,0.18)\) & \(9.02\)\\
             &  & \(-\) & \(-0.26\) & \((-0.70,0.18)\) & .\\
             & 78 & \(+\) & \(-0.59\) & \((-1.31,0.13)\) & \(8.87\)\\
             &  & \(-\) & \(-0.62\) & \((-1.41,0.18)\) & .\\
             & 104 & \(+\) & \(-1.15\) & \((-2.24,-0.07)\) & \(9.00\)\\
             &  & \(-\) & \(-1.14\) & \((-2.33,0.04)\) & .\\
    \(\gamma_2\{t\}\) & 26 & \(+\) & \(-0.05\) & \((-0.13,0.03)\) & \(-0.00\)\\
             &  & \(-\) & \(-0.05\) & \((-0.13,0.03)\) & .\\
             & 52 & \(+\) & \(-0.12\) & \((-0.37,0.13)\) & \(-0.00\)\\
             &  & \(-\) & \(-0.12\) & \((-0.37,0.13)\) & .\\
             & 78 & \(+\) & \(-0.21\) & \((-0.68,0.27)\) & \(-0.00\)\\
             &  & \(-\) & \(-0.21\) & \((-0.68,0.27)\) & .\\
             & 104 & \(+\) & \(-0.24\) & \((-1.02,0.53)\) & \(-0.00\)\\
             &  & \(-\) & \(-0.24\) & \((-1.02,0.53)\) & .\\
  \end{tabular}
  \medskip
  \begin{flushleft}
    {Type: fusion estimator (\(+\)) or RCT-only estimator (\(-\)); CI: confidence interval; Reduction: percentage reduction CI length, \(\%\).}
  \end{flushleft}
\end{table}

\section{Discussion}

In this work, we assume transportability of the conditional cause-specific hazard of the event of interest between the RCT population and the external control population.
We have considered estimation of the cumulative incidence functions and restricted mean times lost with external controls.
In fact, this assumption also allows us to derive estimators with improved precision for other estimands in competing risks analysis, including the average hazard with survival weights \citep{uno2023ratio} and separable effects \citep{stensrud2022separable}.
We comment in Supplementary Material \S\ref{sec:weaker-app} that weaker transportability assumptions for competing risks analysis can be difficult to interpret.

In practice, the risk of introducing bias to RCT data when incorporating external controls should be evaluated.
One approach is to carry out the analysis with the data fusion estimator for precision gain.
Then, post-hoc model diagnostics such as likelihood ratio tests or other omnibus tests may be performed to assess the validity of the transportability assumption.
A more principled approach may be to integrate the estimated bias to make an informed decision of whether the RCT-only estimator should be retained.
For instance, following \citet{yang2023elastic}, a test-then-pool estimator for the cumulative incidence \(\theta_{1}(0)\) can be constructed from the estimators \(\hat\theta_{1}(0)\) with and without external controls via a score test.
This is left for future work.

We focus on treatment policy estimands, which ignore treatment trajectories after randomization.
Consequently, we do not adjust for post-baseline variables.
However, by omitting these variables, we may fail to establish transportability of the cause-specific hazard.
In SUSTAIN-6 and LEADER, when subjects experienced non-fatal adverse events, they could receive drop-in medications.
If these decisions were based on different treatment guidelines and policies between the two populations, subjects with similar baseline characteristics might have rather different event rates.
This is a particular concern for using historical controls in RCTs.
Future research can focus on fusion estimates that allow for history beyond baseline.

\bibliography{./competing-risk-external-control.bib}

\newpage

\part{}
\begin{center}
  {\large Supplementary material for ``Improving precision of cumulative incidence estimates in randomized controlled trials with external controls''}
\end{center}

\bigskip

\renewcommand{\theequation}{S\arabic{equation}}%
\renewcommand{\thesection}{S\arabic{section}}%
\renewcommand{\thetable}{S\arabic{table}}%
\renewcommand{\thefigure}{S\arabic{figure}}%
\renewcommand{\thetheorem}{S\arabic{theorem}}%
\renewcommand{\theproposition}{S\arabic{proposition}}%
\renewcommand{\thelemma}{S\arabic{lemma}}%
\renewcommand{\theassumption}{S\arabic{assumption}}%
\renewcommand{\theremark}{S\arabic{remark}}%
\renewcommand{\thecorollary}{S\arabic{corollary}}%
\renewcommand{\thepage}{S\arabic{page}}

\setcounter{equation}{0}
\setcounter{section}{0}
\setcounter{table}{0}
\setcounter{figure}{0}
\setcounter{theorem}{0}
\setcounter{proposition}{0}
\setcounter{lemma}{0}
\setcounter{assumption}{0}
\setcounter{remark}{0}
\setcounter{corollary}{0}
\setcounter{page}{1}

\section{Notations on the observed data distribution}
\label{sec:notation-app}

In the main text, we make use of three models: the counterfactual data distribution, the uncensored data distribution, and the observed data distribution.
All three models encompass the population indicator \(D\), the baseline covariates \(X\), and the treatment \(DA\), whereas the counterfactual data distribution contains the potential outcomes \(\{T(1),T(0),J(1),J(0)\}\), the uncensored data distribution contains the uncensored event time and event type plus the censoring time \((T,J,C)\), and the observed data distribution contains the censored event time and event type \((\tilde{T},\tilde{J})\).
With Assumption~\ref{asn:causal}, we identify the causal parameters in the uncensored data distribution.
We now connect the observed data quantities to their uncensored counterparts using Assumption~\ref{asn:censoring}.

Recall the (observed) event counting process \(N_{j}(t)= \I(\tilde{T}\leq t,\tilde{J}=j)\) for \(j=1,2\).
Let \(N^{c}(t)=\I(\tilde{T}\leq t,\tilde{J}=0)\) be the censoring counting process.
Define \((\mathcal{F}_{t})_{t\in(0,\tau]}\) as the filtration in which the \(\sigma\)-algebra \(\mathcal{F}_{t}=\sigma\big[\big\{N_{1}(s),N_{2}(s),N^{c}(s),DA,X,D; 0< s\leq t\big\}\big]\) contains the observed information up to time \(t\) (inclusive).
The event counting process \(N_{j}(t)\) has a compensator such that
\begin{equation*}
\begin{aligned}
  \tilde{M}_{Dj}(t\mid A,X) &= N_{j}(t)-\tilde{\Lambda}_{Dj}(t\mid A,X),&\quad D&=1\\
  \tilde{M}_{Dj}(t\mid X) &= N_{j}(t)-\tilde{\Lambda}_{Dj}(t\mid X),&\quad D&=0,
\end{aligned}
\end{equation*}
is a martingale adapted to \((\mathcal{F}_{t})\).
Standard results in time-to-event analysis shows that the compensator satisfies a multiplicative hazard structure such that the increment of the compensator factorizes as
\begin{equation*}
\begin{aligned}
  \d\tilde{\Lambda}_{Dj}(t\mid A,X) &= \I(\tilde{T}\geq t)\d\tilde{\Alpha}_{Dj}(t\mid A,X), &\quad D&=1, \\
  \d\tilde{\Lambda}_{Dj}(t\mid X) &= \I(\tilde{T}\geq t)\d\tilde{\Alpha}_{Dj}(t\mid X), &\quad D&=0,
\end{aligned}
\end{equation*}
where
\begin{align*}
  \d\tilde{\Alpha}_{1j}(t\mid a,x) &= \frac{\d P(\tilde{T}\leq t,\tilde{J}=j\mid A=a,X=x,D=1)}{P(\tilde{T}\geq t\mid A=a,X=x,D=1)},\\
  \d\tilde{\Alpha}_{0j}(t\mid x) &= \frac{\d P(\tilde{T}\leq t,\tilde{J}=j\mid X=x,D=0)}{P(\tilde{T}\geq t\mid X=x,D=0)}.
\end{align*}

Consider the filtration \((\mathcal{G}_t)\) where \(\mathcal{G}_{t}= \sigma\big[\big\{N_{1}(s+),N_{2}(s+),N^{c}(s),DA,X,D;0< s\leq t\big\}\big]\).
The quantity associated with the observed censoring counting process
\[
  D\tilde{M}_{D}^c(t\mid A,X)+(1-D)\tilde{M}_{D}^c(t\mid X)
\]
is a martingale adapted to \((\mathcal{G}_{t})\), where \(\tilde{M}^{c}_D(t\mid A,X)= N^{c}(t)-\tilde{\Lambda}^{c}_D(t\mid A,X)\), \(\tilde{M}^{c}_D(t\mid X)= N^{c}(t)-\tilde{\Lambda}^{c}_D(t\mid X)\),
\[
  \d\tilde{\Lambda}_{D}^c(t\mid A,X)=\big\{\I(\tilde{T}\geq t,\tilde{J}=0)+\I(\tilde{T}>t,\tilde{J}\neq 0)\big\}\d\tilde{\Alpha}_{D}^c(t\mid A,X)
\]
and
\begin{align*}
  \d\tilde{\Alpha}_{1}^c(t\mid a,x) &= \frac{\d P(\tilde{T}\leq t,\tilde{J}=0\mid A=a,X=x,D=1)}{P[\{\tilde{T}\geq t,\tilde{J}=0\}\cup\{\tilde{T}>t,\tilde{J}\neq 0\}\mid A=a,X=x,D=1]},\\
  \d\tilde{\Alpha}_{0}^c(t\mid x) &= \frac{\d P(\tilde{T}\leq t,\tilde{J}=0\mid X=x,D=0)}{P[\{\tilde{T}\geq t,\tilde{J}=0\}\cup\{\tilde{T}>t,\tilde{J}\neq 0\}\mid X=x,D=0]}.
\end{align*}


Under Assumption~\ref{asn:censoring}, the cause-specific hazards defined on the uncensored data distribution are identifiable from the observed data with \(\tilde{\Alpha}_{dj}(t\mid a,x)=\Alpha_{dj}(t\mid a,x)\), and so is the censoring hazard with \(\tilde{\Alpha}_{d}^c(t\mid a,x)=\Alpha_{d}^c(t\mid a,x)\).
Therefore, the survival function of the composite event and the cumulative incidence function of event type \(j\) is subsequently identifiable in the observed data distribution as the product integral
\[
  S_d(t\mid a,x)=\{\Pi(\Alpha_{d1}+\Alpha_{d2})\}(t\mid a,x)=\big\{\Pi\big(\tilde{\Alpha}_{d1}+\tilde{\Alpha}_{d2}\big)\big\}(t\mid a,x)=\tilde{S}_d(t\mid a,x)
\]  
and the Lebesgue-Stieltjes integral
\[
  F_{dj}(t\mid a,x)= \int_{0}^{t}S_d(s\!-\!\mid a,x)\d\Alpha_{dj}(s\mid a,x)=\int_{0}^{t}\tilde{S}_d(s\!-\!\mid a,x)\d\tilde{\Alpha}_{dj}(s\mid a,x)=\tilde{F}_{dj}(t\mid a,x).
\]

\section{Proofs}
\label{sec:proof-app}

\subsection{Proof of Lemma~\ref{lem:eif}}

We show the efficient influence function of \(\theta_1(0)\) without Assumption~\ref{asn:discontinuity}.
Define
\begin{align*}
G_1(t\mid a,x) &= P(\tilde{T}>t\mid A=a,X=x,D=1),\\
G_0(t\mid x)&=P(\tilde{T}>t\mid X=x,D=0),\\
\tilde{F}_{1j}(t\mid a,x)&= \int_{0}^{t}\tilde{S}_1(s\!-\!\mid a,x)\d\tilde{\Alpha}_{1j}(s\mid a,x).
\end{align*}
Lemma~\ref{lem:eif} is a special case of the following result.

\begin{proposition}
  \label{ppn:eif-discontinuity}
  The efficient influence function of \(\theta_{1}(0)\) at \(P\in\mathcal{P}\) is
  \begin{align}
      \varphi_{1}(0)(O) &= \frac{D(1-A)\pi(X)}{\alpha}\int_{0}^{\tau}\frac{w_1(t\mid X)}{G_1(t\!-\!\mid A,X)}g_{11}(t\mid A,X)\d\tilde{M}_{11}(t\mid A,X) \nonumber\\
      &\hphantom{=}\quad + \frac{(1-D)\pi(X)}{\alpha}\int_{0}^{\tau}\frac{w_0(t\mid X)}{G_0(t\!-\!\mid X)}g_{11}(t\mid 0,X)\d\tilde{M}_{01}(t\mid X) \nonumber\\
      &\hphantom{=}\quad + \frac{D}{\alpha}\frac{(1-A)}{e_1(0\mid X)}\int_{0}^{\tau}\frac{1}{G_1(t\!-\!\mid A,X)}g_{21}(t\mid A,X)\d\tilde{M}_{12}(t\mid A,X)\nonumber\\
      &\hphantom{=}\quad + \frac{D}{\alpha}\big\{\tilde{F}_{11}(\tau\mid 0,X)-\theta_{1}(0)\big\},    \label{eqn:eif-discontinuity}
  \end{align}
  where
  \begin{align*}
    g_{k1}(t\mid A,X) &= \I(k=1)\tilde{S}_1(t\!-\!\mid A,X)-\frac{\tilde{F}_{11}(\tau\mid A,X)-\tilde{F}_{11}(t\mid A,X)}{1-\triangle\tilde{\Alpha}_{11}(t\mid A,X)-\triangle\tilde{\Alpha}_{12}(t\mid A,X)} \\
    w_{\dott}(t\mid X) &= \{1-\triangle\tilde{\Alpha}_{11}(t\mid 0,X)\}g_{11}(t\mid 0,X)\bigg\{\frac{1-\pi(X)}{G_1(t\!-\!\mid 0,X)}+\frac{\pi(X)e_1(0\mid X)}{G_0(t\!-\!\mid X)}\bigg\}\\
    w_1(t\mid X) &= \frac{1}{w_{\dott}(t\mid X)}\bigg\{\frac{1-\triangle\tilde{\Alpha}_{11}(t\mid 0,X)}{G_0(t\!-\!\mid X)}g_{11}(t\mid 0,X) \\
    &\hphantom{= \frac{1}{w_{\dott}(t\mid X)}\bigg\{}\quad+\frac{\triangle\tilde{\Alpha}_{12}(t\mid 0,X)}{G_1(t\!-\!\mid 0,X)}\frac{1-\pi(X)}{\pi(X)e_1(0\mid X)}g_{21}(t\mid 0,X)\bigg\}\\
    w_0(t\mid X) &= \frac{1}{w_{\dott}(t\mid X)}\bigg\{\frac{1-\triangle\tilde{\Alpha}_{11}(t\mid 0,X)}{G_1(t\!-\!\mid 0,X)}g_{11}(t\mid 0,X)-\frac{\triangle\tilde{\Alpha}_{12}(t\mid 0,X)}{G_1(t\!-\!\mid 0,X)}g_{21}(t\mid 0,X)\bigg\}.
  \end{align*}
  The efficient influence function of \(\theta_{2}(0)\) at \(P\in\mathcal{P}\) is
  \begin{align}
      \varphi_{2}(0)(O) &= \frac{D(1-A)\pi(X)}{\alpha}\int_{0}^{\tau}\frac{w_1(t\mid X)}{G_1(t\!-\!\mid A,X)}g_{12}(t\mid A,X)\d\tilde{M}_{11}(t\mid A,X) \nonumber\\
      &\hphantom{=}\quad + \frac{(1-D)\pi(X)}{\alpha}\int_{0}^{\tau}\frac{w_0(t\mid X)}{G_0(t\!-\!\mid X)}g_{12}(t\mid 0,X)\d\tilde{M}_{01}(t\mid X) \nonumber\\
      &\hphantom{=}\quad + \frac{D}{\alpha}\frac{(1-A)}{e_1(0\mid X)}\int_{0}^{\tau}\frac{1}{G_1(t\!-\!\mid A,X)}g_{22}(t\mid A,X)\d\tilde{M}_{12}(t\mid A,X)\nonumber\\
      &\hphantom{=}\quad + \frac{D}{\alpha}\big\{\tilde{F}_{12}(\tau\mid 0,X)-\theta_{2}(0)\big\}. 
    \label{eqn:eif-discontinuity-2}
    \end{align}
\end{proposition}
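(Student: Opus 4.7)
The plan is to derive the efficient influence functions by identifying the tangent space of $\mathcal{P}$ at $P$ and projecting any gradient of $\theta_j(0)$ onto it. Using the likelihood factorization in Section~2, I would parameterize $\mathcal{P}$ by the variationally independent components $\alpha$, $P_X$, $\pi$, $e_1$, $\Alpha_{\dott 1}(\cdot\mid 0, \cdot)$, $\Alpha_{12}(\cdot\mid 0, \cdot)$, $\Alpha_{02}(\cdot\mid \cdot)$, the censoring hazards $\Alpha_1^c(\cdot\mid 0, \cdot)$ and $\Alpha_0^c$, together with nuisance parts under $A=1$ that do not enter $\theta_j(0)$. The crucial feature of this parameterization is that a perturbation of the single hazard $\Alpha_{\dott 1}$ in direction $h(t, x)$ simultaneously perturbs $\Alpha_{11}(\cdot\mid 0, x)$ in the RCT and $\Alpha_{01}(\cdot\mid x)$ in the external population on the overlap $\mathcal{X}_1\cap\mathcal{X}_0$; its score is therefore the pooled martingale integral $\frac{D(1-A)}{e_1(0\mid X)}\int h(t,X)\d\tilde{M}_{11}(t\mid 0,X) + (1-D)\int h(t,X)\d\tilde{M}_{01}(t\mid X)$, which combines contributions from both populations rather than splitting into two independent scores.

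The main computation is the pathwise derivative of $\theta_1(0)=\E\{\tilde{F}_{11}(\tau\mid 0, X)\mid D=1\}$ along each direction, coupled with verification that the candidate EIF in \eqref{eqn:eif-discontinuity} satisfies $\E[\varphi_1(0)\cdot S_h] = \partial_\epsilon\theta_1(0)\big|_{\epsilon=0}$ for every tangent score $S_h$. The derivative along a $P_{X\mid D=1}$ perturbation produces the g-formula residual $D\{\tilde{F}_{11}(\tau\mid 0, X) - \theta_1(0)\}/\alpha$. For hazard perturbations, repeated application of the Duhamel equation for the product integrals $\tilde{S}_1$ and $\tilde{F}_{11}$ gives explicit expressions for the derivative kernel: the perturbation of $\Alpha_{12}(\cdot\mid 0, \cdot)$ is matched by the $\d\tilde{M}_{12}$ integral with weight $g_{21}(t\mid A, X)$, while the perturbation of the common hazard $\Alpha_{\dott 1}$ must be matched by an $X$-conditional integral of $h$ against a kernel proportional to $\pi(x) g_{11}(t\mid 0, x)$. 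Allocating this matching across the two populations produces the weights $w_1, w_0$; equivalently, they arise as the projection of the RCT-only cause 1 gradient onto the pooled-score subspace in $L^2(P)$, with $w_{\dott}(t\mid X)$ emerging as the convex combination of inverse at-risk densities from the two populations that normalizes the projection.

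The main obstacle will be the careful bookkeeping of atoms in the cumulative hazards, which prevents one from using the simplification of treating $\d\Alpha$ as absolutely continuous. Without Assumption~\ref{asn:discontinuity}, common jumps of $\Alpha_{11}$ and $\Alpha_{12}$ induce nonzero quadratic covariation between $\tilde{M}_{11}$ and $\tilde{M}_{12}$, which couples the cause 1 and cause 2 contributions in the projection; the denominators $1 - \triangle\tilde{\Alpha}_{11} - \triangle\tilde{\Alpha}_{12}$ in the $g_{kj}$ and the factor $1 - \triangle\tilde{\Alpha}_{11}(t\mid 0, X)$ in $w_1, w_0$ are precisely the corrections required to respect these atoms when one propagates the Duhamel expansion through $\tilde{F}_{11}$ and $\tilde{S}_1$. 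Once \eqref{eqn:eif-discontinuity} is established for $\theta_1(0)$, the derivation of \eqref{eqn:eif-discontinuity-2} for $\theta_2(0)$ proceeds identically after swapping $\tilde{F}_{11}$ for $\tilde{F}_{12}$ in the target and hence $g_{k1}$ for $g_{k2}$ in the weights; the tangent space of $\mathcal{P}$ is unchanged, so $w_1, w_0, w_{\dott}$ carry over without modification. The last step is to verify that each martingale integral in the proposed $\varphi_j(0)$ is a score of an appropriate submodel of $\mathcal{P}$, which guarantees that the constructed function lies in the tangent space and is therefore the canonical gradient rather than merely a gradient.
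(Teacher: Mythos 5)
Your overall architecture is the same as the paper's: build the transportability constraint into the tangent space (the paper does this by imposing the score restriction \eqref{eqn:restriction-expectation} on scores of the form $Dh_1+(1-D)h_0$, you do it by reparameterizing with a single pooled hazard $\Alpha_{\dott 1}$ --- equivalent bookkeeping), compute the pathwise derivative of $\theta_1(0)$ via the Duhamel equation while tracking atoms, verify the candidate is a gradient by pairing it with every score, and finally verify the candidate itself lies in the tangent space so that it is the canonical gradient. That is exactly the paper's route, including the $\theta_2(0)$ case by swapping $\tilde{F}_{11}$ for $\tilde{F}_{12}$.

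There is, however, a concrete error at the point you call the crucial feature of your parameterization. The score generated by perturbing the common hazard $\Alpha_{\dott 1}$ in direction $h(t,x)$ is $D(1-A)\int h(t,X)\,\d\tilde{M}_{11}(t\mid 0,X)+(1-D)\int h(t,X)\,\d\tilde{M}_{01}(t\mid X)$; your version puts an extra factor $1/e_1(0\mid X)$ on the RCT term. No choice of direction produces that object (perturbing by $h/e_1$ puts the factor in \emph{both} terms), so your claimed score is not the score of any submodel of $\mathcal{P}$ and violates the restriction \eqref{eqn:restriction-expectation} that characterizes $\dot{\mathcal{P}}_1$. This is not cosmetic: the allocation of weight between the trial-control and external-control martingales --- precisely the functions $w_1$ and $w_0$, which must satisfy $\pi(x)e_1(0\mid x)w_1(t\mid x)+\{1-\pi(x)\}w_0(t\mid x)=1$ and the jump-coupling identity \eqref{eqn:restriction-check} driven by $\d\langle\tilde{M}_{11},\tilde{M}_{12}\rangle$ --- is determined exactly by pairing the candidate against the correctly weighted pooled scores; with your score the verification $\E\{\varphi_1(0)(O)h(O)\}=\tfrac{\d}{\d\varepsilon}\theta_1(0;\varepsilon)\vert_{\varepsilon=0}$ would fail for the stated $w_1,w_0$, or would appear to validate a different, incorrect allocation. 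Relatedly, the sentence ``allocating this matching across the two populations produces the weights $w_1,w_0$'' is where essentially all of the paper's work lies (deriving the weights, the normalization identity above, and checking via predictable covariations that the resulting martingale terms satisfy the score restriction and hence lie in $\dot{\mathcal{P}}_1$); as written you assert rather than establish this step, and with the mis-stated score the computation as proposed would not go through.
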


\begin{proof}
  We define the observed data distribution for the censoring time as \(Q_{10}(t\mid a,x)= P(\tilde{T}\leq t,\tilde{J}=0\mid A=a,X=x,D=1)\) and \(Q_{00}(t\mid x)= P(\tilde{T}\leq t,\tilde{J}=0\mid X=x,D=0)\) for \(a\in\{0,1\}\).
  The observed data density can be factorized as
  \begin{multline*}
    \d P(\tilde{T},\tilde{J},A,X,D) = \{\d Q_{1\tilde{J}}(\tilde{T}\mid A,X)e_1(A\mid X)\pi(X)\}^{D}[\d Q_{0\tilde{J}}(\tilde{T}\mid X)\{1-\pi(X)\}]^{1-D}\d P(X).
  \end{multline*}
  Consider the parametric submodel for the observed data density for the event time and event type:
  \begin{align*}
    \d Q_{1k}(t\mid a,x;\varepsilon) &= \d Q_{1k}(t\mid a,x)\{1+\varepsilon h_{1}(t,k,a,x)\}, \\
    \d Q_{0k}(t\mid x;\varepsilon) &= \d Q_{0k}(t\mid x)\{1+\varepsilon h_{0}(t,k,x)\}, \\
  \end{align*}
  for \(a\in\{0,1\}\) and \(k\in\{0,1,2\}\), where \(h_{1}(\tilde{T},\tilde{J},A,X)\) and \(h_{0}(\tilde{T},\tilde{J},X)\) are functions with finite variance that satisfy \(E\{h_{1}(\tilde{T},\tilde{J},A,X)\mid A,X,D=1\}=0\) and \(E\{h_{0}(\tilde{T},\tilde{J},X)\mid X,D=0\}=0\).
  The submodel must further obey the restriction \(\d\tilde{\Alpha}_{11}(t\mid 0,x;\varepsilon) = \d\tilde{\Alpha}_{01}(t\mid x;\varepsilon)\) for \(t\in(0,\tau]\), or equivalently,
  \begin{equation}
    \label{eqn:restriction-submodel}
    \frac{\d Q_{11}(t\mid 0,x;\varepsilon)}{G_1(t\!-\!\mid 0,x;\varepsilon)} = \frac{\d Q_{01}(t\mid x;\varepsilon)}{G_0(t\!-\!\mid x;\varepsilon)}.
  \end{equation}
  The Gateaux derivative of the cumulative hazard increment \(\d \tilde\Alpha_{1k}(t\mid a,x;\varepsilon)\) for \(k\in\{0,1,2\}\) is
  \begin{align*}
    \MoveEqLeft\frac{\d}{\d \varepsilon}\d \tilde\Alpha_{1k}(t\mid a,x;\varepsilon)\bigg\vert_{\varepsilon=0}\\
    &= \frac{\d}{\d \varepsilon}\frac{\d Q_{1k}(t\mid a,x;\varepsilon)}{G_1(t\!-\!\mid a,x;\varepsilon)}\bigg\vert_{\varepsilon=0} \\
    &= \frac{1}{G_1(t\!-\!\mid a,x)}\frac{\d}{\d \varepsilon}\d Q_{1k}(t\mid a,x;\varepsilon)\bigg\vert_{\varepsilon=0} - \frac{\d Q_{1k}(t\mid a,x)}{G_1^2(t\!-\!\mid a,x)}\frac{\d}{\d \varepsilon}G_1(t\!-\!\mid a,x;\varepsilon)\bigg\vert_{\varepsilon=0} \\
    &= \frac{1}{G_1(t\!-\!\mid a,x)}\frac{\d}{\d \varepsilon}\d Q_{1k}(t\mid a,x;\varepsilon)\bigg\vert_{\varepsilon=0} \\
    &\hphantom{==}\quad - \frac{\d Q_{1k}(t\mid a,x)}{G_1^2(t\!-\!\mid a,x)}\frac{\d}{\d \varepsilon}\int_{u\in[t,\infty)}\sum_{j\in\{0,1,2\}}\d Q_{1j}(u\mid a,x;\varepsilon)\bigg\vert_{\varepsilon=0}\\
    &= \d\tilde\Alpha_{1k}(t\mid a,x)\bigg\{h_1(t,k,a,x) - \sum_{j\in\{0,1,2\}}\int_{u\in [0,\infty)}h_1(u,j,a,x)\frac{\d Q_{1j}(u\mid a,x)}{G_1(t\!-\!\mid a,x)}\bigg\}.
  \end{align*}
  Similarly, for \(k\in\{0,1,2\}\), we have
  \[
    \frac{\d}{\d \varepsilon}\d \tilde\Alpha_{0k}(t\mid x;\varepsilon)\bigg\vert_{\varepsilon=0}
    = \d\tilde\Alpha_{0k}(t\mid x)\bigg\{h_0(t,k,x) - \sum_{j\in\{0,1,2\}}\int_{u\in [t,\infty)}h_0(u,j,x)\frac{\d Q_{0j}(u\mid x)}{G_0(t\!-\!\mid x)}\bigg\}.
  \]
  Therefore, differentiating both sides of \eqref{eqn:restriction-submodel} with respect to \(\varepsilon\) and evaluating at zero, the restriction on the scores of the hazards is
  \begin{multline*}
    \d\tilde{\Alpha}_{11}(t\mid 0,x)\bigg\{h_{1}(t,1,0,x)-\int_{u\in[t,\infty)}\sum_{k\in\{0,1,2\}}h_{1}(u,k,0,x)\frac{\d Q_{1k}(u\mid 0,x)}{G_1(t\!-\!\mid 0,x)}\bigg\} \\
    =\d\tilde{\Alpha}_{01}(t\mid x)\bigg\{h_{0}(t,1,x)-\int_{u\in[t,\infty)}\sum_{k\in\{0,1,2\}}h_{0}(u,k,x)\frac{\d Q_{0k}(u\mid x)}{G_0(t\!-\!\mid x)}\bigg\}.
  \end{multline*}
  With some algebra, the score restriction can be expressed as a conditional expectation restriction:
  \begin{multline}
    \label{eqn:restriction-expectation}
    \E\bigg[h_{1}(\tilde{T},\tilde{J},A,X)\frac{\d \tilde{M}_{11}(t\mid A,X)}{G_1(t\!-\!\mid A,X)}\,\bigg\vert\,A=0,X,D=1\bigg] \\
    = \E\bigg[h_{0}(\tilde{T},\tilde{J},X)\frac{\d \tilde{M}_{01}(t\mid X)}{G_0(t\!-\!\mid X)}\,\bigg\vert\,X,D=0\bigg].
  \end{multline}
  For the rest of the components in the observed data density, we also choose appropriate perturbation functions, or score functions, such that \(\d P(\tilde{T},\tilde{J},A,X,D;\varepsilon)\) equals \(\d P(\tilde{T},\tilde{J},A,X,D)\) when \(\varepsilon=0\).
  The closed linear subspace of all possible choices of these perturbation functions is the tangent space of the model \(\mathcal{P}\) at \(P\), which is
  \[
    \dot{\mathcal{P}} = \dot{\mathcal{P}}_{1} \oplus \dot{\mathcal{P}}_{2} \oplus \dot{\mathcal{P}}_{3} \oplus \dot{\mathcal{P}}_{4},
  \]
  where \(\mathcal{V}_1\oplus\mathcal{V}_2\) denotes the direct sum of vector spaces \(\mathcal{V}_1\) and \(\mathcal{V}_2\),
  \begin{align*}
    \dot{\mathcal{P}}_{1} &= \big\{Dh_{1}(\tilde{T},\tilde{J},A,X)+(1-D)h_{0}(\tilde{T},\tilde{J},X):\E\{h_{1}(\tilde{T},\tilde{J},A,X)\mid A,X,D=1\}=0, \\
           &\hphantom{=\big\{}\quad \E\{h_{0}(\tilde{T},\tilde{J},X)\mid X,D=0\}=0,h_{1}(\tilde{T},\tilde{J},A,X)\text{ and }h_{0}(\tilde{T},\tilde{J}, X)\text{ satisfy \eqref{eqn:restriction-expectation}}\big\}, \\
    \dot{\mathcal{P}}_{2} &= \big\{Dh_1(A,X):\E\{h_1(A,X)\mid X,D=1\}=0\big\}, \\
    \dot{\mathcal{P}}_{3} &= \big\{h(D,X):\E\{h(D,X)\mid X\}=0\big\}, \\
    \dot{\mathcal{P}}_{4} &= \big\{h(X):\E\{h(X)\}=0\big\}.
  \end{align*}
  The decomposition of the tangent space follows from the product structure of the observed data likelihood.
  Differentiating the target parameter along some submodel \(\{P_{\varepsilon}\}\) with score function
  \[
    h(O)=Dh_{1}(\tilde{T},\tilde{J},A,X)+(1-D)h_{0}(\tilde{T},\tilde{J},X)+Dh(A,X)+h(D,X)+h(X),
  \]
  we have
  \begin{align}
    \MoveEqLeft \frac{\d }{\d\varepsilon}\theta_{1}(0;\varepsilon)\bigg\vert_{\varepsilon=0} \nonumber\\
    &= \frac{\d }{\d\varepsilon}\int_{\mathcal{X}}\int_{0}^{\tau}\d\tilde{\Alpha}_{11}(t\mid 0,x;\varepsilon)\tilde{S}_1(t\!-\!\mid 0,x;\varepsilon)\d P(x\mid D=1;\varepsilon)\bigg\vert_{\varepsilon=0} \nonumber\\
    &= \int_{\mathcal{X}}\int_{0}^{\tau}\frac{\d }{\d\varepsilon}\d\tilde{\Alpha}_{11}(t\mid 0,x;\varepsilon)\bigg\vert_{\varepsilon=0}\tilde{S}_1(t\!-\!\mid 0,x)\d P(x\mid D=1) \label{eqn:derivative-1}\\
    &\hphantom{==}\begin{multlined}[b][0.8\textwidth]
    \quad  -\int_{\mathcal{X}}\int_{0}^{\tau}\int_{0}^{t\!-\!}\frac{\dfrac{\d }{\d\varepsilon}\d\tilde{\Alpha}_{11}(s\mid 0,x;\varepsilon)\bigg\vert_{\varepsilon=0}+\dfrac{\d}{\d\varepsilon}\d\tilde{\Alpha}_{12}(s\mid 0,x;\varepsilon)\bigg\vert_{\varepsilon=0}}{1-\triangle\tilde{\Alpha}_{11}(s\mid 0,x)-\triangle\tilde{\Alpha}_{12}(s\mid 0,x)}\\
    \d\tilde{F}_{11}(t\mid 0,x)\d P(x\mid D=1)
    \end{multlined}\label{eqn:derivative-2}\\
    &\hphantom{=}\quad +\int_{\mathcal{X}}\tilde{F}_{11}(\tau\mid 0,x)\frac{\d}{\d\varepsilon}\d P(x\mid D=1;\varepsilon)\bigg\vert_{\varepsilon=0}\label{eqn:derivative-3},
  \end{align}
  where the outermost integral is over the set \(\mathcal{X}= \mathcal{X}_{1}\cup \mathcal{X}_{0}\).

  We proceed by analyzing the terms separately.
  First, the term \eqref{eqn:derivative-1} is
  \begin{align*}
    \MoveEqLeft\int_{\mathcal{X}}\int_{0}^{\tau}h_{1}(t,1,0,x)\d\tilde{\Alpha}_{11}(t\mid 0,x)\tilde{S}_1(t\!-\!\mid 0,x)\d P(x\mid D=1)\\
      &-\begin{multlined}[t][.9\textwidth]
      \int_{\mathcal{X}}\int_{0}^{\tau}\int_{u\in[t,\infty)}\sum_{k\in\{0,1,2\}}h_{1}(u,k,0,x)\frac{\d Q_{1k}(u\mid 0,x)}{G_1(t\!-\!\mid 0,x)}\\
      \d\tilde{\Alpha}_{11}(t\mid 0,x)\tilde{S}_1(t\!-\!\mid 0,x)\d P(x\mid D=1)
      \end{multlined}\\
      &= \begin{multlined}[t][.9\textwidth]
      \int_{\mathcal{X}}E\bigg[h_{1}(\tilde{T},\tilde{J},A,X)\int_{0}^{\tau}\frac{\tilde{S}_1(t\!-\!\mid A,X)}{G_1(t\!-\!\mid A,X)}\d\tilde{M}_{11}(t\mid A,X)\,\bigg\vert\,A=0,X=x,D=1\bigg]\\
      \d P(x\mid D=1).
      \end{multlined}
  \end{align*}
  The term \eqref{eqn:derivative-2} is a sum of two terms, which for \(k\in\{1,2\}\) can be seen to be
  \begin{align*}
    \MoveEqLeft -\int_{\mathcal{X}}\int_{0}^{\tau}\int_{s\in(0,t)}\frac{\dfrac{\d }{\d\varepsilon}\d\tilde{\Alpha}_{1k}(s\mid 0,x;\varepsilon)\bigg\vert_{\varepsilon=0}}{1-\triangle\tilde{\Alpha}_{11}(s\mid 0,x)-\triangle\tilde{\Alpha}_{12}(s\mid 0,x)}\d\tilde{F}_{11}(t\mid 0,x)\d P(x\mid D=1)\\
    &= -\int_{\mathcal{X}}\int_{0}^{\tau}\int_{s\in(0,t)}\frac{h_{1}(s,k,0,x)\d\tilde{\Alpha}_{1k}(s\mid 0,x)}{1-\triangle\tilde{\Alpha}_{11}(s\mid 0,x)-\triangle\tilde{\Alpha}_{12}(s\mid 0,x)}\d\tilde{F}_{11}(t\mid 0,x)\d P(x\mid D=1)\\
    &\hphantom{==}\begin{multlined}[t][0.9\textwidth]
    \quad +\int_{\mathcal{X}}\int_{0}^{\tau}\int_{s\in(0,t)}\frac{\big\{\int_{u\in[s,\infty)}\sum_{j\in\{0,1,2\}}h_{1}(u,j,0,x)\d Q_{1j}(u\mid 0,x)\big\}}{G_1(s\!-\!\mid 0,x)\{1-\triangle\tilde{\Alpha}_{11}(s\mid 0,x)-\triangle\tilde{\Alpha}_{12}(s\mid 0,x)\}}\\
    \d\tilde{\Alpha}_{1k}(s\mid 0,x)\d\tilde{F}_{11}(t\mid 0,x)\d P(x\mid D=1)
    \end{multlined}\\
    &= -\int_{\mathcal{X}}\int_{s\in(0,\tau)}\int_{t\in(s,\tau]}\d\tilde{F}_{11}(t\mid 0,x)\frac{h_{1}(s,k,0,x)\d\tilde{\Alpha}_{1k}(s\mid 0,x)}{1-\triangle\tilde{\Alpha}_{11}(s\mid 0,x)-\triangle\tilde{\Alpha}_{12}(s\mid 0,x)}\d P(x\mid D=1)\\
    &\phantom{=}\ \begin{multlined}[t][0.9\textwidth]
    +\int_{\mathcal{X}}\int_{s\in(0,\tau)}\int_{t\in(s,\tau]}\d\tilde{F}_{11}(t\mid 0,x)\\
    \frac{\big\{\int_{u\in[s,\infty)}\sum_{j\in\{0,1,2\}}h_{1}(u,j,0,x)\d Q_{1j}(u\mid 0,x)\big\}}{G_1(s\!-\!\mid 0,x)\{1-\triangle\tilde{\Alpha}_{11}(s\mid 0,x)-\triangle\tilde{\Alpha}_{12}(s\mid 0,x)\}}\\
    \d\tilde{\Alpha}_{1k}(s\mid 0,x)\d P(x\mid D=1)
    \end{multlined}\\
    &=\begin{multlined}[t][0.9\textwidth]
    -\int_{\mathcal{X}}\E\bigg[h_{1}(\tilde{T},\tilde{J},A,X)\int_{s\in(0,\tau)}\frac{\tilde{F}_{11}(\tau\mid A,X)-\tilde{F}_{11}(s\mid A,X)}{1-\triangle\tilde{\Alpha}_{11}(s\mid A,X)-\triangle\tilde{\Alpha}_{12}(s\mid A,X)}\\
    \frac{\d N_k(s)}{G_1(s\!-\!\mid A,X)}\biggm\vert A=0,X=x,D=1\bigg]\d P(x\mid D=1)
    \end{multlined}\\
    &\hphantom{==}\begin{multlined}[t][0.9\textwidth]
    \quad+\int_{\mathcal{X}}\E\bigg[h_1(\tilde{T},\tilde{J},A,X)\int_{s\in(0,\tau)}\frac{\tilde{F}_{11}(\tau\mid A,X)-\tilde{F}_{11}(s\mid A,X)}{1-\triangle\tilde{\Alpha}_{11}(s\mid A,X)-\triangle\tilde{\Alpha}_{12}(s\mid A,X)}\\
    \frac{\I(\tilde{T}\geq s)\d\tilde{\Alpha}_{1k}(s\mid A,X)}{G_1(s\!-\!\mid A,X)}\biggm\vert A=0,X=x,D=1\bigg]\d P(x\mid D=1)
    \end{multlined}\\
    &= \begin{multlined}[t][0.9\textwidth]
    -\int_{\mathcal{X}}E\bigg[h_{1}(\tilde{T},\tilde{J},A,X)\int_{0}^{\tau}\frac{\tilde{F}_{11}(\tau\mid A,X)-\tilde{F}_{11}(t\mid A,X)}{G_1(t\!-\!\mid A,X)\{1-\triangle\tilde{\Alpha}_{11}(t\mid A,X)-\triangle\tilde{\Alpha}_{12}(t\mid A,X)\}}\\
    \d\tilde{M}_{1k}(t\mid A,X)\,\bigg\vert\,A=0,X=x,D=1\bigg]\d P(x\mid D=1).
    \end{multlined}
  \end{align*}
  The last term \eqref{eqn:derivative-3} is
  \[
    \int_{\mathcal{X}} \{\tilde{F}_{11}(\tau\mid 0,x)-\theta_{1}(0)\}\{h(1,x)+h(x)\}\d P(x\mid D=1).
  \]
  Collecting the terms yields that
  \begin{align*}
    \MoveEqLeft\frac{\d }{\d\varepsilon}\theta_{1}(0;\varepsilon)\bigg\vert_{\varepsilon=0} \\
    &= \int_{\mathcal{X}}E\bigg[\int_{0}^{\tau}\bigg\{g_{11}(t\mid A,X)\frac{\d\tilde{M}_{11}(t\mid A,X)}{G_1(t\!-\!\mid A,X)}+g_{21}(t\mid A,X)\frac{\d\tilde{M}_{12}(t\mid A,X)}{G_1(t\!-\!\mid A,X)}\bigg\} \\
    &\hphantom{= \int_{\mathcal{X}}E\bigg[}\qquad h_{1}(\tilde{T},\tilde{J},A,X)\,\bigg\vert\,A=0,X=x,D=1\bigg]\d P(x\mid D=1) \\
    &\hphantom{=}\quad +\int_{\mathcal{X}} \{\tilde{F}_{11}(\tau\mid 0,x)-\theta_{1}(0)\}\{h(1,x)+h(x)\}\d P(x\mid D=1),\\
    \intertext{and by replacing integrals with expectations, we have}
    &= E\bigg[\frac{D(1-A)}{\alpha e_1(A\mid X)}h_{1}(\tilde{T},\tilde{J},A,X) \\
    &\hphantom{= E\bigg[}\qquad\int_{0}^{\tau}\bigg\{g_{11}(t\mid A,X)\frac{\d\tilde{M}_{11}(t\mid A,X)}{G_1(t\!-\!\mid A,X)}+g_{12}(t\mid A,X)\frac{\d\tilde{M}_{12}(t\mid A,X)}{G_1(t\!-\!\mid A,X)}\bigg\}\bigg] \\
    &\hphantom{=}\quad +E\bigg[\frac{D}{\alpha}\big\{\tilde{F}_{11}(\tau\mid 0,X)-\theta_{1}(0)\big\}\{h(D,X)+h(X)\}\bigg].
  \end{align*}
  In the following we show that the function \(\varphi_{1}(0)(O)\) displayed in \eqref{eqn:eif-discontinuity} is indeed a gradient of the parameter \(\theta_{1}(0)\) by verifying that
  \[
    \E\{\varphi_{1}(0)(O)h(O)\}=\frac{\d }{\d\varepsilon}\theta_{1}(0;\varepsilon)\bigg\vert_{\varepsilon=0}
  \]
  for the score \(h(O)\) of an arbitrary submodel \(\{P_{\varepsilon}\}\subset\mathcal{P}\).
  The inner product
  \begin{align*}
    \MoveEqLeft\E\{\varphi_{1}(0)(O)h(O)\} &\\
    &= E\bigg\{\frac{(1-A)\pi(X)}{\alpha}\int_{0}^{\tau}\frac{Dw_1(t\mid X)}{G_1(t\!-\!\mid A,X)}g_{11}(t\mid A,X)\d\tilde{M}_{11}(t\mid A,X)h_{1}(\tilde{T},\tilde{J},A,X)\bigg\}\\
    &\hphantom{=}\quad +E\bigg\{\frac{\pi(X)}{\alpha}\int_{0}^{\tau}\frac{(1-D)w_0(t\mid X)}{G_0(t\!-\!\mid X)}g_{11}(t\mid 0,X)\d\tilde{M}_{01}(t\mid X)h_{0}(\tilde{T},\tilde{J},X)\bigg\}\\
    &\hphantom{=}\quad +E\bigg\{\frac{(1-A)}{\alpha e_1(A\mid X)}\int_{0}^{\tau}\frac{D}{G_1(t\!-\!\mid A,X)}g_{21}(t\mid A,X)\d\tilde{M}_{12}(t\mid A,X)h_{1}(\tilde{T},\tilde{J},A,X)\bigg\} \\
    &\hphantom{=}\quad +E\bigg[\frac{D}{\alpha}\big\{\tilde{F}_{11}(\tau\mid 0,X)-\theta_{1}(0)\big\}\{h(D,X)+h(X)\}\bigg].
  \end{align*}
  The first two terms of the right hand side of the equation can be simplified by \eqref{eqn:restriction-expectation}, so that they sum up to
  \begin{align*}
    \MoveEqLeft E\bigg\{\frac{D(1-A)\pi(X)}{\alpha}\int_{0}^{\tau}w_1(t\mid X)g_{11}(t\mid A,X)\frac{\d\tilde{M}_{11}(t\mid A,X)}{G_1(t\!-\!\mid A,X)}h_{1}(\tilde{T},\tilde{J},A,X)\bigg\} &\\
                                                                                                                                                                                            &\quad+E\bigg[\frac{(1-D)\pi(X)}{\alpha}E\bigg\{\int_{0}^{\tau}w_0(t\mid X)g_{11}(t\mid A,X)\frac{\d\tilde{M}_{11}(t\mid A,X)}{G_1(t\!-\!\mid A,X)}\\
    &\quad\hphantom{+E\bigg[\frac{(1-D)\pi(X)}{\alpha}E\bigg\{\int_{0}^{\tau}}\qquad h_{1}(\tilde{T},\tilde{J},A,X)\,\bigg\vert\, A=0,X,D=1\bigg\}\bigg]\\
    &= E\bigg\{\frac{\pi(X)}{\alpha}E\bigg(\int_{0}^{\tau}[\pi(X)e_1(A\mid X)w_1(t\mid X)+\{1-\pi(X)\}w_0(t\mid X)] \\
    &\hphantom{= E\bigg\{\frac{\pi(X)}{\alpha}E\bigg(\int_{0}^{\tau}}\qquad g_{11}(t\mid A,X)\frac{\d\tilde{M}_{11}(t\mid A,X)}{G_1(t\!-\!\mid A,X)}h_{1}(\tilde{T},\tilde{J},A,X)\,\bigg\vert\,A=0,X,D=1\bigg)\bigg\}\\
    &= E\bigg\{\frac{D(1-A)}{\alpha e_1(A\mid X)}h_{1}(\tilde{T},\tilde{J},A,X)\int_{0}^{\tau}g_{11}(t\mid A,X)\frac{\d\tilde{M}_{11}(t\mid A,X)}{G_1(t\!-\!\mid A,X)}\bigg\},
  \end{align*}
  where in the last step we used the identity
  \[
  \pi(x)e_1(0\mid x)w_1(t\mid x)+\{1-\pi(x)\}w_0(t\mid x)=1.
  \]
  This can be established by direct calculation:
  \begin{align*}
  \MoveEqLeft \pi(x)e_1(0\mid x)w_1(t\mid x)+\{1-\pi(x)\}w_0(t\mid x) \\
  &= \frac{1}{w_{\dott}(t\mid x)}\bigg[\{1-\triangle\tilde{\Alpha}_{11}(t\mid 0,x)\}g_{11}(t\mid 0,x)\bigg\{\frac{1-\pi(x)}{G_1(t\!-\!\mid 0,x)}+\frac{\pi(x)e_1(0\mid x)}{G_0(t\!-\!\mid x)}\bigg\}\bigg]\\
  &= 1.
  \end{align*}
    We have established that \(\varphi_{1}(0)(O)\) is indeed a gradient of \(\theta_{1}(0)\).
    
  In order to show that \(\varphi_1(0)\) is the efficient influence function, it remains to ascertain \(\varphi_{1}(0)(O)\) itself is a score of the model at \(P\).
  To proceed further, we decompose \(\varphi_{1}(0)(O)\) into the following functions:
  \begin{align*}
    h_{1}^{*}(\tilde{T},\tilde{J},A,X) &= \frac{\pi(X)(1-A)}{\alpha}\int_{0}^{\tau}\frac{w_1(t\mid X)g_{11}(t\mid A,X)}{G_1(t\!-\!\mid A,X)}\d\tilde{M}_{11}(t\mid A,X)\\
    &\hphantom{=}\quad + \frac{(1-A)}{\alpha e_1(A\mid X)}\int_{0}^{\tau}\frac{g_{21}(t\mid A,X)}{G_1(t\!-\!\mid A,X)}\d\tilde{M}_{12}(t\mid A,X),\\
    h_{0}^{*}(\tilde{T},\tilde{J},X) &= \frac{\pi(X)}{\alpha}\int_{0}^{\tau}\frac{w_0(t\mid X)g_{11}(t\mid 0,X)}{G_0(t\!-\!\mid X)}\d\tilde{M}_{01}(t\mid X),\\
    h^{*}(D,X) &= \frac{D-\pi(X)}{\alpha}\big\{\tilde{F}_{11}(\tau\mid 0,X)-\theta_{1}(0)\big\},\\
    h^{*}(X) &= \frac{\pi(X)}{\alpha}\big\{\tilde{F}_{11}(\tau\mid 0,X)-\theta_{1}(0)\big\},
  \end{align*}
  such that
  \[
  \varphi_{1}(0)(O) = Dh_{1}^{*}(\tilde{T},\tilde{J},A,X) + (1-D)h_{0}^{*}(\tilde{T},\tilde{J},X) + h^{*}(D,X)+h^{*}(X).
  \]
  It is trivial to show that \(h^{*}(D,X)\) and \(h^{*}(X)\) are valid scores by noting that \(E\{h^{*}(D,X)\mid X\}=0\) and \(E\{h^{*}(X)\}=0\), so \(h^{*}(D,X)\in\dot{\mathcal{P}}_{3}\) and \(h^{*}(X)\in\dot{\mathcal{P}}_{4}\).
  Since \(h_{1}^{*}(\tilde{T},\tilde{J},A,X)\) and \(h_{0}^{*}(\tilde{T},\tilde{J},X)\) are zero-mean martingales adapted to the filtration \((\mathcal{F}_{t})\) at time \(\tau\), it is also clear that
  \[
  E\{h_{1}^{*}(\tilde{T},\tilde{J},A,X)\mid A,X,D=1\}=0,\quad E\{h_{0}^{*}(\tilde{T},\tilde{J},X)\mid X,D=0\}=0.
  \]

  We will now verify that \(h_0^*\) and \(h_1^*\) fulfill \eqref{eqn:restriction-expectation} in the integral form.
  We do so by computing both sides of \eqref{eqn:restriction-expectation} substituting \(h_1^*\) for \(h_1\) and \(h_0^*\) for \(h_0\).
  For any \(0< t\leq \tau\), we calculate the conditional expectation
  \begin{align}
    \MoveEqLeft E\bigg\{h_{1}^{*}(\tilde{T},\tilde{J},A,X)\int_{0}^{t}\frac{\d \tilde{M}_{11}(s\mid A,X)}{G_1(s\!-\!\mid A,X)}\,\bigg\vert\,A=0,X=x,D=1\bigg\} &\nonumber\\
    &= \frac{1}{\alpha}E\bigg[\int_{0}^{t}\frac{\d \tilde{M}_{11}(s\mid A,X)}{G_1(s\!-\!\mid A,X)}\int_{0}^{t}\bigg\{\pi(X)\frac{w_1(s\mid X)g_{11}(s\mid A,X)}{G_1(s\!-\!\mid A,X)}\d\tilde{M}_{11}(s\mid A,X)& \nonumber\\
    &\qquad + \frac{1}{e_1(A\mid X)}\frac{g_{21}(s\mid A,X)}{G_1(s\!-\!\mid A,X)}\d\tilde{M}_{12}(s\mid A,X)\bigg\}\,\bigg\vert\,A=0,X=x,D=1\bigg],& \nonumber\\
    \intertext{and by the property of the martingale product and the corresponding predictable covariation process \citepsuppmat[Theorem 2.3.4]{fleming1991counting}, the term above is}
    &= \frac{1}{\alpha}E\bigg[\bigg\langle\int_{0}^{t}\frac{\d \tilde{M}_{11}(s\mid A,X)}{G_1(s\!-\!\mid A,X)},\int_{0}^{t}\bigg\{\pi(X)\frac{w_1(s\mid X)g_{11}(s\mid A,X)}{G_1(s\!-\!\mid A,X)}\d\tilde{M}_{11}(s\mid A,X) &\nonumber\\
    &\qquad \frac{1}{e_1(A\mid X)}\frac{g_{21}(s\mid A,X)}{G_1(s\!-\!\mid A,X)}\d\tilde{M}_{12}(s\mid A,X)\bigg\}\bigg\rangle\,\bigg\vert\,A=0,X=x,D=1\bigg] \nonumber\\
    &= \frac{1}{\alpha}E\bigg\{\int_{0}^{t}\frac{w_1(s\mid X)g_{11}(s\mid A,X)}{\{G_1(s\!-\!\mid A,X)\}^{2}}\pi(X)\d\langle\tilde{M}_{11}\rangle(s\mid A,X)\bigg\vert\,A=0,X=x,D=1\bigg\}& \nonumber\\
    &\hphantom{=}\quad +\frac{1}{\alpha}E\bigg\{\int_{0}^{t}\frac{g_{21}(s\mid A,X)}{\{G_1(s\!-\!\mid A,X)\}^{2}}\frac{\d\langle\tilde{M}_{11},\tilde{M}_{12}\rangle(s\mid A,X)}{e_1(A\mid X)}\bigg\vert\,A=0,X=x,D=1\bigg\},& \nonumber\\
     \intertext{and evaluating the predictable (co-)variation processes with the help of Theorem 2.6.1 of \citetsuppmat{fleming1991counting}, finally gives}
    &= \frac{1}{\alpha}\int_{0}^{t}\frac{\d\tilde{\Alpha}_{11}(s\mid 0,x)}{G_1(s\!-\!\mid 0,x)}&\nonumber\\
    &\qquad \bigg\{\pi(x)w_1(s\mid x)g_{11}(s\mid 0,x)\{1-\triangle\tilde{\Alpha}_{11}(s\mid 0,x)\}-\frac{g_{21}(s\mid 0,x)}{e_1(0\mid x)}\triangle\tilde{\Alpha}_{12}(s\mid 0,x)\bigg\}.\label{eqn:restriction-check-1}
  \end{align}
  On the other hand, 
  \begin{flalign}
    \MoveEqLeft E\bigg\{h_{0}^{*}(\tilde{T},\tilde{J},X)\int_{0}^{t}\frac{\d \tilde{M}_{01}(s\mid X)}{G_0(s\!-\!\mid X)}\,\bigg\vert\,X=x,D=0\bigg\} &\nonumber\\
    &= \frac{1}{\alpha}E\bigg\{\int_{0}^{t}\frac{\d \tilde{M}_{01}(s\mid X)}{G_0(s\!-\!\mid X)}\int_{0}^{t}\pi(X)\frac{w_0(s\mid X)g_{11}(s\mid 0,X)}{G_0(s\!-\!\mid X)}\d\tilde{M}_{01}(s\mid X)\,\bigg\vert\,X=x,D=0\bigg\}& \nonumber\\
    &= \frac{1}{\alpha}E\bigg\{\int_{0}^{t}\pi(X)\frac{w_0(s\mid X)g_{11}(s\mid 0,X)}{\{G_0(s\!-\!\mid X)\}^{2}}\d\langle\tilde{M}_{01}\rangle(s\mid X)\,\bigg\vert\,X=x,D=0\bigg\}& \nonumber\\
    &= \frac{1}{\alpha}\int_{0}^{t}\frac{\d\tilde{\Alpha}_{11}(s\mid 0,x)}{G_0(s\!-\!\mid X)}\pi(x)w_0(s\mid x)g_{11}(s\mid 0,x)\{1-\triangle\tilde{\Alpha}_{11}(s\mid 0,x)\}.& \label{eqn:restriction-check-2}
  \end{flalign}

    The restriction \eqref{eqn:restriction-expectation} holds if \eqref{eqn:restriction-check-1} and \eqref{eqn:restriction-check-2} are equal.
    Therefore, we only need to show that for \(t\in(0,\tau]\),
  \begin{multline}
    \label{eqn:restriction-check}
    \pi(x)\{1-\triangle\tilde{\Alpha}_{11}(t\mid 0,x)\}g_{11}(t\mid 0,x)\bigg\{\frac{w_1(t\mid x)}{G_1(t\!-\!\mid 0,x)}-\frac{w_0(t\mid x)}{G_0(t\!-\!\mid x)}\bigg\} \\
    = \frac{\triangle\tilde{\Alpha}_{12}(t\mid 0,x)}{G_1(t\!-\!\mid 0,x)e_1(0\mid x)}g_{21}(t\mid 0,x).
  \end{multline}
  The difference in the braces is
  \begin{align*}
    \MoveEqLeft\frac{w_1(t\mid x)}{G_1(t\!-\!\mid 0,x)}-\frac{w_0(t\mid x)}{G_0(t\!-\!\mid x)}\\
    &= \frac{1}{w_{\dott}(t\mid x)}\frac{\triangle\tilde{\Alpha}_{12}(t\mid a,x)}{G_1(t\!-\!\mid 0,x)}g_{21}(t\mid 0,x)\bigg\{\frac{1-\pi(x)}{\pi(x)e_1(0\mid x)G_1(t\!-\!\mid 0,x)}+\frac{1}{G_0(t\!-\!\mid x)}\bigg\} \\
    &= \frac{1}{\pi(x)}\frac{1}{w_{\dott}(t\mid x)}\bigg\{\frac{1-\pi(x)}{G_1(t\!-\!\mid 0,x)}+\frac{\pi(x)e_1(0\mid x)}{G_0(t\!-\!\mid x)}\bigg\}\\
    &\hphantom{=}\qquad\frac{\triangle\tilde{\Alpha}_{12}(t\mid 0,x)}{G_1(t\!-\!\mid 0,x)e_1(0\mid x)}g_{21}(t\mid 0,x),\\
    \intertext{which by inserting the definition of \(w_{\dott}(t\mid x)\) is simply}
    &= \frac{1}{\pi(x)\{1-\triangle\tilde{\Alpha}_{11}(t\mid 0,x)\}g_{11}(t\mid 0,x)}\frac{\triangle\tilde{\Alpha}_{12}(t\mid 0,x)}{G_1(t\!-\!\mid 0,x)e_1(0\mid x)}g_{21}(t\mid 0,x),
  \end{align*}
  and therefore \eqref{eqn:restriction-check} holds.
  
  That the terms \eqref{eqn:restriction-check-1} and \eqref{eqn:restriction-check-2} are equal shows that \(Dh_{1}^{*}(\tilde{T},\tilde{J},A,X)+(1-D)h_{0}^{*}(\tilde{T},\tilde{J},X)\in\dot{\mathcal{P}}_{1}\).
  Therefore, \(\varphi_{1}(0)(O)\) belongs to the tangent space \(\dot{\mathcal{P}}\) of the model \(\mathcal{P}\) at \(P\).
  Hence, it is the efficient influence function of \(\theta_{1}(0)\).

  The proof of \(\varphi_2(0)(O)\) being the efficient influence function of \(\theta_2(0)\) at \(P\in\mathcal{P}\) can be obtained by slightly modifying the derivations above and is thus omitted.
\end{proof}

\begin{remark}
\label{rem:if}
Inspecting the proof of Proposition~\ref{ppn:eif-discontinuity}, we find two more influence functions of the parameter \(\theta_{1}(0)\) in the model \(\mathcal{P}\).
The first influence function is obtained by replacing both \(w_1(t\mid x)\) and \(w_0(t\mid x)\) with \([\{1-\pi(x)\}+\pi(x)e_1(0\mid x)]^{-1}\) in \(\varphi_1(0)\).
The second one is obtained by replacing \(w_1(t\mid x)\) with \(\{\pi(x)e_1(0\mid x)\}^{-1}\) and \(w_0(t\mid x)\) with \(0\) in \(\varphi_1(0)\).
In this case, the resulting influence function is identical to the one proposed in \citetsuppmat{rytgaard2023estimation} but restricted to the RCT population.
\end{remark}

Lemma~\ref{lem:eif} is a simplification of the expressions of the efficient influence functions in Proposition~\ref{ppn:eif-discontinuity} under Assumptions~\ref{asn:censoring} and \ref{asn:discontinuity}.

\begin{proof}[Proof of Lemma~\ref{lem:eif}]
  Under Assumptions~\ref{asn:censoring} and \ref{asn:discontinuity}, we rewrite the following quantities from the main text using only the observed data distribution:
  \begin{align*}
    H_{\dott}(t\mid x) &= \pi(x)e_1(0\mid x)G_1(t\mid 0,x)+\{1-\pi(x)\}G_0(t\mid x),\\
    H_{1}(t\mid a,x) &= e_1(a\mid x)G_1(t\mid a,x),\\
    W_{k1}(t\mid a,x) &= \I(k=1)\tilde{S}_{1}(t\mid a,x)-\frac{\tilde{F}_{11}(\tau\mid a,x)-\tilde{F}_{11}(t\mid a,x)}{1-\triangle\tilde{\Alpha}_{k1}(t\mid 0,X)}.
  \end{align*}
  Let
  \begin{align*}
    \varphi_{1}^{*}(0)(O) &= \frac{(1-A)\pi(X)}{\alpha}\int_{0}^{\tau}\frac{g_{11}(t\mid A,X)}{H_{\dott}(t\!-\!\mid X)}\d\tilde{M}_{11}(t\mid A,X) \nonumber\\
      &\hphantom{=}\quad + \frac{D(1-A)}{\alpha}\int_{0}^{\tau}\frac{g_{21}(t\mid A,X)}{H_1(t\!-\!\mid A,X)}\d\tilde{M}_{12}(t\mid A,X)+ \frac{D}{\alpha}\big\{\tilde{F}_{11}(\tau\mid 0,X)-\theta_{1}(0)\big\}\\
   \varphi_{1}^{\dagger}(0)(O) &= \frac{(1-A)\pi(X)}{\alpha}\int_{0}^{\tau}\frac{W_{11}(t\mid A,X)}{H_{\dott}(t\!-\!\mid X)}\d\tilde{M}_{11}(t\mid A,X) \nonumber\\
    &\hphantom{=}\quad + \frac{D(1-A)}{\alpha}\int_{0}^{\tau}\frac{W_{21}(t\mid A,X)}{H_1(t\!-\!\mid A,X)}\d\tilde{M}_{12}(t\mid A,X)+ \frac{D}{\alpha}\big\{\tilde{F}_{11}(\tau\mid 0,X)-\theta_{1}(0)\big\}.
  \end{align*}
  The function \(\varphi_{1}^{\dagger}(0)(O)\) is the expression appearing on the right-hand side of the statement of the efficient influence function in the lemma.
  To show the lemma, we need to check for \(\varphi_1(0)(O)\) from Proposition~\ref{ppn:eif-discontinuity} that
  \[
    \E\big\{\varphi_1(0)(O)-\varphi_1^*(0)(O)\big\}^2=0,\quad \E\big\{\varphi_1^{*}(0)(O)-\varphi_1^\dagger(0)(O)\big\}^2=0,
  \]
  so that \(\varphi_1(0)(O)=\varphi_1^\dagger(0)(O)\) \(P\)-almost surely.
  Noting that
  \((1-D)\d\tilde{M}_{11}(t\mid 0,X)=(1-D)\d\tilde{M}_{01}(t\mid X)\) and
  \[
    \frac{g_{11}(t\mid 0,X)}{w_{\dott}(t\mid X)}\frac{1-\triangle\tilde{\Alpha}_{11}(t\mid 0,X)}{G_1(t\!-\!\mid 0,X)G_0(t\!-\!\mid X)}=\frac{1}{H_{\dott}(t\!-\!\mid X)},
  \]
  the \(L_2(P)\)-norm of the difference \(\E\big\{\varphi_1(0)(O)-\varphi_1^*(0)(O)\big\}^2\) is the expectation of the sum of two squared martingales
  \begin{align*}
    \MoveEqLeft \E\big\{\varphi_1(0)(O)-\varphi_1^*(0)(O)\big\}^2 \\
    &= \begin{multlined}[t][.9\textwidth]
    \E\bigg[\frac{D(1-A)}{\alpha}\I\{\pi(X)>0\}\int_0^\tau\frac{1}{w_{\dott}(t\mid X)}\frac{\triangle\tilde\Alpha_{12}(t\mid 0,X)}{G_1^2(t\!-\!\mid 0,X)}\frac{1-\pi(X)}{e_1(0\mid X)}\\
    (g_{11}g_{21})(t\mid 0,X)\d\tilde{M}_{11}(t\mid 0,X)\bigg]^2
    \end{multlined} \\
    &\hphantom{=}\quad +
    \begin{multlined}[t][.85\textwidth]
    \E\bigg\{\frac{1-D}{\alpha}\pi(X)\int_0^\tau\frac{1}{w_{\dott}(t\mid X)}\frac{\triangle\tilde\Alpha_{12}(t\mid 0,X)}{G_1(t\!-\!\mid 0,X)G_0(t\!-\!\mid X)}\\
    (g_{11}g_{21})(t\mid 0,X)\d\tilde{M}_{01}(t\mid X)\bigg\}^2
    \end{multlined}\\
    &= \begin{multlined}[t][.9\textwidth]
    \E\bigg[\frac{D(1-A)}{\alpha^2}\I\{\pi(X)>0\}\int_0^\tau\frac{1}{w_{\dott}^2(t\mid X)}\frac{\triangle\tilde\Alpha_{12}^2(t\mid 0,X)}{G_1^4(t\!-\!\mid 0,X)}\frac{\{1-\pi(X)\}^2}{e_1^2(0\mid X)}\\
    (g_{11}g_{21})^2(t\mid 0,X)\d\langle\tilde{M}_{11}\rangle(t\mid 0,X)\bigg]
    \end{multlined}\\
    &\hphantom{=}\quad +\begin{multlined}[t][.85\textwidth]
    \E\bigg\{\frac{1-D}{\alpha^2}\pi^2(X)\int_0^\tau\frac{1}{w_{\dott}^2(t\mid X)}\frac{\triangle\tilde\Alpha_{12}^2(t\mid 0,X)}{G_1^2(t\!-\!\mid 0,X)G_0^2(t\!-\!\mid X)}\\
    (g_{11}g_{21})^2(t\mid 0,X)\d\langle\tilde{M}_{01}\rangle(t\mid X)\bigg\}
    \end{multlined}\\
    &= \begin{multlined}[t][.9\textwidth]
    \E\bigg[\frac{D(1-A)}{\alpha^2}\I\{\pi(X)>0\}\int_0^\tau\frac{1}{w_{\dott}^2(t\mid X)}\frac{\triangle\tilde\Alpha_{12}^2(t\mid 0,X)}{G_1^4(t\!-\!\mid 0,X)}\frac{\{1-\pi(X)\}^2}{e_1^2(0\mid X)}\\
    (g_{11}g_{21})^2(t\mid 0,X)\I(\tilde{T}\geq t)\{1-\triangle\tilde{\Alpha}_{11}(t\mid 0,X)\}\d\tilde{\Alpha}_{11}(t\mid 0,X)\bigg]
    \end{multlined}\\
    &\hphantom{=}\quad +\begin{multlined}[t][.85\textwidth]
    \E\bigg\{\frac{1-D}{\alpha^2}\pi^2(X)\int_0^\tau\frac{1}{w_{\dott}^2(t\mid X)}\frac{\triangle\tilde\Alpha_{12}^2(t\mid 0,X)}{G_1^2(t\!-\!\mid 0,X)G_0^2(t\!-\!\mid X)}\\
    (g_{11}g_{21})^2(t\mid 0,X)\I(\tilde{T}\geq t)\{1-\triangle\tilde\Alpha_{01}(t\mid X)\}\d\tilde\Alpha_{01}(t\mid X)\bigg\}
    \end{multlined}\\
    &=0.
  \end{align*}
  The last equality is a direct consequence of Assumption~\ref{asn:discontinuity}, that is, \(\triangle\tilde{\Alpha}_{11}(t\mid 0,x)\triangle\tilde{\Alpha}_{12}(t\mid 0,x)=0\) and \(\triangle\tilde{\Alpha}_{01}(t\mid x)\triangle\tilde{\Alpha}_{12}(t\mid 0,x)=0\) for any \(x\in\mathcal{X}_0\cap\mathcal{X}_1\).
  Similarly, we have
  \begin{align*}
    \MoveEqLeft \E\big\{\varphi_1^{*}(0)(O)-\varphi_1^\dagger(0)(O)\big\}^2 \\
    &= \E\bigg\{\frac{(1-A)\pi(X)}{\alpha}\int_{0}^{\tau}\frac{(g_{11}-W_{11})(t\mid A,X)}{H_{\dott}(t\!-\!\mid X)}\d\tilde{M}_{11}(t\mid A,X)\bigg\}^2\\
    &\hphantom{=}\quad +\E\bigg\{\frac{D(1-A)}{\alpha}\int_{0}^{\tau}\frac{(g_{21}-W_{21})(t\mid A,X)}{H_1(t\!-\!\mid A,X)}\d\tilde{M}_{12}(t\mid A,X)\bigg\}^2\\
    &= \E\bigg\{\frac{(1-A)\pi^2(X)}{\alpha^2}\int_{0}^{\tau}\frac{(g_{11}-W_{11})^2(t\mid A,X)}{H_{\dott}^2(t\!-\!\mid X)}\d\langle\tilde{M}_{11}\rangle(t\mid A,X)\bigg\}\\
    &\hphantom{=}\quad +\E\bigg\{\frac{D(1-A)}{\alpha^2}\int_{0}^{\tau}\frac{(g_{21}-W_{21})^2(t\mid A,X)}{H_1^2(t\!-\!\mid A,X)}\d\langle\tilde{M}_{12}\rangle(t\mid A,X)\bigg\}\\
    &= \E\bigg[\frac{(1-A)\pi^2(X)}{\alpha^2}\int_{0}^{\tau}\frac{\triangle\tilde{\Alpha}_{12}^2(t\mid 0,X)}{\big\{1-\triangle\tilde{\Alpha}_{11}(t\mid 0,X)\big\}^2\big\{1-\triangle(\tilde{\Alpha}_{11}+\tilde{\Alpha}_{12})(t\mid 0,X)\big\}^2}\\
    &\hphantom{=}\qquad \frac{\big\{\tilde{F}_{11}(\tau\mid 0,X)-\tilde{F}_{11}(t\mid 0,X)\big\}^2}{H_{\dott}^2(t\!-\!\mid X)}I(\tilde{T}\geq t)\big\{1-\triangle\tilde{\Alpha}_{11}(t\mid 0,X)\big\}\d\tilde{\Alpha}_{11}(t\mid 0,X)\bigg]\\
    &\hphantom{=}\quad + \E\bigg[\frac{D(1-A)}{\alpha^2}\int_{0}^{\tau}\frac{\triangle\tilde{\Alpha}_{11}^2(t\mid 0,X)}{\big\{1-\triangle\tilde{\Alpha}_{12}(t\mid 0,X)\big\}^2\big\{1-\triangle(\tilde{\Alpha}_{11}+\tilde{\Alpha}_{12})(t\mid 0,X)\big\}^2}\\
    &\hphantom{=}\qquad \frac{\big\{\tilde{F}_{11}(\tau\mid 0,X)-\tilde{F}_{11}(t\mid 0,X)\big\}^2}{H_{1}^2(t\!-\!\mid 0,X)}I(\tilde{T}\geq t)\big\{1-\triangle\tilde{\Alpha}_{12}(t\mid 0,X)\big\}\d\tilde{\Alpha}_{12}(t\mid 0,X)\bigg]\\
    &=0.
  \end{align*}
  The last equality follows from Assumption~\ref{asn:discontinuity}, \(\triangle\tilde{\Alpha}_{11}(t\mid 0,x)\triangle\tilde{\Alpha}_{12}(t\mid 0,x)=0\) for any \(x\in \mathcal{X}_1\).
  
  The equivalence for the efficient influence function \(\varphi_2(0)(O)\) can be argued analogously.
\end{proof}

\subsection{Proof of Corollary~\ref{cor:variance-reduction}}
  Let \(\tilde{\mathcal{P}}\) be the same as the model \(\mathcal{P}\) but the restriction \(\d\tilde{\Alpha}_{11}(t\mid 0,x)=\d\tilde{\Alpha}_{01}(t\mid x)\) is removed.
  The semiparametric efficiency bound of \(\theta_{1}(0)\) under \(P\in\tilde{\mathcal{P}}\) can be characterized by the variance of the efficient influence function
  \begin{align*}
    \tilde{\varphi}_{1}(0)(O) &= \frac{D}{\alpha}\frac{1-A}{e_1(A\mid X)}\int_{0}^{\tau}\frac{g_{11}(t\mid A,X)}{G_1(t\!-\!\mid A,X)}\d \tilde{M}_{11}(t\mid A,X) \\
                              &\hphantom{=}\quad + \frac{D}{\alpha}\frac{1-A}{e_1(A\mid X)}\int_{0}^{\tau}\frac{g_{21}(t\mid A,X)}{G_1(t\!-\!\mid A,X)}\d \tilde{M}_{12}(t\mid A,X) \\
                              &\hphantom{=}\quad + \frac{D}{\alpha}\big\{\tilde{F}_{11}(\tau\mid 0,X)-\theta_{1}(0)\big\}.
  \end{align*}
  See Remark~\ref{rem:if} for the justification of this claim.
  Then the variance of the difference in the efficient influence functions under these two models with respect to \(P\in\mathcal{P}\) is
  \begin{align*}
    \MoveEqLeft E\{\varphi_{1}(0)(O)-\tilde{\varphi}_{1}(0)(O)\}^{2}\\
    &=\begin{multlined}[t][0.9\textwidth]
    E\bigg\{\frac{\pi(X)}{\alpha}(1-A)\int_{0}^{\tau}\frac{W_{11}(t\mid A,X)}{H_{\dott}(t\!-\!\mid X)}\d \tilde{M}_{11}(t\mid A,X)\\
    -\frac{D}{\alpha}(1-A)\int_{0}^{\tau}\frac{W_{11}(t\mid A,X)}{H_1(t\!-\!\mid A,X)}\d \tilde{M}_{11}(t\mid A,X)\bigg\}^{2}
    \end{multlined}\\
    &= E\bigg[\frac{1-A}{\alpha}\int_{0}^{\tau}\bigg\{\frac{\pi(X)}{H_{\dott}(t\!-\!\mid X)}-\frac{D}{H_1(t\!-\!\mid A,X)}\bigg\}
    W_{11}(t\mid A,X)\d\tilde{M}_{11}(t\mid A,X)\bigg]^{2}\\
    &= E\bigg\langle\frac{1-A}{\alpha}\int_{0}^{\tau}\bigg\{\frac{\pi(X)}{H_{\dott}(t\!-\!\mid X)}-\frac{D}{H_1(t\!-\!\mid A,X)}\bigg\}
    W_{11}(t\mid A,X)\d\tilde{M}_{11}(t\mid A,X)\bigg\rangle\\
    &= E\bigg[\frac{1-A}{\alpha^{2}}\int_{0}^{\tau}\bigg\{\frac{\pi(X)}{H_{\dott}(t\!-\!\mid X)}-\frac{D}{H_1(t\!-\!\mid A,X)}\bigg\}^{2}
    \{W_{11}(t\mid A,X)\}^{2}\d\langle\tilde{M}_{11}\rangle(t\mid A,X)\bigg]\\
    &= \begin{multlined}[t][0.9\textwidth]
    E\bigg[\frac{1-A}{\alpha^{2}}\int_{0}^{\tau}\bigg\{\frac{\pi(X)}{H_{\dott}(t\!-\!\mid X)}-\frac{D}{H_1(t\!-\!\mid A,X)}\bigg\}^{2} \\
    \{W_{11}(t\mid A,X)\}^{2}\I(\tilde{T}\geq t)\{1-\triangle\tilde{\Alpha}_{\dott 1}(t\mid A,X)\}\d\tilde{\Alpha}_{\dott 1}(t\mid A,X)\bigg] 
    \end{multlined}\\
    &= \begin{multlined}[t][0.9\textwidth]
    E\bigg[\frac{1}{\alpha^{2}}\int_{0}^{\tau}\bigg\{\frac{\{\pi(X)\}^{2}}{H_{\dott}(t\!-\!\mid X)}+\frac{\pi(X)}{H_1(t\!-\!\mid 0,X)}-\frac{2\{\pi(X)\}^{2}}{H_{\dott}(t\!-\!\mid X)}\bigg\}\\
    \{W_{11}(t\mid 0,X)\}^{2}\{1-\triangle\tilde{\Alpha}_{\dott 1}(t\mid 0,X)\}\d\tilde{\Alpha}_{\dott 1}(t\mid 0,X)\bigg] 
    \end{multlined}\\
    &= \begin{multlined}[t][0.9\textwidth]
    E\bigg[\frac{\pi(X)}{\alpha^{2}}\int_{0}^{\tau}\bigg\{\frac{1}{H_1(t\!-\!\mid 0,X)}-\frac{\pi(X)}{H_{\dott}(t\!-\!\mid X)}\bigg\}\\
    \{W_{11}(t\mid 0,X)\}^{2}\{1-\triangle\tilde{\Alpha}_{\dott 1}(t\mid 0,X)\}\d\tilde{\Alpha}_{\dott 1}(t\mid 0,X)\bigg]
    \end{multlined}\\
    &= \begin{multlined}[t][0.9\textwidth]
    E\bigg[\frac{\pi(X)\{1-\pi(X)\}}{\alpha^{2}}\int_{0}^{\tau}\frac{(S_{0}S_0^c)(t\!-\!\mid X)}{H_1(t\!-\!\mid 0,X)H_{\dott}(t\!-\!\mid X)} \\
    \{W_{11}(t\mid 0,X)\}^{2}\{1-\triangle\tilde{\Alpha}_{\dott 1}(t\mid 0,X)\}\d\tilde{\Alpha}_{\dott 1}(t\mid 0,X)\bigg].
    \end{multlined}
  \end{align*}
  The expression in the statement of the corollary follows from Assumptions~\ref{asn:censoring} and \ref{asn:discontinuity}.

  \subsection{Proof of Theorem~\ref{thm:asymptotic}}

We first state the deferred assumptions in the statement of Theorem~\ref{thm:asymptotic}.
For any \(\Alpha,\Alpha^{*}\in\mathcal{A}\), let \(\Alpha\perp_{\triangle}\Alpha^*\) denote that \(\int_0^\tau\triangle{\Alpha}(t)\d{\Alpha}^*(t)=0\).

\begin{assumption}[Regularity conditions]
  \hfill
  \label{asn:regularity}
  \begin{enumerate}[nosep,label=(\roman*)]
  \item There exists a universal constant \(C> 1\) such that
  \begin{align*}
  &\hat\alpha\geq C^{-1}, \hat{e}_1(0\mid x)\geq C^{-1}, \bar{e}_1(0\mid x)\geq C^{-1},\\
  &(\Pi\hat{\Alpha}_{\dott 1})(\tau\mid 0,x)\geq C^{-1},(\Pi\bar{\Alpha}_{\dott 1})(\tau\mid 0,x)\geq C^{-1},\\
  &(\Pi\hat{\Alpha}_{12})(\tau\mid 0,x)\geq C^{-1}, 
  (\Pi\bar{\Alpha}_{12})(\tau\mid 0,x)\geq C^{-1},
  \end{align*}
  wherever \(\pi(x)>0\), and
  \begin{align*}
  & (\Pi\hat{\Alpha}_{02})(\tau\mid x)\geq C^{-1},
  (\Pi\bar{\Alpha}_{02})(\tau\mid x)\geq C^{-1}, \\
  &(\Pi\hat{\Alpha}_{0}^c)(\tau\mid x)\geq C^{-1},
  (\Pi\bar{\Alpha}_{0}^c)(\tau\mid x)\geq C^{-1},
  \end{align*}
  wherever \(\pi(x)\{1-\pi(x)\}>0\);
  \item \(\{x:\bar\pi(x)>0\}\subset\mathcal{X}_1\);
  \item For \(x\in\mathcal{X}_1\),
    \begin{align*}
      \big\{\hat\Alpha_{\dott 1}(t\mid 0,x),\Alpha_{\dott 1}(t\mid 0,x)\big\} &\perp_{\triangle} \big\{\hat\Alpha_{12}(t\mid 0,x),\Alpha_{12}(t\mid 0,x),\hat\Alpha_{02}(t\mid x),\Alpha_{02}(t\mid x)\big\},\\
    \big\{\hat\Alpha_{\dott 1}(t\mid 0,x),\bar\Alpha_{\dott 1}(t\mid 0,x)\big\} &\perp_{\triangle} \begin{multlined}[t][.512\textwidth]
    \big\{\hat\Alpha_{12}(t\mid 0,x),\bar\Alpha_{12}(t\mid 0,x),\bar\Alpha_{02}(t\mid x),\\
    \bar\Alpha_{1}^c(t\mid 0,x),\bar\Alpha_{0}^c(t\mid x)\big\},
    \end{multlined}\\
    \big\{\hat\Alpha_{12}(t\mid 0,x),\bar\Alpha_{12}(t\mid 0,x)\big\} & \perp_{\triangle} \big\{\bar\Alpha_{\dott 1}(t\mid 0,x),\bar\Alpha_{1}^c(t\mid 0,x)\big\};
    \end{align*}
  \item \(\hat\ell_1(0)\) and \(\ell_1(0)\) belong to some \(P\)-Donsker class.
  \end{enumerate}
\end{assumption}

\begin{assumption}[Rate conditions]
\label{asn:rate}
The following integrals converge sufficiently fast:
      \begin{multline}
P\bigg[\int_{0}^{\tau}\bigg\{\hat\pi(X)\frac{H^*_{\dott}}{\hat{H}^{*}_{\dott}}(t\!-\!\mid X)-\pi(X)\frac{\Pi\Alpha_{12}}{\Pi\hat{\Alpha}_{12}}(t\!-\!\mid 0,X)\bigg\} \\
    \hat{W}_{\dott 1}(t\mid 0,X)\big\{1-\triangle\hat\Alpha_{\dott 1}(t\mid 0, X)\big\}\d\bigg(\frac{\Pi\Alpha_{\dott 1}}{\Pi\hat{\Alpha}_{\dott 1}}\bigg)(t\mid 0,X)\bigg]=o_{P}(n^{-1/2}),
    \label{eqn:remainder-1}
  \end{multline}
  \begin{multline}
    P\bigg[\pi(X)\int_{0}^{\tau}\bigg\{\frac{e_1(0\mid X)S_1^c(t\!-\!\mid 0,X)}{\hat{e}_1(0\mid X)\hat{S}_1^c(t\!-\!\mid 0,X)}-1\bigg\}\frac{\Pi\Alpha_{\dott 1}}{\Pi\hat{\Alpha}_{\dott 1}}(t\!-\!\mid 0,X)\\
    \hat{W}_{12}(t\mid 0,X)\big\{1-\triangle\hat\Alpha_{12}(t\mid 0,X)\big\}\d\bigg(\frac{\Pi\Alpha_{12}}{\Pi\hat{\Alpha}_{12}}\bigg)(t\mid 0,X)\bigg]=o_{P}(n^{-1/2}),
    \label{eqn:remainder-2}
  \end{multline}
  where
  \[
    H^{*}_{\dott}(t\mid X)=\pi(X)e_{1}(0\mid X)\{(\Pi\Alpha_{12})S_1^c\}(t\mid 0,X)+\{1-\pi(X)\}\{(\Pi\Alpha_{02})S_0^c\}(t\mid X).
  \]
\end{assumption}

\begin{remark}
\label{rem:rate}
Since estimators for cumulative hazards often contain jumps, the convergence is stated in terms of the means of stochastic integrals \citepsuppmat{westling2024inference}.
When the event time distribution is absolutely continuous with respect to the Lebesgue measure and the conditional cumulative hazards are estimated by continuous functions, the remainder terms will admit a more conventional product structure.
This is because the Cauchy-Schwarz inequality can be applied with respect to the product measure of \(P\) marginalized to the support \(\mathcal{X}_1\cup\mathcal{X}_0\) of \(X\) and the Lebesgue measure over the time interval \((0,\tau]\); refer to \citetsuppmat{rytgaard2023estimation} for precise formulations.
\end{remark}

\begin{remark}
\label{rem:rate-rmtl}
To establish asymptotic linearity of the estimator \(\hat\gamma_1(0)\), the same convergence rates of the remainder terms \eqref{eqn:remainder-1}--\eqref{eqn:remainder-2} should hold when swapping \(\hat{W}_{\dott 1}(t\mid 0,X)\) and \(\hat{W}_{12}(t\mid 0,X)\) out for
\begin{align*}
  &\int_t^\tau\bigg\{\hat{S}_{1}(t\!-\!\mid 0,X)-\frac{\hat{F}_{11}(s\mid 0,X)-\hat{F}_{11}(t\mid 0,X)}{1-\triangle\hat{\Alpha}_{\dott 1}(t\mid 0,X)}\bigg\}\d s,\\
  &\int_t^\tau\frac{\hat{F}_{11}(s\mid 0,X)-\hat{F}_{11}(t\mid 0,X)}{1-\triangle\hat{\Alpha}_{12}(t\mid 0,X)}\d s.
\end{align*}
\end{remark}

We will use the following lemmas from the literature.
\begin{lemma}[Integration by parts, {\citealpsuppmat[Theorem A.1.2]{fleming1991counting}}]
  \label{lem:integration}
  Let \(F:[0,\infty)\to\mathbb{R}\) and \(G:[0,\infty)\to\mathbb{R}\) be c\`{a}dl\`{a}g functions of bounded variation on any finite interval.
  Then
  \[
    F(t)G(t)-F(s)G(s) = \int_{(s,t]}F(u-)\d G(u) + \int_{(s,t]}G(u)\d F(u).
  \]
\end{lemma}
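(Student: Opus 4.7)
The plan is to reduce the identity to a two-dimensional Fubini computation on the square $(s,t]^2$ under the product measure generated by $F$ and $G$. Since both functions are c\`{a}dl\`{a}g and of bounded variation on $[s,t]$, each induces a signed Borel measure of finite total variation, and the Jordan decomposition splits each into a difference of two positive measures. By bilinearity it suffices to prove the identity when $F$ and $G$ are both nondecreasing, in which case Tonelli's theorem applies without any integrability issues; the general case follows by decomposing $\d F = \d F_+ - \d F_-$, $\d G = \d G_+ - \d G_-$ and summing the four resulting identities.

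For the nondecreasing case, the key observation is
\[
[F(t)-F(s)][G(t)-G(s)] = \iint_{(s,t]^2}\d F(u)\,\d G(v).
\]
I would partition the square disjointly as $(s,t]^2 = \{s<u\le v\le t\}\sqcup\{s<v<u\le t\}$. Slicing the first region by $u$ (so that $v$ ranges over $[u,t]$) and the second by $v$ (so that $u$ ranges over $(v,t]$), Tonelli's theorem yields
\[
[F(t)-F(s)][G(t)-G(s)] = \int_{(s,t]}\{G(t)-G(u-)\}\,\d F(u) + \int_{(s,t]}\{F(t)-F(v)\}\,\d G(v).
\]
Pulling $G(t)$ and $F(t)$ outside the integrals, they combine with $[F(t)-F(s)][G(t)-G(s)]$ via a short algebraic rearrangement to produce the ``dual'' form
\[
F(t)G(t)-F(s)G(s) = \int_{(s,t]} F(u)\,\d G(u) + \int_{(s,t]} G(u-)\,\d F(u).
\]

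To convert this dual form to the stated one with $F(u-)\,\d G(u)$ and $G(u)\,\d F(u)$, I would invoke the jump identity
\[
\int_{(s,t]}\{F(u)-F(u-)\}\,\d G(u) = \sum_{u\in(s,t]}\triangle F(u)\,\triangle G(u) = \int_{(s,t]}\{G(u)-G(u-)\}\,\d F(u),
\]
which holds because both stochastic integrals reduce to a sum over the common atoms of $\d F$ and $\d G$. Subtracting the first expression from the second and adding the result to the dual form swaps the left- and right-continuous representatives and produces exactly the formula in the statement.

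The main obstacle is the careful accounting of the diagonal $\{u=v\}$, which under the product measure carries mass $\sum_u \triangle F(u)\triangle G(u)$. The asymmetric pairing of $F(u-)$ with $\d G(u)$ and $G(u)$ with $\d F(u)$ in the stated identity is dictated by which triangular region of the partition absorbs the diagonal; once a single consistent convention is fixed in the Fubini step, the jump-sum identity makes the translation between equivalent forms routine, but it is the only place where the c\`{a}dl\`{a}g assumption actively enters.
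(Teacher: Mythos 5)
Your proof is correct: the reduction to nondecreasing components via the Jordan decomposition, the Fubini computation over the partition of \((s,t]^2\) into \(\{u\le v\}\) and \(\{v<u\}\), and the jump identity converting the dual form \(\int F(u)\,\d G(u)+\int G(u-)\,\d F(u)\) into the stated asymmetric form all check out. The paper does not prove this lemma but cites it as Theorem A.1.2 of Fleming and Harrington (1991), and your product-measure argument is precisely the standard proof given there.
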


\begin{lemma}[Duhamel and backward equations, {\citealpsuppmat{gill1990survey}}]
  \label{lem:product}
  Let \(F:[0,\infty)\to\mathbb{R}\) and \(G:[0,\infty)\to\mathbb{R}\) be c\`{a}dl\`{a}g functions of bounded variation on any finite interval.
  Then
  \begin{multline*}
    \prod_{u\in(s,t]}\{1+\d F(u)\}-\prod_{u\in(s,t]}\{1+\d G(u)\} \\
    =\int_{u\in(s,t]}\prod_{v\in(s,u)}\{1+\d F(v)\}\d (F-G)(u)\prod_{v\in(u,t]}\{1+\d G(v)\}, \tag{Duhamel}
  \end{multline*}
  \begin{equation*}
  \prod_{u\in(s,t]}\{1+\d F(u)\}-1 = \int_{u\in(s,t]}\prod_{v\in(u,t]}\{1+\d F(v)\}\d F(u). \tag{backward}
\end{equation*}
\end{lemma}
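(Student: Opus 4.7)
The plan is to prove both identities by first establishing finite-dimensional telescoping analogues on a partition of $(s,t]$ and then passing to the limit under refinement, using the definition of the product integral as the limit of the Riemann products $\prod_{i=1}^{n}\{1+F(u_{i})-F(u_{i-1})\}$ over partitions $s=u_{0}<u_{1}<\cdots<u_{n}=t$ of vanishing mesh; this limit exists because $F$ and $G$ are of bounded variation on $[s,t]$.

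For the backward equation, fix such a partition and let $a_{i}=F(u_{i})-F(u_{i-1})$. Setting $R_{i}=\prod_{j>i}(1+a_{j})$ with the convention $R_{n}=1$, the identity $R_{i-1}=(1+a_{i})R_{i}$ rearranges to $R_{i-1}-R_{i}=a_{i}R_{i}$, which telescopes to
\begin{equation*}
\prod_{i=1}^{n}(1+a_{i})-1 = R_{0}-R_{n} = \sum_{i=1}^{n}a_{i}R_{i}.
\end{equation*}
Under mesh refinement the left-hand side tends to $\prod_{u\in(s,t]}\{1+\d F(u)\}-1$, and the right-hand side, being a Riemann–Stieltjes approximation with integrator $\d F$, converges to $\int_{(s,t]}\prod_{v\in(u,t]}\{1+\d F(v)\}\,\d F(u)$, which is exactly the claimed identity.

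For the Duhamel equation I would use the same partition and let $b_{i}=G(u_{i})-G(u_{i-1})$, then interpolate between the two Riemann products by defining $A_{k}=\prod_{j\le k}(1+a_{j})\prod_{j>k}(1+b_{j})$, so that $A_{n}=\prod_{i}(1+a_{i})$ and $A_{0}=\prod_{i}(1+b_{i})$. A single-index swap gives
\begin{equation*}
A_{k}-A_{k-1} = \prod_{j<k}(1+a_{j})\,(a_{k}-b_{k})\prod_{j>k}(1+b_{j}),
\end{equation*}
and summation in $k$ yields the finite analogue of Duhamel. The mesh-zero limit produces the asserted identity, with the asymmetric endpoint conventions $(s,u)$ versus $(u,t]$ mirroring the strict inequalities $j<k$ and $j>k$ in the discrete formula.

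The main obstacle is justifying the passage to the limit, since each integrand is itself a product integral whose endpoints depend on the integration variable, and such a product integral is discontinuous wherever $F$ or $G$ has an atom. I would handle this by decomposing $F=F^{c}+F^{d}$ and $G=G^{c}+G^{d}$ into continuous and pure-jump parts. On the continuous components, Helly-type uniform convergence of Riemann–Stieltjes sums for bounded-variation integrands suffices. For the atomic parts, one refines the partition so that each jump of $F$ or $G$ eventually occupies its own subinterval, then inspects one-sided limits to determine on which side of the jump the integrand is evaluated; this case analysis is precisely what fixes the endpoints as $(u,t]$ rather than $[u,t]$, and $(s,u)$ rather than $(s,u]$, in the limit. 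Since the lemma is a classical result, one may alternatively cite its proof directly from the reference given in the statement.
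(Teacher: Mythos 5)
The paper offers no proof of this lemma at all: it is imported verbatim as a classical result with the citation to Gill and Johansen (1990), exactly as you note in your final sentence is permissible. Your proposal therefore supplies an argument where the paper supplies none, and the argument you give is the standard one. The two discrete telescoping identities are correct: for the backward equation, \(R_{i-1}-R_i=a_iR_i\) with \(R_n=1\) telescopes to \(\prod_i(1+a_i)-1=\sum_i a_iR_i\); for Duhamel, the interpolation \(A_k=\prod_{j\le k}(1+a_j)\prod_{j>k}(1+b_j)\) gives \(A_k-A_{k-1}=\prod_{j<k}(1+a_j)(a_k-b_k)\prod_{j>k}(1+b_j)\), and summing recovers the finite analogue, with the strict inequalities \(j<k\) and \(j>k\) correctly foreshadowing the endpoint conventions \((s,u)\) and \((u,t]\). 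The only genuinely delicate step is the passage to the mesh-zero limit, since the integrands are themselves product integrals with variable endpoints and are discontinuous at atoms of \(F\) or \(G\); you identify this correctly and your proposed remedy (separate the continuous and pure-jump parts, isolate each atom in its own subinterval, and check one-sided limits) is the right idea, though as written it remains a sketch — a fully rigorous treatment needs a uniform bound of the form \(\prod(1+|\d F|)\le\exp\{\mathrm{Var}(F)\}\) to dominate the Riemann sums and a continuity/convergence theorem for product integrals under refinement, which is precisely the machinery Gill and Johansen develop. In short: your outline is correct and is the textbook route; for the purposes of this paper the citation suffices, and nothing in your argument contradicts or improves upon what the reference establishes.
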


To make the notations more compact, we use the symbol \(S\) to represent the product integral \(\Pi\Alpha\) for \(\Alpha\in\mathcal{A}\), and superscripts and subscripts in \(\Alpha\) are carried over to \(S\).
For example, \(S_{\dott 1}(t\mid 0,x)=(\Pi\Alpha_{\dott 1})(t\mid 0,x)\).
We use the nuisance parameters with checkmarks as a placeholder for either the probability limits in Assumption~\ref{asn:plim} (with bars) or the estimated nuisance parameters (with hats).
Define
\begin{align*}
  \check{q}_{1}(t\mid X) &= \check{S}_{1}(t\mid 0,X)-\check{F}_{11}(\tau\mid 0,X)+\check{F}_{11}(t\mid 0,X),\\
  \check{q}_{2}(t\mid X) &= -\check{F}_{11}(\tau\mid 0,X)+\check{F}_{11}(t\mid 0,X),\\
  \check{b}_{1}(t\mid X) &= \check{\pi}(x)\check{e}_{1}(0\mid X)(\check{S}_{12}\check{S}_1^{c})(t\!-\!\mid 0,X) + \{1-\check\pi(X)\}(\check{S}_{02}\check{S}_0^{c})(t\!-\!\mid X),\\
  {b}_{1}(t\mid X) &= {\pi}(x){e}_{1}(0\mid X)({S}_{12}{S}_1^{c})(t\!-\!\mid 0,X) + \{1-\pi(X)\}({S}_{02}{S}_0^{c})(t\!-\!\mid X),\\
  \check{b}_{2}(t\mid X) &= \check{e}_{1}(0\mid X)(\check{S}_{\dott 1}\check{S}_1^{c})(t\!-\!\mid 0,X),\\
  {b}_{2}(t\mid X) &= {e}_{1}(0\mid X)(S_{\dott 1}{S}_1^{c})(t\!-\!\mid 0,X).
\end{align*}
The function obtained by substituting all nuisance parameters in \(\ell_1(0)\) by their version with the checkmark can be written in terms of the quantities above as
\[
  \check{\ell}_{1}(0)(O)=\sum_{m=1}^{3}\check\ell_{1m}(0)(O),
\]
where
\begin{align*}
  \check\ell_{11}(0)(O) &= \frac{1-A}{\check\alpha}\int_{0}^{\tau}\check{\pi}(X)\frac{\check{q}_1}{\check{b}_1}(t\mid X)\frac{\d\check{M}_{\dott 1}}{\check{S}_{\dott 1}}(t\mid 0,X), \\
  \check\ell_{12}(0)(O) &= \frac{D(1-A)}{\check\alpha}\int_{0}^{\tau}\frac{\check{q}_2}{\check{b}_2}(t\mid X)\frac{\d\check{M}_{12}}{\check{S}_{12}}(t\mid 0,X), \\
  \check\ell_{13}(0)(O) &= \frac{D}{\check\alpha}\check{F}_{11}(t\mid 0,X).
\end{align*}

The following lemma will be used in two versions by substituting the nuisance parameters with checkmark with their estimates and the probability limits of their estimators, respectively.
\begin{lemma}
  \label{lem:diff}
  Suppose Assumptions~\ref{asn:plim} and \ref{asn:regularity} hold.
  Then
  \begin{align*}
    \MoveEqLeft P\bigg\{\check{\ell}_1(0)-\frac{\alpha}{\check\alpha}\ell_1(0)\bigg\}\\
    &= P\bigg[\frac{1}{\check\alpha}\int_{0}^{\tau}\bigg\{\check{\pi}(X)\frac{b_{1}}{\check{b}_1}(t\mid X)-\pi(X)\frac{S_{12}}{\check{S}_{12}}(t\!-\!\mid 0,X)\bigg\}\check{q}_1(t\mid X)\d\bigg(1-\frac{S_{\dott 1}}{\check{S}_{\dott 1}}\bigg)(t\mid 0,X)\bigg]\\
    &\hphantom{=}\quad + P\bigg[\frac{\pi(X)}{\check{\alpha}}\int_{0}^{\tau}\bigg\{\frac{b_{2}}{\check{b}_2}(t\mid X)-\frac{S_{\dott 1}}{\check{S}_{\dott 1}}(t\!-\!\mid 0,X)\bigg\}\check{q}_2(t\mid X)\d\bigg(1-\frac{S_{12}}{\check{S}_{12}}\bigg)(t\mid 0,X)\bigg].
  \end{align*}
\end{lemma}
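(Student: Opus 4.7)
The plan is to compute $P\{\check\ell_1(0)-(\alpha/\check\alpha)\ell_1(0)\}$ using the additive decomposition $\check\ell_1(0)=\sum_{m=1}^{3}\check\ell_{1m}(0)$. Since $\varphi_1(0)$ from Lemma~\ref{lem:eif} has mean zero under $P$, we have $P\ell_1(0)=\theta_1(0)$, and the task reduces to matching $P\check\ell_1(0)-(\alpha/\check\alpha)\theta_1(0)$ to the two stated integrals.

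For the stochastic-integral terms $m=1,2$, the key step is the splitting $d\check{M}_{\dott 1}(t\mid 0,X)=dM_{\dott 1}(t\mid 0,X)+I(\tilde T\geq t)\,d(\Alpha_{\dott 1}-\check\Alpha_{\dott 1})(t\mid 0,X)$, and similarly for $d\check{M}_{12}$. The true-data martingale integrals vanish in $P$-expectation by the compensator property, leaving only the compensator-gap contributions. An iterated-expectation calculation using Assumption~\ref{asn:censoring} together with the model factorizations $S_1=S_{\dott 1}S_{12}$ and $S_0=S_{\dott 1}S_{02}$ (the latter from the transportability restriction~\eqref{eqn:restriction}) gives $\E_P\{(1-A)\,I(\tilde T\geq t)\mid X\}=S_{\dott 1}(t\!-\!\mid 0,X)\,b_1(t\mid X)$ and an analogous formula for $\E_P\{D(1-A)\,I(\tilde T\geq t)\mid X\}$ involving $b_2$. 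Applying Duhamel's equation (Lemma~\ref{lem:product}) to $S_{\dott 1}=\Pi\Alpha_{\dott 1}$ and $\check{S}_{\dott 1}=\Pi\check\Alpha_{\dott 1}$ then yields $d(\Alpha_{\dott 1}-\check\Alpha_{\dott 1})(t)=\{\check{S}_{\dott 1}(t)/S_{\dott 1}(t\!-\!)\}\,d(1-S_{\dott 1}/\check{S}_{\dott 1})(t)$, with an analogous conversion for $\Alpha_{12}$. After substitution and the cancellations afforded by $\check{H}_\dott(t\!-\!)=\check{b}_1(t)\check{S}_{\dott 1}(t\!-\!)$ and $\check{H}_1(t\!-\!)=\check{b}_2(t)\check{S}_{12}(t\!-\!)$, the $m=1,2$ contributions produce precisely the $\check\pi(X)(b_1/\check{b}_1)\check{q}_1$ and $\pi(X)(b_2/\check{b}_2)\check{q}_2$ pieces appearing in the lemma.

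The hard part will be the $m=3$ contribution, which reduces to $\E_P[(\pi(X)/\check\alpha)\{\check{F}_{11}(\tau\mid 0,X)-F_{11}(\tau\mid 0,X)\}]$ and must supply the remaining $-\pi(X)(S_{12}/\check{S}_{12})(t\!-\!)\check{q}_1$ and $-\pi(X)(S_{\dott 1}/\check{S}_{\dott 1})(t\!-\!)\check{q}_2$ pieces. The plan is to establish, for each fixed $x$, the deterministic identity
\begin{align*}
\MoveEqLeft \check{F}_{11}(\tau\mid 0,x)-F_{11}(\tau\mid 0,x) \\
&= -\int_0^\tau \frac{S_{12}}{\check{S}_{12}}(t\!-\!\mid 0,x)\,\check{q}_1(t\mid x)\,d(1-S_{\dott 1}/\check{S}_{\dott 1})(t\mid 0,x) \\
&\quad -\int_0^\tau \frac{S_{\dott 1}}{\check{S}_{\dott 1}}(t\!-\!\mid 0,x)\,\check{q}_2(t\mid x)\,d(1-S_{12}/\check{S}_{12})(t\mid 0,x),
\end{align*}
starting from the product-integral representation $F_{11}(\tau)=-\int_0^\tau S_{12}(t\!-\!)\,dS_{\dott 1}(t)$, writing $\check{F}_{11}-F_{11}$ as a telescoping sum over differences in $S_{\dott 1}$ and in $S_{12}$, and converting the two pieces via integration by parts (Lemma~\ref{lem:integration}) together with the ratio differentials identified above. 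The perpendicularity conditions in Assumption~\ref{asn:regularity}(iii) ensure that cross-jump contributions between the two cumulative hazards vanish, so that the product-ratio identity $S_1/\check{S}_1=(S_{\dott 1}/\check{S}_{\dott 1})(S_{12}/\check{S}_{12})$ and the combined-hazard relation $d(\Alpha_{\dott 1}+\Alpha_{12})=-dS_1/S_1(\!-\!)$ can be applied without spurious jump corrections. Summing the three term-wise expansions and grouping by the two differentials produces the claimed identity.
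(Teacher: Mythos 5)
Your overall architecture is the same as the paper's: you use the identical decomposition $\check\ell_1(0)=\check\ell_{11}(0)+\check\ell_{12}(0)+\check\ell_{13}(0)$, you treat the two stochastic-integral terms by splitting $\d\check{M}$ into the true martingale (mean zero) plus $\I(\tilde{T}\geq t)\,\d(\Alpha-\check\Alpha)$, and you invoke exactly the pooled at-risk identities the paper uses, namely $P(\tilde{T}\geq t,A=0\mid X)=S_{\dott 1}(t\!-\!\mid 0,X)\,b_1(t\mid X)$ (which indeed rests on the transportability restriction through $S_0=S_{\dott 1}S_{02}$) and $P(\tilde{T}\geq t,A=0,D=1\mid X)=\pi(X)\,S_{12}(t\!-\!\mid 0,X)\,b_2(t\mid X)$, together with the cancellations $\check{H}_{\dott}(t\!-\!)=\check{b}_1(t)\check{S}_{\dott 1}(t\!-\!)$ and $\check{H}_1(t\!-\!)=\check{b}_2(t)\check{S}_{12}(t\!-\!)$. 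The final conversion $\frac{S_{\dott 1}(t-)}{\check{S}_{\dott 1}(t)}\d(\Alpha_{\dott 1}-\check\Alpha_{\dott 1})(t)=\d(1-S_{\dott 1}/\check{S}_{\dott 1})(t)$ via the Duhamel equation is also the paper's last step, and the deterministic identity you state for $\check{F}_{11}(\tau\mid 0,x)-F_{11}(\tau\mid 0,x)$ is precisely the one the paper establishes (after writing $\check{q}_2=-\{\check{F}_{11}(\tau)-\check{F}_{11}(t)\}$ and using Assumption~\ref{asn:regularity}(iii) to replace $\check{S}_{\dott 1}(t)\check{S}_{12}(t\!-\!)$ by $\check{S}_1(t)$).

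The gap is in your plan for proving that identity. The paper proves it by adding and subtracting to get $\int_0^\tau(\check{S}_1-S_1)(t\!-\!)\,\d\check\Alpha_{\dott 1}(t)+\int_0^\tau S_1(t\!-\!)\,\d(\check\Alpha_{\dott 1}-\Alpha_{\dott 1})(t)$, applying the Duhamel equation to the all-cause difference $\check{S}_1-S_1$, and then interchanging the order of integration; it is this Fubini step that generates the weights $\check{F}_{11}(\tau)-\check{F}_{11}(t)$ together with the factor $S_1(t-)/\check{S}_1(t)$, which then splits into the two ratio prefactors $\frac{S_{12}}{\check{S}_{12}}(t-)$ and $\frac{S_{\dott 1}}{\check{S}_{\dott 1}}(t-)$. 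Your factor-wise telescoping $-\int(\check{S}_{12}-S_{12})(t-)\,\d\check{S}_{\dott 1}-\int S_{12}(t-)\,\d(\check{S}_{\dott 1}-S_{\dott 1})$ followed by ``integration by parts together with the ratio differentials'' does not by itself deliver this: for instance, expanding $(\check{S}_{12}-S_{12})(t-)=\check{S}_{12}(t-)\int_{(0,t)}\d(1-S_{12}/\check{S}_{12})$ and interchanging produces $\int\{\check{F}_{11}(\tau)-\check{F}_{11}(s)\}\,\d(1-S_{12}/\check{S}_{12})(s)$ \emph{without} the needed prefactor $\frac{S_{\dott 1}}{\check{S}_{\dott 1}}(s-)$, while the other piece yields $\check{S}_{\dott 1}(t-)$ where the target requires $\check{S}_{\dott 1}(t)$; these discrepancies only cancel after a further Fubini/rearrangement argument that your sketch does not supply. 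Relatedly, Assumption~\ref{asn:regularity}(iii) only removes cross-cause jump interactions; it does not let you identify $\check{S}_{\dott 1}(t-)$ with $\check{S}_{\dott 1}(t)$ at jump points of the cause-1 hazards, which are charged by $\d(1-S_{\dott 1}/\check{S}_{\dott 1})$, so that bookkeeping must be carried out explicitly. The clean way to close this is the paper's route: Duhamel on $\check{S}_1-S_1$, Fubini, then split $\d(\Alpha_1-\check\Alpha_1)$ into its cause-specific parts.
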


\begin{proof}
  We have
  \begin{align*}
    P\check\ell_{11}(0) &= P\bigg\{\frac{1-A}{\check\alpha}\int_{0}^{\tau}\check{\pi}(X)\frac{\check{q}_1}{\check{b}_1}(t\mid X)\frac{\d\check{M}_{\dott 1}}{\check{S}_{\dott 1}}(t\mid 0,X)\bigg\}\\
    &= P\bigg\{\frac{1-A}{\check\alpha}\int_{0}^{\tau}\check{\pi}(X)\frac{\check{q}_1}{\check{b}_1}(t\mid X)\I(\tilde{T}\geq t)\frac{\d(\Alpha_{\dott 1}-\check{\Alpha}_{\dott 1})}{\check{S}_{\dott 1}}(t\mid 0,X)\bigg\}\\
    &= P\bigg\{\frac{1}{\check\alpha}\int_{0}^{\tau}\check{\pi}(X)\frac{\check{q}_1}{\check{b}_1}(t\mid X)P(\tilde{T}\geq t,A=0\mid X)\frac{\d(\Alpha_{\dott 1}-\check{\Alpha}_{\dott 1})}{\check{S}_{\dott 1}}(t\mid 0,X)\bigg\}\\
    &= P\bigg\{\frac{1}{\check\alpha}\int_{0}^{\tau}\check{\pi}(X)\frac{\check{q}_1}{\check{b}_1}(t\mid X)\frac{1}{\check{S}_{\dott 1}(t\mid 0,X)}b_1(t\mid X)S_{\dott 1}(t\!-\!\mid 0,X)\d(\Alpha_{\dott 1}-\check{\Alpha}_{\dott 1})(t\mid 0,X)\bigg\}\\
    &= P\bigg\{\frac{1}{\check\alpha}\int_{0}^{\tau}\check{\pi}(X)\frac{b_1\check{q}_1}{\check{b}_1}(t\mid X)\frac{S_{\dott 1}}{\check{S}_{\dott 1}}(t\mid 0,X)\d(\Alpha_{\dott 1}-\check{\Alpha}_{\dott 1})(t\mid 0,X)\bigg\}.
  \end{align*}
  Also, we have
  \begin{align*}
    P\check\ell_{12}(0) &= P\bigg\{\frac{D(1-A)}{\check\alpha}\int_{0}^{\tau}\frac{\check{q}_2}{\check{b}_2}(t\mid X)\frac{\d\check{M}_{12}}{\check{S}_{12}}(t\mid 0,X)\bigg\}\\
    &=P\bigg[\frac{\pi(X)}{\check{\alpha}}\int_{0}^{\tau}\frac{b_2\check{q}_2}{\check{b}_2}(t\mid X)\frac{S_{12}(t\!-\!\mid 0,X)}{\check{S}_{12}(t\mid 0,X)}\d(\Alpha_{12}-\check{\Alpha}_{12})(t\mid 0,X)\bigg].
  \end{align*}
  Let \(\Alpha_{1}(t\mid 0,x)=(\Alpha_{\dott 1}+\Alpha_{12})(t\mid 0,x)\) and \(\check\Alpha_{1}(t\mid 0,x)=(\check\Alpha_{\dott 1}+\check\Alpha_{12})(t\mid 0,x)\), the all-cause hazard.
  By the Duhamel equation in Lemma~\ref{lem:product},
  \begin{align*}
    \MoveEqLeft(\check{F}_{11}-F_{11})(\tau\mid 0,x)\\
    &= \int_{0}^{\tau}\check{S}_{1}(t\!-\!\mid 0,x)\d\check{\Alpha}_{\dott 1}(t\mid 0,x)-\int_{0}^{\tau}S_{1}(t\!-\!\mid 0,x)\d\Alpha_{\dott 1}(t\mid 0,x)\\
    &= \int_{0}^{\tau}(\check{S}_{1}-S_1)(t\!-\!\mid 0,x)\d\check{\Alpha}_{\dott 1}(t\mid 0,x)+\int_{0}^{\tau}S_{1}(t\!-\!\mid 0,x)\d(\check\Alpha_{\dott 1}-\Alpha_{\dott 1})(t\mid 0,x)\\
    &= \int_{0}^{\tau}\int_{s\in(0,t)}\check{S}_{1}(t\!-\!\mid 0,x)\frac{S_1(s\!-\!\mid 0,x)}{\check{S}_1(s\mid 0,x)}\d(\Alpha_{1}-\check{\Alpha}_{1})(s\mid 0,x)\d\check{\Alpha}_{\dott 1}(t\mid 0,x)\\
    &\hphantom{=}\quad +\int_{0}^{\tau}S_{1}(t\!-\!\mid 0,x)\d(\check\Alpha_{\dott 1}-\Alpha_{\dott 1})(t\mid 0,x)\\
    &= \int_{s\in(0,\tau)}\int_{s}^{\tau}\check{S}_{1}(t\!-\!\mid 0,x)\d\check{\Alpha}_{\dott 1}(t\mid 0,x)\frac{S_1(s\!-\!\mid 0,x)}{\check{S}_1(s\mid 0,x)}\d(\Alpha_{1}-\check{\Alpha}_{1})(s\mid 0,x)\\
    &\hphantom{=}\quad +\int_{0}^{\tau}S_{1}(t\!-\!\mid 0,x)\d(\check\Alpha_{\dott 1}-\Alpha_{\dott 1})(t\mid 0,x)\\
    &= \int_{s\in(0,\tau)}\frac{\check{F}_{11}(\tau\mid 0,x)-\check{F}_{11}(s\mid 0,x)}{1-\triangle\check{\Alpha}_{1}(s\mid 0,x)}\frac{S_1}{\check{S}_1}(s\!-\!\mid 0,x)\d(\Alpha_{1}-\check{\Alpha}_{1})(s\mid 0,x)\\
    &\hphantom{=}\quad +\int_{0}^{\tau}S_{1}(t\!-\!\mid 0,x)\d(\check\Alpha_{\dott 1}-\Alpha_{\dott 1})(t\mid 0,x)\\
    &= -\int_{0}^{\tau}\bigg\{S_{1}(t\!-\!\mid 0,x)-\frac{\check{F}_{11}(\tau\mid 0,x)-\check{F}_{11}(t\mid 0,x)}{1-\triangle\check{\Alpha}_{\dott 1}(t\mid 0,x)}\frac{S_1}{\check{S}_1}(t\!-\!\mid 0,x)\bigg\}\d(\Alpha_{\dott 1}-\check{\Alpha}_{\dott 1})(t\mid 0,x)\\
    &\hphantom{=}\quad  +\int_{0}^{\tau}\frac{\check{F}_{11}(\tau\mid 0,x)-\check{F}_{11}(t\mid 0,x)}{1-\triangle\check{\Alpha}_{12}(t\mid 0,x)}\frac{S_1}{\check{S}_1}(t\!-\!\mid 0,x)\d(\Alpha_{12}-\check{\Alpha}_{12})(t\mid 0,x)\\
    &= -\int_{0}^{\tau}\bigg\{S_{12}(t\!-\!\mid 0,x)-\frac{\check{F}_{11}(\tau\mid 0,x)-\check{F}_{11}(t\mid 0,x)}{\check{S}_{\dott 1}(t\mid 0,x)}\frac{S_{12}}{\check{S}_{12}}(t\!-\!\mid 0,x)\bigg\}\\
    &\qquad\qquad S_{\dott 1}(t\!-\!\mid 0,x)\d(\Alpha_{\dott 1}-\check{\Alpha}_{\dott 1})(t\mid 0,x)\\
    &\hphantom{=}\quad  +\int_{0}^{\tau}\frac{\check{F}_{11}(\tau\mid 0,x)-\check{F}_{11}(t\mid 0,x)}{\check{S}_{12}(t\mid 0,x)}\frac{S_{\dott 1}S_{12}}{\check{S}_{\dott 1}}(t\!-\!\mid 0,x)\d(\Alpha_{12}-\check{\Alpha}_{12})(t\mid 0,x)\\
    &= -\int_{0}^{\tau}\frac{S_{12}}{\check{S}_{12}}(t\!-\!\mid 0,x)\big\{\check{S}_{\dott 1}(t\mid 0,x)\check{S}_{12}(t\!-\!\mid 0,x)-\check{F}_{11}(\tau\mid 0,x)+\check{F}_{11}(t\mid 0,x)\big\}\\
    &\qquad\qquad \frac{S_{\dott 1}(t\!-\!\mid 0,x)}{\check{S}_{\dott 1}(t\mid 0,x)}\d(\Alpha_{\dott 1}-\check{\Alpha}_{\dott 1})(t\mid 0,x)\\
    &\hphantom{=}\quad  +\int_{0}^{\tau}\frac{S_{\dott 1}}{\check{S}_{\dott 1}}(t\!-\!\mid 0,x)\big\{\check{F}_{11}(\tau\mid 0,x)-\check{F}_{11}(t\mid 0,x)\big\}\frac{S_{12}(t\!-\!\mid 0,x)}{\check{S}_{12}(t\mid 0,x)}\d(\Alpha_{12}-\check{\Alpha}_{12})(t\mid 0,x).
  \end{align*}
  By Assumption~\ref{asn:regularity},
  \[
  \check{S}_{12}(t\!-\!\mid 0,x)\d(\Alpha_{\dott 1}-\check{\Alpha}_{\dott 1})(t\mid 0,x)=\check{S}_{12}(t\mid 0,x)\d(\Alpha_{\dott 1}-\check{\Alpha}_{\dott 1})(t\mid 0,x).
  \]
  Therefore,
  \begin{align*}
    \MoveEqLeft P\bigg\{\check\ell_{13}(0)- \frac{\alpha}{\check\alpha}\ell_1(0)\bigg\} \\
    &=P\check\ell_{13}(0) - \frac{\alpha}{\check\alpha}\theta_{1}(0)\\
    &= P\bigg[\frac{D}{\check{\alpha}}(\check{F}_{11}-F_{11})(\tau\mid 0,X)\bigg] \\
    &= -P\bigg[\frac{\pi(X)}{\check{\alpha}}\int_{0}^{\tau}\frac{S_{12}}{\check{S}_{12}}(t\!-\!\mid 0,X)\check{q}_1(t\mid X)\frac{S_{\dott 1}(t\!-\!\mid 0,X)}{\check{S}_{\dott 1}(t\mid 0,X)}\d(\Alpha_{\dott 1}-\check{\Alpha}_{\dott 1})(t\mid 0,X)\bigg] \\
    &\hphantom{=}\quad -P\bigg[\frac{\pi(X)}{\check{\alpha}}\int_{0}^{\tau}\frac{S_{\dott 1}}{\check{S}_{\dott 1}}(t\!-\!\mid 0,X)\check{q}_2(t\mid X)\frac{S_{12}(t\!-\!\mid 0,X)}{\check{S}_{12}(t\mid 0,X)}\d(\Alpha_{12}-\check{\Alpha}_{12})(t\mid 0,X)\bigg].
  \end{align*}
  Summing up the three terms in the previous displays gives
  \begin{align*}
    \MoveEqLeft P\bigg\{\check{\ell}_1(0)-\frac{\alpha}{\check\alpha}\ell_1(0)\bigg\} \\
    &= \begin{multlined}[t][.9\textwidth]
    P\bigg[\frac{1}{\check\alpha}\int_{0}^{\tau}\bigg\{\check{\pi}(X)\frac{b_{1}}{\check{b}_1}(t\mid X)-\pi(X)\frac{S_{12}}{\check{S}_{12}}(t\!-\!\mid 0,X)\bigg\}\\
    \check{q}_1(t\mid X)\frac{S_{\dott 1}(t\!-\!\mid 0,X)}{\check{S}_{\dott 1}(t\mid 0,X)}\d(\Alpha_{\dott 1}-\check{\Alpha}_{\dott 1})(t\mid 0,X)\bigg]
    \end{multlined}\\
    &\hphantom{=}\quad + \begin{multlined}[t][.85\textwidth]
    P\bigg[\frac{\pi(X)}{\check{\alpha}}\int_{0}^{\tau}\bigg\{\frac{b_{2}}{\check{b}_2}(t\mid X)-\frac{S_{\dott 1}}{\check{S}_{\dott 1}}(t\!-\!\mid 0,X)\bigg\}\\
    \check{q}_2(t\mid X)\frac{S_{12}(t\!-\!\mid 0,X)}{\check{S}_{12}(t\mid 0,X)}\d(\Alpha_{12}-\check{\Alpha}_{12})(t\mid 0,X)\bigg]
    \end{multlined}\\
    &= P\bigg[\frac{1}{\check\alpha}\int_{0}^{\tau}\bigg\{\check{\pi}(X)\frac{b_{1}}{\check{b}_1}(t\mid X)-\pi(X)\frac{S_{12}}{\check{S}_{12}}(t\!-\!\mid 0,X)\bigg\}\check{q}_1(t\mid X)\d\bigg(1-\frac{S_{\dott 1}}{\check{S}_{\dott 1}}\bigg)(t\mid 0,X)\bigg]\\
    &\hphantom{=}\quad + P\bigg[\frac{\pi(X)}{\check{\alpha}}\int_{0}^{\tau}\bigg\{\frac{b_{2}}{\check{b}_2}(t\mid X)-\frac{S_{\dott 1}}{\check{S}_{\dott 1}}(t\!-\!\mid 0,X)\bigg\}\check{q}_2(t\mid X)\d\bigg(1-\frac{S_{12}}{\check{S}_{12}}\bigg)(t\mid 0,X)\bigg],
  \end{align*}
  where the last step is again by the Duhamel equation in Lemma~\ref{lem:product}.
\end{proof}

Let \(\bar\ell_1(0)\) be obtained by substituting the probability limits of the nuisance parameters into the function \(\check\ell_1(0)\).
\begin{lemma}
  \label{lem:l2-convergence}
  Suppose Assumptions~\ref{asn:plim} and \ref{asn:regularity} hold. Then \(P\{\hat\ell_1(0)-\bar\ell_1(0)\}^2\overset{\mathrm{p}}{\to}0\).
\end{lemma}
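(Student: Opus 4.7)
The plan is to write $\hat\ell_1(0) - \bar\ell_1(0)$ as a telescoping sum of differences in which a single estimated nuisance --- $\hat\alpha$, $\hat\pi$, $\hat{e}_1$, $\hat\Alpha_{\dott 1}$, $\hat\Alpha_{12}$, $\hat\Alpha_{02}$, $\hat\Alpha_{1}^c$, or $\hat\Alpha_{0}^c$ --- is swapped for its probability limit at a time. Since each of the three summands defining $\hat\ell_1(0)$ depends multilinearly on these ingredients (through products, quotients, and product integrals), each telescoping piece factors as a uniformly bounded random function times the difference between one estimator and its limit. By the triangle inequality it then suffices to show that each piece vanishes in $L^2(P)$.

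\textbf{Key tools.} To control each piece I would lean on two continuity facts. First, Assumption~\ref{asn:regularity} bounds $\hat\alpha$, $\hat\pi$, $\hat{e}_1$ and the product integrals $\Pi\hat\Alpha_{\dott 1}$, $\Pi\hat\Alpha_{12}$, $\Pi\hat\Alpha_{02}$, $\Pi\hat\Alpha_{1}^c$, $\Pi\hat\Alpha_{0}^c$ (and their bar counterparts) uniformly away from zero on $(0,\tau]$, so the reciprocals entering $\hat{H}_{\dott}$, $\hat{H}_1$, and the factors $1/\{1-\triangle\hat\Alpha\}$ in $\hat{W}_{\dott 1}$, $\hat{W}_{21}$ are universally bounded; combined with the elementary inequality $(\Pi\Alpha)(\tau)\leq \exp\{-\Alpha(\tau)\}$ for $\Alpha\in\mathcal{A}$, they also yield a uniform $\log C$ bound on the total variation of each cumulative hazard on $\{\pi(X)>0\}$ and $\{0<\pi(X)<1\}$. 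Second, by Duhamel's equation (Lemma~\ref{lem:product}), the product-integral map $\Alpha\mapsto \Pi\Alpha$ is Lipschitz in the supremum norm on uniformly bounded subsets of $\mathcal{A}$, so Assumption~\ref{asn:plim} upgrades uniform convergence of the estimated hazards to sup-norm $L^2(P)$ convergence of all derived quantities $\hat{S}_1$, $\hat{S}_0$, $\hat{F}_{11}$, $\hat{H}_{\dott}$, $\hat{H}_1$, $\hat{W}_{\dott 1}$, $\hat{W}_{21}$ toward their respective limits.

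\textbf{Main obstacle.} The delicate step will be to bound the martingale-style integrals $\int_0^\tau g(t)\{\d N_j(t)-\I(\tilde T\geq t)\d\hat\Alpha(t\mid 0,X)\}$ pointwise by $\sup_t|g(t)|$ times a uniform constant. The counting-process part $\int g\,\d N_j$ is bounded by $\sup_t|g(t)|$ because $N_j$ has total mass at most one; the compensator part has total variation at most $\hat\Alpha(\tau\mid 0,X)\leq \log C$ on the relevant support, so the same sup-norm bound applies. Once each telescoping piece is pointwise dominated by a constant times the supremum in $t$ of a single nuisance difference, squaring and taking expectations reduces the task to the $L^2(P)$ sup-norm convergences established above, which give $o_P(1)$ for every piece; summing over the finitely many pieces yields $P\{\hat\ell_1(0)-\bar\ell_1(0)\}^2=o_P(1)$ and completes the proof.
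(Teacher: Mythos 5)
Your telescoping strategy is in the same spirit as the paper's own proof (which decomposes \(\hat\ell_1(0)-\bar\ell_1(0)\) into ten terms \(\delta_1,\dots,\delta_{10}\) and bounds each by sup-norm differences of the nuisances), and your treatment of the pieces where the estimated quantity sits in the \emph{integrand} is fine: there the integral against \(\d N_j\) has mass at most one and the integral against \(Y(t)\,\d\hat\Alpha\) has total mass at most \(\log C\), so a sup-norm bound on the integrand difference suffices. The gap is in the pieces where the swapped nuisance is the \emph{integrating measure} itself, i.e.\ the compensator swap \(\int_0^\tau g(t)\,Y(t)\,\d(\hat\Alpha_{\dott 1}-\bar\Alpha_{\dott 1})(t)\) (and likewise for \(\hat\Alpha_{12}\), and the hazard-difference part hidden inside \(\hat F_{11}-\bar F_{11}\)). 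Your claim that such a piece is ``pointwise dominated by a constant times the supremum in \(t\) of a single nuisance difference'' fails here: the signed measure \(\d(\hat\Alpha_{\dott 1}-\bar\Alpha_{\dott 1})\) has total variation bounded only by \(2\log C\), which is not small, and Assumption~\ref{asn:plim} gives sup-norm convergence of the cumulative hazards, not convergence in variation norm. So bounding by \(\sup_t|g|\) times the variation of the difference gives \(O_P(1)\), not \(o_P(1)\), and bounding by \(\sup_t|g|\cdot\sup_t|\hat\Alpha_{\dott 1}-\bar\Alpha_{\dott 1}|\) is simply not a valid estimate for a Stieltjes integral.

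To close this you need the device the paper uses: rewrite the survival-weighted hazard measures as differentials of product integrals, e.g.\ \(\hat S_{\dott 1}(t-)\,\d\hat\Alpha_{\dott 1}(t)=-\,\d\hat S_{\dott 1}(t)\) and \(\d_t\{\hat S_{\dott 1}(\tau)/\hat S_{\dott 1}(t)\}=\{\hat S_{\dott 1}(\tau)/\hat S_{\dott 1}(t)\}\,\d\hat\Alpha_{\dott 1}(t)\) (backward/Duhamel equations, Lemma~\ref{lem:product}), so that the difference of the two compensator integrands becomes \(\d_t\) of a difference of survival ratios; then integrate by parts (Lemma~\ref{lem:integration}) against the remaining bounded-variation integrand and absorb the boundary terms. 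Only then is the piece controlled by \(\sup_t|\hat S_{\dott 1}-\bar S_{\dott 1}|\), which Assumption~\ref{asn:plim} (via the Lipschitz property of the product integral you already invoke) does make \(o_P(1)\) in \(L^2(P)\). Note also that this integration-by-parts step is exactly where the disjoint-discontinuity requirement in Assumption~\ref{asn:regularity}(iii) enters (to replace \(\bar b_1(t)\)-type factors by their right-continuous versions); your argument never uses that part of the assumption, which is a symptom of the missing step. The same repair is needed for your claimed sup-norm \(L^2\) convergence of \(\hat F_{11}\): it is a Lebesgue--Stieltjes integral, not a product integral, and the term \(\int\bar S_1\,\d(\hat\Alpha_{\dott 1}-\bar\Alpha_{\dott 1})\) requires the same rewrite-plus-by-parts treatment rather than the Lipschitz property alone.
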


\begin{proof}
  We use the notation \(A_n\lesssim B_n\) to denote \(A_n\leq CB_n\) for some universal constant \(C\geq 1\).
  Let
  \begin{align*}
    \hat{b}_{1}(t\mid X) &= \hat\pi(X)\hat{e}_{1}(0\mid X)(\hat{S}_{12}\hat{S}_1^{c})(t\!-\!\mid 0,X) + \{1-\hat\pi(X)\}(\hat{S}_{02}\hat{S}_0^{c})(t\!-\!\mid X),\\
    \bar{b}_{1}(t\mid X) &= \bar\pi(X)\bar{e}_{1}(0\mid X)(\bar{S}_{12}\bar{S}_1^{c})(t\!-\!\mid 0,X) + \{1-\bar\pi(X)\}(\bar{S}_{02}\bar{S}_0^{c})(t\!-\!\mid X),\\
    \hat{b}_{2}(t\mid X) &= \hat{e}_{1}(0\mid X)(\hat{S}_{\dott 1}\hat{S}_1^{c})(t\!-\!\mid 0,X),\\
    \bar{b}_{2}(t\mid X) &=  \bar{e}_{1}(0\mid X)(\bar{S}_{\dott 1}\bar{S}_1^{c})(t\!-\!\mid 0,X),\\
    \hat{r}_{1}(t\mid X) &= \frac{1}{\hat{S}_{\dott 1}(\tau\mid 0,X)}\{\hat{S}_{1}(t\mid 0,X)-\hat{F}_{11}(\tau\mid 0,X)+\hat{F}_{11}(t\mid 0,X)\},\\
    \bar{r}_{1}(t\mid X) &= \frac{1}{\bar{S}_{\dott 1}(\tau\mid 0,X)}\{\bar{S}_{1}(t\mid 0,X)-\bar{F}_{11}(\tau\mid 0,X)+\bar{F}_{11}(t\mid 0,X)\},\\
    \hat{r}_{2}(t\mid X) &= -\frac{1}{\hat{S}_{12}(\tau\mid 0,X)}\{\hat{F}_{11}(\tau\mid 0,X)-\hat{F}_{11}(t\mid 0,X)\},\\
    \bar{r}_{2}(t\mid X) &= -\frac{1}{\bar{S}_{12}(\tau\mid 0,X)}\{\bar{F}_{11}(\tau\mid 0,X)-\bar{F}_{11}(t\mid 0,X)\}.
  \end{align*}
  
  Then
  \begin{align*}
    \hat\ell_{1}(0)(O) &= \frac{1-A}{\hat\alpha}\int_{0}^{\tau}\hat\pi(X)\frac{\hat{r}_1}{\hat{b}_1}(t\mid X)\frac{\hat{S}_{\dott 1}(\tau\mid 0,X)}{\hat{S}_{\dott 1}(t\mid 0,X)}\d\hat{M}_{\dott 1}(t\mid 0,X) \\
                       &\hphantom{=}\quad +\frac{D(1-A)}{\hat\alpha}\int_{0}^{\tau}\frac{\hat{r}_2}{\hat{b}_2}(t\mid X)\frac{\hat{S}_{12}(\tau\mid 0,X)}{\hat{S}_{12}(t\mid 0,X)}\d\hat{M}_{12}(t\mid 0,X) + \frac{D}{\hat\alpha}\hat{F}_{11}(\tau\mid 0,X),\\
    \bar\ell_{1}(0)(O) &= \frac{1-A}{\bar\alpha}\int_{0}^{\tau}\bar\pi(X)\frac{\bar{r}_1}{\bar{b}_1}(t\mid X)\frac{\bar{S}_{\dott 1}(\tau\mid 0,X)}{\bar{S}_{\dott 1}(t\mid 0,X)}\d\bar{M}_{\dott 1}(t\mid 0,X) \\
                       &\hphantom{=}\quad +\frac{D(1-A)}{\bar\alpha}\int_{0}^{\tau}\frac{\bar{r}_2}{\bar{b}_2}(t\mid X)\frac{\bar{S}_{12}(\tau\mid 0,X)}{\bar{S}_{12}(t\mid 0,X)}\d\bar{M}_{12}(t\mid 0,X) + \frac{D}{\bar\alpha}\bar{F}_{11}(\tau\mid 0,X).
  \end{align*}
  
  By Assumption~\ref{asn:regularity}, uniformly for \(t\in(0,\tau]\) and \(x\in\mathcal{X}_1\), \(\hat{b}_{1}\), \(\bar{b}_{1}\), \(\hat{b}_{2}\), and \(\bar{b}_{2}\) are bounded away from \(0\) and from above, \(\hat{r}_{1}\) and \(\bar{r}_{1}\) are positive and bounded from above, while \(\hat{r}_{2}\) and \(\bar{r}_{2}\) negative and bounded from below.

  Decompose the difference as
  \[
    \hat\ell_1(0)-\bar\ell_1(0) = \sum_{m=1}^{10}\delta_m,
  \]
  where
  \begin{align*}
    \delta_1 &= \frac{D}{\hat\alpha}\hat{F}_{11}(\tau\mid 0,X) - \frac{D}{\bar\alpha}\bar{F}_{11}(\tau\mid 0,X),\\
        \delta_2 &= \frac{1-A}{\hat\alpha}\int_{0}^{\tau}(\hat\pi-\bar\pi)(X)\frac{\hat{r}_{1}}{\hat{b}_1}(t\mid X)\frac{\hat{S}_{\dott 1}(\tau\mid 0,X)}{\hat{S}_{\dott 1}(t\mid 0,X)}\d\hat{M}_{\dott 1}(t\mid 0,X),\\
    \delta_3 &= \frac{1-A}{\hat\alpha}\int_{0}^{\tau}\bar\pi(X)\frac{\hat{r}_{1}-\bar{r}_1}{\hat{b}_1}(t\mid X)\frac{\hat{S}_{\dott 1}(\tau\mid 0,X)}{\hat{S}_{\dott 1}(t\mid 0,X)}\d\hat{M}_{\dott 1}(t\mid 0,X),\\
    \delta_4 &= -\frac{1-A}{\hat\alpha\bar\alpha}\int_{0}^{\tau}\bar{\pi}(X)\frac{\bar{r}_{1}(\hat\alpha\hat{b}_1-\bar\alpha\bar{b}_1)}{\hat{b}_1\bar{b}_1}(t\mid X)\frac{\hat{S}_{\dott 1}(\tau\mid 0,X)}{\hat{S}_{\dott 1}(t\mid 0,X)}\d\hat{M}_{\dott 1}(t\mid 0,X),\\
    \delta_5 &= \frac{1-A}{\bar\alpha}\int_{0}^{\tau}\bar\pi(X)\frac{\bar{r}_{1}}{\bar{b}_1}(t\mid X) \bigg\{\frac{\hat{S}_{\dott 1}(\tau\mid 0,X)}{\hat{S}_{\dott 1}(t\mid 0,X)}-\frac{\bar{S}_{\dott 1}(\tau\mid 0,X)}{\bar{S}_{\dott 1}(t\mid 0,X)}\bigg\} \d N_1(t),\\
    \delta_6 &= -\frac{1-A}{\bar\alpha}\int_{0}^{\tau}\bar\pi(X)\frac{\bar{r}_{1}}{\bar{b}_1}(t\mid X) Y(t)\bigg\{\frac{\hat{S}_{\dott 1}(\tau\mid 0,X)}{\hat{S}_{\dott 1}(t\mid 0,X)}\d\hat\Alpha_{\dott 1}(t\mid 0,X)-\frac{\bar{S}_{\dott 1}(\tau\mid 0,X)}{\bar{S}_{\dott 1}(t\mid 0,X)}\d\bar\Alpha_{\dott 1}(t\mid 0,X)\bigg\},\\
    \delta_7 &= \frac{D}{\hat\alpha}(1-A)\int_{0}^{\tau}\frac{\hat{r}_{2}-\bar{r}_{2}}{\hat{b}_2}(t\mid X)\frac{\hat{S}_{12}(\tau\mid 0,X)}{\hat{S}_{12}(t\mid 0,X)}\d\hat{M}_{12}(t\mid 0,X),\\
    \delta_8 &= -\frac{D}{\hat\alpha\bar\alpha}(1-A)\int_{0}^{\tau}\frac{\bar{r}_2(\hat\alpha\hat{b}_2-\bar\alpha\bar{b}_2)}{(\hat{b}_2\bar{b}_2)}(t\mid X)\frac{\hat{S}_{12}(\tau\mid 0,X)}{\hat{S}_{12}(t\mid 0,X)}\d\hat{M}_{12}(t\mid 0,X),\\
    \delta_9 &= \frac{D}{\bar\alpha}(1-A)\int_{0}^{\tau}\frac{\bar{r}_{2}}{\bar{b}_2}(t\mid X)\bigg\{\frac{\hat{S}_{12}(\tau\mid 0,X)}{\hat{S}_{12}(t\mid 0,X)}-\frac{\bar{S}_{12}(\tau\mid 0,X)}{\bar{S}_{12}(t\mid 0,X)}\bigg\} \d N_{2}(t),\\
    \delta_{10} &= -\frac{D}{\bar\alpha}(1-A)\int_{0}^{\tau}\frac{\bar{r}_{2}}{\bar{b}_2}(t\mid X)Y(t)\bigg\{\frac{\hat{S}_{12}(\tau\mid 0,X)}{\hat{S}_{12}(t\mid 0,X)}\d\hat\Alpha_{12}(t\mid 0,X)-\frac{\bar{S}_{12}(\tau\mid 0,X)}{\bar{S}_{12}(t\mid 0,X)}\d\bar\Alpha_{12}(t\mid 0,X)\bigg\},
  \end{align*}
  where in the second to last step, we used Lemma~\ref{lem:integration}.

  We first relate the difference of cumulative incidences for cause 1 to the survival functions as follows: for \(t\in(0,\tau]\),
  \begin{align*}
    \MoveEqLeft|\hat{F}_{11}-\bar{F}_{11}|(t\mid 0,X)\\
    &= \bigg|\int_{0}^{t}(\hat{S}_{\dott 1}\hat{S}_{12})(s\!-\!\mid 0,X)\d\hat{\Alpha}_{\dott 1}(s\mid 0,X) - \int_{0}^{\tau}(\bar{S}_{\dott 1}\bar{S}_{12})(s\!-\!\mid 0,X)\d\bar\Alpha_{\dott 1}(s\mid 0,X)\bigg| \\
    &\leq  \bigg|\int_{0}^{t}\{(\hat{S}_{12}-\bar{S}_{12})\hat{S}_{\dott 1}\}(s\!-\!\mid 0,X)\d\hat{\Alpha}_{\dott 1}(s\mid 0,X)\bigg|\\
    &\hphantom{\leq}\quad +\bigg|\int_{0}^{t}\bar{S}_{12}(s\!-\!\mid 0,X)\big\{\hat{S}_{\dott 1}(s\!-\!\mid 0,X)\d\hat\Alpha_{\dott 1}(s\mid 0,X)-\bar{S}_{\dott 1}(s\!-\!\mid 0,X)\d\bar\Alpha_{\dott 1}(s\mid 0,X)\big\}\bigg|\\
    &\leq  \sup_{s\in(0,\tau]}|\hat{S}_{12}-\bar{S}_{12}|(s\mid 0,X)\hat{S}_{\dott 1}(\tau\mid 0,X)\\
    &\hphantom{\leq}\quad +\bigg|\{(\bar{S}_{\dott 1}-\hat{S}_{\dott 1})\bar{S}_{12}\}(t\mid 0,X)-\int_{0}^{t}(\bar{S}_{\dott 1}-\hat{S}_{\dott 1})(s\mid 0,X)\d\bar{S}_{12}(s\mid 0,X)\bigg|\\
    &\leq  \sup_{s\in(0,\tau]}|\hat{S}_{12}-\bar{S}_{12}|(s\mid 0,X) + \sup_{s\in(0,\tau]}|\hat{S}_{\dott 1}-\bar{S}_{\dott 1}|(s\mid 0,X).
  \end{align*}

  By the triangular inequality, \(\|\hat\ell_1(0)-\bar\ell_1(0)\|_{P} \leq \sum_{m=1}^{10}\{P\delta_m^2\}^{1/2}\).
  Below we bound each term \(P\delta_m^2\).
  Let \(P_1\) denote the probability measure \(P(\cdot\mid D=1)\).

  \paragraph{Term \(\delta_1\)}
  \begin{align*}
    P\delta_1^2 &= P\bigg\{\frac{D}{\hat\alpha}\hat{F}_{11}(\tau\mid 0,X) - \frac{D}{\bar\alpha}\bar{F}_{11}(\tau\mid 0,X)\bigg\}^2 \\
                &\lesssim P\bigg[\frac{D}{\hat\alpha}\{\hat{F}_{11}(\tau\mid 0,X) -\bar{F}_{11}(\tau\mid 0,X)\}\bigg]^2 + P\bigg\{D\frac{\hat\alpha-\bar\alpha}{\bar\alpha\hat\alpha}\bar{F}_{11}(\tau\mid 0,X)\bigg\}^2 \\
                &\lesssim P_1(\hat{F}_{11}-\bar{F}_{11})^2(\tau\mid 0,X)+ |\hat\alpha-\bar\alpha|^2\\
                &\leq |\hat\alpha-\bar\alpha|^2 + P_1\bigg\{\sup_{t\in(0,\tau]}|\hat{S}_{12}-\bar{S}_{12}|(t\mid 0,X)\bigg\}^2 + P_1\bigg\{\sup_{t\in(0,\tau]}|\hat{S}_{\dott 1}-\bar{S}_{\dott 1}|(t\mid 0,X)\bigg\}^2.
  \end{align*}

  \paragraph{Term \(\delta_2\)}
  \begin{align*}
    P\delta_2^2 &= P\bigg[\frac{1-A}{\hat\alpha}\int_{0}^{\tau}(\hat\pi-\bar\pi)(X)\frac{\hat{r}_{1}}{\hat{b}_1}(t\mid X)\frac{\hat{S}_{\dott 1}(\tau\mid 0,X)}{\hat{S}_{\dott 1}(t\mid 0,X)}\d\hat{M}_{\dott 1}(t\mid 0,X)\bigg]^2 \\
                &\lesssim P\bigg[(\hat\pi-\bar\pi)(X)\int_{0}^{\tau}\frac{\hat{S}_{\dott 1}(\tau\mid 0,X)}{\hat{S}_{\dott 1}(t\mid 0,X)}\d\hat{M}_{\dott 1}(t\mid 0,X)\bigg]^2\\
                &\leq P\bigg[|\hat\pi-\bar\pi|(X)\int_{0}^{\tau}\frac{\hat{S}_{\dott 1}(\tau\mid 0,X)}{\hat{S}_{\dott 1}(t\mid 0,X)}\{\d N_1(t) + Y(t)\d \hat{\Alpha}_{\dott 1}(t\mid 0,X)\}\bigg]^2\\
                &\lesssim P|\hat\pi-\bar\pi|^2(X),
  \end{align*}
  where in the last step we used 
  \[
    \int_{0}^{\tau}\frac{\hat{S}_{\dott 1}(\tau\mid 0,X)}{\hat{S}_{\dott 1}(t\mid 0,X)}\d N_1(t) = \I(\tilde{T}\leq \tau,\tilde{J}=1)\frac{\hat{S}_{\dott 1}(\tau\mid 0,X)}{\hat{S}_{\dott 1}(\tilde{T}\mid 0,X)}\leq 1,
  \]
  and, by the backward equation in Lemma~\ref{lem:product},
  \[
    \int_{0}^{\tau}\frac{\hat{S}_{\dott 1}(\tau\mid 0,X)}{\hat{S}_{\dott 1}(t\mid 0,X)}Y(t)\d \hat{\Alpha}_{\dott 1}(t) = \frac{\hat{S}_{\dott 1}(\tau\mid 0,X)}{\hat{S}_{\dott 1}(\tilde{T}\wedge \tau\mid 0,X)}-\hat{S}_{\dott 1}(\tau\mid 0,X)\leq 2.
  \]

  \paragraph{Term \(\delta_3\)}
  \begin{align*}
    P\delta_3^2 &= P\bigg[\frac{1-A}{\hat\alpha}\int_{0}^{\tau}\bar\pi(X)\frac{\hat{r}_{1}-\bar{r}_1}{\hat{b}_1}(t\mid X)\frac{\hat{S}_{\dott 1}(\tau\mid 0,X)}{\hat{S}_{\dott 1}(t\mid 0,X)}\d\hat{M}_{\dott 1}(t\mid 0,X)\bigg]^2 \\
                &\lesssim P\bigg[\int_{0}^{\tau}(\hat{r}_{1}-\bar{r}_1)(t\mid X)\frac{\hat{S}_{\dott 1}(\tau\mid 0,X)}{\hat{S}_{\dott 1}(t\mid 0,X)}\d\hat{M}_{\dott 1}(t\mid 0,X)\bigg]^2\\
                &= P\bigg\{\int_{0}^{\tau}\bigg(\frac{\bar{S}_{\dott 1}-\hat{S}_{\dott 1}}{\bar{S}_{\dott 1}}(\tau\mid 0,X)\hat{r}_1(t\mid X)+ \frac{1}{\bar{S}_{\dott 1}(\tau\mid 0,X)}\big[\{(\hat{S}_{\dott 1}-\bar{S}_{\dott 1})\}\hat{S}_{12}(t\mid 0,X) \\
                &\hphantom{= P\bigg\{\int_{0}^{\tau}\bigg\{\quad}+\{\bar{S}_{\dott 1}(\hat{S}_{12}-\bar{S}_{12})\}(t\mid 0,X)-(\hat{F}_{11}-\bar{F}_{11})(\tau\mid 0,X)+(\hat{F}_{11}-\bar{F}_{11})(t\mid 0,X)\big]\bigg)\\
                &\lesssim P\bigg[\int_{0}^{\tau}\big\{|\bar{S}_{\dott 1}-\hat{S}_{\dott 1}|(\tau\mid 0,X)+ |\hat{S}_{\dott 1}-\bar{S}_{\dott 1}|(t\mid 0,X) \\
                &\hphantom{= P\bigg\{\int_{0}^{\tau}\bigg\{\quad}+|\hat{S}_{12}-\bar{S}_{12}|(t\mid 0,X)|+|\hat{F}_{11}-\bar{F}_{11}|(\tau\mid 0,X)+|\hat{F}_{11}-\bar{F}_{11}|(t\mid 0,X)\big\}\\
                &\hphantom{= P\bigg[\int_{0}^{\tau}\bigg\{\quad}\frac{\hat{S}_{\dott 1}(\tau\mid 0,X)}{\hat{S}_{\dott 1}(t\mid 0,X)}\{\d N_1(t) + Y(t)\d \hat{\Alpha}_{\dott 1}(t\mid 0,X)\}\bigg]^2 \\
                &\lesssim P_1\bigg[\bigg\{\sup_{t\in(0,\tau]}|\hat{S}_{\dott 1}-\bar{S}_{\dott 1}|(t\mid 0,X) + \sup_{t\in(0,\tau]}|\hat{S}_{12}-\bar{S}_{12}|(t\mid 0,X)\\
                &\hphantom{\lesssim P_1\bigg[\bigg\{} \quad + \sup_{t\in(0,\tau]}|\hat{F}_{11}-\bar{F}_{11}|(t\mid 0,X)\bigg\}\int_{0}^{\tau}\frac{\hat{S}_{\dott 1}(\tau\mid 0,X)}{\hat{S}_{\dott 1}(t\mid 0,X)}\{\d N_1(t) + Y(t)\d \hat{\Alpha}_{\dott 1}(t\mid 0,X)\}\bigg]\\
                &\lesssim P_1\bigg\{\sup_{t\in(0,\tau]}|\hat{S}_{\dott 1}-\bar{S}_{\dott 1}|(t\mid 0,X)\bigg\}^2 + P_1\bigg\{\sup_{t\in(0,\tau]}|\hat{S}_{12}-\bar{S}_{12}|(t\mid 0,X)\bigg\}^2.
  \end{align*}

  \paragraph{Term \(\delta_4\)}
  \begin{align*}
    P\delta_4^2 &= P\bigg\{\frac{1-A}{\hat\alpha\bar\alpha}\int_{0}^{\tau}\bar\pi(X)\frac{\bar{r}_{1}(\hat\alpha\hat{b}_1-\bar\alpha\bar{b}_1)}{\hat{b}_1\bar{b}_1}(t\mid X)\frac{\hat{S}_{\dott 1}(\tau\mid 0,X)}{\hat{S}_{\dott 1}(t\mid 0,X)}\d\hat{M}_{\dott 1}(t\mid 0,X)\bigg\}^2 \\
                &\lesssim P_1\bigg\{\int_{0}^{\tau}(\hat\alpha\hat{b}_1-\bar\alpha\bar{b}_1)(t\mid X)\frac{\hat{S}_{\dott 1}(\tau\mid 0,X)}{\hat{S}_{\dott 1}(t\mid 0,X)}\d\hat{M}_{\dott 1}(t\mid 0,X)\bigg\}^2\\
                &= P_1\bigg\{\int_{0}^{\tau}\big[(\hat\alpha-\bar\alpha)\hat{b}_1(t\mid X) + \bar\alpha(\hat\pi-\bar\pi)(X)\hat{e}_1(0\mid X)(\hat{S}_{12}\hat{S}_1^c)(t\!-\!\mid 0,X)\\
                &\hphantom{\leq} +\bar\alpha\bar\pi(X)(\hat{e}_1-\bar{e}_1)(0\mid X)(\hat{S}_{12}\hat{S}_{1}^c)(t\!-\!\mid 0,X) \\
                &\hphantom{\leq }+ \bar\alpha\bar\pi(X)\bar{e}_1(0\mid X)\{(\hat{S}_{12}-\bar{S}_{12})\hat{S}_{1}^c+\bar{S}_{12}(\hat{S}_{1}^c-\bar{S}_{1}^c)\}(t\!-\!\mid 0,X)\\
                &\hphantom{\leq }+ \bar\alpha(\bar\pi-\hat\pi)(X)(\hat{S}_{02}\hat{S}_0^c)(t\!-\!\mid X) \\
                &\hphantom{\leq}+ \bar\alpha\{1-\bar\pi(X)\}\{(\hat{S}_{02}-\bar{S}_{02})\hat{S}_0^c+\bar{S}_{02}(\hat{S}_{0}^c-\bar{S}_{0}^c)\}(t\!-\!\mid X)\big]\\
                &\hphantom{\leq P_1\bigg\{}\qquad \frac{\hat{S}_{\dott 1}(\tau\mid 0,X)}{\hat{S}_{\dott 1}(t\mid 0,X)}\d \hat{M}_{\dott 1}(t)\bigg\}^2 \\
                &\lesssim P\bigg[\big\{|\hat\alpha-\bar\alpha|+|\hat\pi-\bar\pi|(X)\big\}\int_0^\tau\frac{\hat{S}_{\dott 1}(\tau\mid 0,X)}{\hat{S}_{\dott 1}(t\mid 0,X)}\{\d N_1(t) + Y(t)\d \hat{\Alpha}_{\dott 1}(t\mid 0,X)\}\bigg]^2\\
                &\hphantom{\lesssim}\quad + P\I\{\pi(X)>0\}\bigg[\int_0^\tau\big\{|\hat{e}_1-\bar{e}_1|(0\mid X)+|\hat{S}_{12}-\bar{S}_{12}|(t\!-\!\mid 0,X)+|\hat{S}_1^c-\bar{S}_1^c|(t\!-\!\mid 0,X)\big\}\\
                &\hphantom{\lesssim}\quad \frac{\hat{S}_{\dott 1}(\tau\mid 0,X)}{\hat{S}_{\dott 1}(t\mid 0,X)}\{\d N_1(t) + Y(t)\d \hat{\Alpha}_{\dott 1}(t\mid 0,X)\}\bigg]^2 \\
                &\hphantom{\lesssim}\quad + P\I[\pi(X)\{1-\pi(X)\}>0]\bigg[\int_0^\tau\big\{|\hat{S}_{02}-\bar{S}_{02}|(t\!-\!\mid 0,X)+|\hat{S}_0^c-\bar{S}_0^c|(t\!-\!\mid 0,X)\big\}\\
                &\hphantom{\lesssim}\quad \frac{\hat{S}_{\dott 1}(\tau\mid 0,X)}{\hat{S}_{\dott 1}(t\mid 0,X)}\{\d N_1(t) + Y(t)\d \hat{\Alpha}_{\dott 1}(t\mid 0,X)\}\bigg]^2 \\
                &\lesssim |\hat\alpha-\bar\alpha|^2 + P|\hat\pi-\bar\pi|^2(X)+ P\I\{\pi(X)>0\}|\hat{e}_1-\bar{e}_1|^2(0\mid X) \\
                &\hphantom{\lesssim}\quad + P\I\{\pi(X)>0\}\bigg\{\sup_{t\in(0,\tau]}|\hat{S}_{\dott 1}-\bar{S}_{\dott 1}|(t\mid 0,X)\bigg\}^2\\
                &\hphantom{\lesssim}\quad + P\I\{\pi(X)>0\}\bigg\{\sup_{t\in(0,\tau]}|\hat{S}_{12}-\bar{S}_{12}|(t\mid 0,X)\bigg\}^2 \\
                &\hphantom{\lesssim}\quad + P\I[\pi(X)\{1-\pi(X)\}>0]\bigg\{\sup_{t\in(0,\tau]}|\hat{S}_{02}-\bar{S}_{02}|(t\mid X)\bigg\}^2 \\
                &\hphantom{\lesssim}\quad  + P\I[\pi(X)\{1-\pi(X)\}>0]\bigg\{\sup_{t\in(0,\tau]}|\hat{S}_{1}^c-\bar{S}_{1}^c|(t\mid 0,X)\bigg\}^2 \\
                &\hphantom{\lesssim}\quad + P\I[\pi(X)\{1-\pi(X)\}>0]\bigg\{\sup_{t\in(0,\tau]}|\hat{S}_{0}^c-\bar{S}_{0}^c|(t\mid X)\bigg\}^2.
  \end{align*}

  \paragraph{Term \(\delta_5\)}  
  \begin{align*}
    P\delta_5^2 &= P\bigg\{\frac{1-A}{\bar\alpha}\int_{0}^{\tau}\bar\pi(X)\frac{\bar{r}_{1}}{\bar{b}_1}(t\mid X) \bigg\{\frac{\hat{S}_{\dott 1}(\tau\mid 0,X)}{\hat{S}_{\dott 1}(t\mid 0,X)}-\frac{\bar{S}_{\dott 1}(\tau\mid 0,X)}{\bar{S}_{\dott 1}(t\mid 0,X)}\bigg\} \d N_1(t)\bigg\}^2\\
                &\lesssim P\I\{\pi(X)>0\}\I\{\tilde{T}\leq \tau,\tilde{J}=1\}\bigg|\frac{\hat{S}_{\dott 1}(\tau\mid 0,X)}{\hat{S}_{\dott 1}(\tilde{T}\mid 0,X)}-\frac{\bar{S}_{\dott 1}(\tau\mid 0,X)}{\bar{S}_{\dott 1}(\tilde{T}\mid 0,X)}\bigg|^2 \\
                &\lesssim P\I\{\pi(X)>0\}\bigg\{\sup_{t\in(0,\tau]}|\hat{S}_{\dott 1}-\bar{S}_{\dott 1}|(t\mid 0,X)\bigg\}^2.
  \end{align*}

  \paragraph{Term \(\delta_6\)}
  \begin{align*}
    P\delta_6^2 &= \begin{multlined}[t][.9\textwidth]
    P\bigg[\frac{1-A}{\bar\alpha}\int_{0}^{\tau}\bar{\pi}(X)\frac{\bar{r}_{1}}{\bar{b}_1}(t\mid X) Y(t)\\
    \bigg\{\frac{\hat{S}_{\dott 1}(\tau\mid 0,X)}{\hat{S}_{\dott 1}(t\mid 0,X)}\d\hat\Alpha_{\dott 1}(t\mid 0,X)-\frac{\bar{S}_{\dott 1}(\tau\mid 0,X)}{\bar{S}_{\dott 1}(t\mid 0,X)}\d\bar\Alpha_{\dott 1}(t\mid 0,X)\bigg\}\bigg]^2
    \end{multlined}\\
                &\lesssim P\I\{\pi(X)>0\}\bigg[\int_{0}^{\tau}\frac{\bar{r}_{1}}{\bar{b}_1}(t\mid X) Y(t)\d\bigg\{\frac{\hat{S}_{\dott 1}(\tau\mid 0,X)}{\hat{S}_{\dott 1}(t\mid 0,X)}-\frac{\bar{S}_{\dott 1}(\tau\mid 0,X)}{\bar{S}_{\dott 1}(t\mid 0,X)}\bigg\}\bigg]^2.
  \end{align*}
  By Assumption~\ref{asn:regularity}, \(\{\hat\Alpha_{\dott 1},\bar\Alpha_{\dott 1}\}\) do not share any discontinuity with \(\{\bar\Alpha_{12},\bar\Alpha_{02},\bar\Alpha_{1}^c,\bar\Alpha_{0}^c\}\), then we may replace \(\bar{b}(t\mid X)\) in the display before with the right-continuous version \(\bar{b}(t\!+\!\mid X)\), which equals 
  \[
    \bar\pi(X)\bar{e}_{1}(0\mid X)(\bar{S}_{12}\bar{S}_1^{c})(t\mid 0,X) + \{1-\bar\pi(X)\}(\bar{S}_{02}\bar{S}_0^{c})(t\mid X).
  \]
  Therefore, we can apply integration by parts from Lemma~\ref{lem:integration}. This leads to
  \begin{align*}
    P\delta_6^2 &= P\I\{\pi(X)>0\}\bigg[\frac{\bar{r}_{1}}{\bar{b}_1}(t\mid X)\bigg\{\frac{\hat{S}_{\dott 1}(\tau\mid 0,X)}{\hat{S}_{\dott 1}(t\mid 0,X)}-\frac{\bar{S}_{\dott 1}(\tau\mid 0,X)}{\bar{S}_{\dott 1}(t\mid 0,X)}\bigg\}\bigg\vert_{0}^{\tau\wedge\tilde{T}} \\
                &\hphantom{=P_1\bigg[}\quad - \int_{0}^{\tau\wedge \tilde{T}}\bigg\{\frac{\hat{S}_{\dott 1}(\tau\mid 0,X)}{\hat{S}_{\dott 1}(t\!-\!\mid 0,X)}-\frac{\bar{S}_{\dott 1}(\tau\mid 0,X)}{\bar{S}_{\dott 1}(t\!-\!\mid 0,X)}\bigg\}\d\bigg(\frac{\bar{r}_{1}}{\bar{b}_1}\bigg)(t\mid X)\bigg]^2\\
                &\lesssim P\I\{\pi(X)>0\}\bigg\{\sup_{t\in(0,\tau]}|\hat{S}_{\dott 1}-\bar{S}_{\dott 1}|(t\mid 0,X)\bigg\}^2 \\
                &\hphantom{\lesssim}\quad + P\I\{\pi(X)>0\}\bigg[\bigg\{\sup_{t\in(0,\tau\wedge \tilde{T}]}\bigg|\frac{\hat{S}_{\dott 1}(\tau\mid 0,X)}{\hat{S}_{\dott 1}(t\!-\!\mid 0,X)}-\frac{\bar{S}_{\dott 1}(\tau\mid 0,X)}{\bar{S}_{\dott 1}(t\!-\!\mid 0,X)}\bigg|\bigg\}\bigg\{\frac{-\bar{S}_{1}(t\mid 0,X)}{\bar{b}_1(t\mid X)}\bigg\}\bigg|_{0}^{\tau\wedge\tilde{T}}\bigg]^2\\
                &\hphantom{\lesssim}\quad + P\I\{\pi(X)>0\}\bigg[\bigg\{\sup_{t\in(0,\tau\wedge \tilde{T}]}\bigg|\frac{\hat{S}_{\dott 1}(\tau\mid 0,X)}{\hat{S}_{\dott 1}(t\!-\!\mid 0,X)}-\frac{\bar{S}_{\dott 1}(\tau\mid 0,X)}{\bar{S}_{\dott 1}(t\!-\!\mid 0,X)}\bigg|\bigg\}\\
                &\qquad\qquad\qquad\qquad\qquad\qquad\qquad \bigg\{\frac{\bar{F}_{11}(t\mid 0,X)-\bar{F}_{11}(\tau\mid 0,X)}{\bar{b}_1(t\mid X)}\bigg\}\bigg|_{0}^{\tau\wedge\tilde{T}}\bigg]^2\\
                &\lesssim P\I\{\pi(X)>0\}\bigg\{\sup_{t\in(0,\tau]}|\hat{S}_{\dott 1}-\bar{S}_{\dott 1}|^2(t\mid 0,X)\bigg\}^2.
  \end{align*}

  The remaining terms \(P\delta_7^2\), \(P\delta_8^2\), \(P\delta_9^2\), and \(P\delta_{10}^2\) can be analogously bounded as \(P\delta_3^2\), \(P\delta_4^2\), \(P\delta_5^2\), and \(P\delta_6^2\), respectively.
  Now, for any \(\Alpha_1,\Alpha_2\in\mathcal{A}\) with product integrals \(S_1\) and \(S_2\),
  \begin{align*}
  \sup_{t\in(0,\tau]}|S_1-S_2|(t) &= \sup_{t\in(0,\tau]}\bigg|\int_0^t\frac{S_2(t)}{S_2(s)}S_1(s-)\d(A_2-A_1)(s)\bigg| \\
  &\leq \sup_{t\in(0,\tau]}\int_0^t\d|A_2-A_1|(s)\\
  &\leq \sup_{t\in(0,\tau]}|A_2-A_1|(t).
  \end{align*}
  Therefore, by Assumption~\ref{asn:plim}, the terms \(P\delta_m^2\) are all \(o_{P}(1)\), and the lemma follows.
\end{proof}

\begin{proof}[Proof of Theorem~\ref{thm:asymptotic}]
  We first show consistency of the estimator \(\hat\theta_1(0)\).
  Decompose the bias as
  \[
    \hat\theta_1(0)-\theta_1(0) = (\mathbb{P}_n-P)\hat\ell_1(0) + P\{\hat\ell_1(0)-\bar\ell_1(0)\} +P\bigg\{\bar\ell_1(0)-\frac{\alpha}{\bar\alpha}\ell_1(0)\bigg\}-\frac{\hat\alpha-\alpha}{\hat\alpha}\theta_1(0).
  \]
  The first term is \(o_{P}(1)\) by the uniform law of large numbers because \(\hat{\ell}_1(0)\) belongs to a \(P\)-Glivenko-Cantelli class.
  The second term is bounded by Jensen's inequality as \(P\{\hat\ell_1(0)-\bar\ell_1(0)\}\leq [P\{\hat\ell_1(0)-\bar\ell_1(0)\}^2]^{1/2}\), which converges in probability to zero by Lemma~\ref{lem:l2-convergence}.
  The third term is exactly \(0\) by \(\bar\alpha = \alpha\), the assumption on the correct specifications of the nuisance estimators, and Lemma~\ref{lem:diff}.
  The fourth term is trivially \(o_{P}(1)\) from \(\hat\alpha\overset{\mathrm{p}}{\to}\alpha\) and Slutsky's theorem.
  Therefore, \(\hat\theta_1(0)\overset{\mathrm{p}}{\to}\theta_1(0)\).

  Then we show asymptotic linearity of \(\hat\theta_1(0)\).
  We decompose the bias again as
  \[
    \hat\theta_1(0)-\theta_1(0) = (\mathbb{P}_n-P)\varphi_1(0) + (\mathbb{P}_n-P)\{\hat\ell_1(0)-\ell_1(0)\} + P\bigg\{\hat\ell_1(0)-\frac{\alpha}{\hat\alpha}\ell_1(0)\bigg\}+\frac{(\hat\alpha-\alpha)^{2}}{\hat\alpha\alpha}\theta_1(0).
  \]
  The second term is \(o_{P}(n^{-1/2})\) by Lemma 19.24 in \citetsuppmat{vandervaart1998asymptotic} because \(P\{\hat\ell_1(0)-\ell_1(0)\}^2\overset{\mathrm{p}}{\to}0\) by Lemma~\ref{lem:l2-convergence}.
  The third term is \(o_{P}(n^{-1/2})\) by assumption and Lemma~\ref{lem:diff}.
  The fourth term is trivially \(O_{P}(n^{-1})=o_{P}(n^{-1/2})\) from \(n^{1/2}(\hat\alpha-\alpha)\overset{\mathrm{d}}{\to}\mathrm{Normal}\{0,\alpha(1-\alpha)\}\), \(\hat\alpha\overset{\mathrm{p}}{\to}\alpha\), and Slutsky's theorem.
  Therefore, \(\hat\theta_1(0)-\theta_1(0)=\mathbb{P}_n\varphi_1(0)+o_{P}(n^{-1/2})\), since \(E_P\varphi_1(0)=0\).

\end{proof}

\section{Details on the simulation study}
\label{sec:sim-app}

Define
\begin{align*}
  \hat{W}_{kj}(t,s\mid a,x) &= \I(j=k)\hat{S}_{1}^c(s\!-\!\mid a,x)-\frac{\hat{F}_{1j}(t\mid a,x)-\hat{F}_{1j}(s\mid a,x)}{1-\triangle\hat{\Alpha}_{1k}(s\mid a,x)},\quad (k\neq 1, a\neq 0)\\
  \hat{W}_{\dott j}(t,s\mid 0,x) &= \I(j=1)\hat{S}_{1}^c(s\!-\!\mid 0,x)-\frac{\hat{F}_{1j}(t\mid 0,x)-\hat{F}_{1j}(s\mid 0,x)}{1-\triangle\hat{\Alpha}_{\dott 1}(s\mid 0,x)}.
\end{align*}
Consider the following functions with plug-in nuisance estimators:
\begin{align*}
  \hat\ell_{j}(0,t)(O) &= \frac{1-A}{\hat\alpha}\hat\pi(X)\int_{0}^{t}\frac{\hat{W}_{\dott j}(t,s\mid 0,X)}{\hat{H}_{\dott}(s\!-\!\mid X)}\d\hat{M}_{\dott 1}(s\mid 0,X) \\
                     &\hphantom{=}\quad +\frac{D(1-A)}{\hat\alpha}\int_{0}^{t}\frac{\hat{W}_{2j}(t,s\mid 0,X)}{\hat{H}_{1}(s\!-\!\mid 0,X)}\d\hat{M}_{12}(s\mid 0,X) + \frac{D}{\hat\alpha}\hat{F}_{1j}(t\mid 0,X),\\
  \hat\ell_{j}(1,t)(O) &= \sum_{k\in\{1,2\}}\frac{DA}{\hat\alpha}\int_{0}^{t}\frac{\hat{W}_{kj}(t,s\mid 1,X)}{\hat{H}_1(s\!-\!\mid 1,X)}\d\hat{M}_{1k}(s\mid 1,X) + \frac{D}{\hat\alpha}\hat{F}_{1j}(t\mid 1,X),\\
  \hat\ell_{j}^\dagger(0,t)(O) &= \sum_{k\in\{1,2\}}\frac{D(1-A)}{\hat\alpha}\int_{0}^{t}\frac{\hat{W}_{kj}(t,s\mid 0,X)}{\hat{H}_1(s\!-\!\mid 0,X)}\d\hat{M}_{1k}(s\mid 0,X) + \frac{D}{\hat\alpha}\hat{F}_{1j}(t\mid 0,X).
\end{align*}
Let \(\mathbb{P}_nf=n^{-1}\sum_{i=1}^{n}f(O_i)\).
Then we have the corresponding estimators
\begin{align*}
  \hat\theta_j(a,t) &= \mathbb{P}_n\hat\ell_j(a,t)(O),\quad &\hat\theta_j^\dagger(0,t) &= \mathbb{P}_n\hat\ell_j^\dagger(0,t)(O),\\
  \hat\gamma_j(a,t) &= \mathbb{P}_n\int_0^t\hat\ell_j(a,s)(O)\d s,\quad &\hat\gamma_j^\dagger(a,t) &= \mathbb{P}_n\int_0^t\hat\ell_j^\dagger(a,s)(O)\d s,
\end{align*}
where \(\hat\theta_j^\dagger(0,t)\) and \(\hat\gamma_j^\dagger(0,t)\) are the RCT-only estimators for the parameters \(\theta_j(0,t)\) and \(\gamma_j(0,t)\).

Tables~\ref{tab:sim-cif-app-1}--\ref{tab:sim-rmtl-app-2} display the summary statistics for the estimand \(\theta_{1}(1,t)\), the set of estimands for cause 2 \(\{\theta_2(1,t),\theta_2(0,t),\theta_2\{t\}\}\), the estimand \(\gamma_{1}(1,t)\), and the set of estimands for cause 2 \(\{\gamma_2(1,t),\gamma_2(0,t),\gamma_2\{t\}\}\), respectively.

\begin{table}
  \caption{Simulation results for cumulative incidences \(\theta_1(1,t)\).}
  \label{tab:sim-cif-app-1}
  \footnotesize
  \centering
  
\begin{tabular}{rlrrrrr}
{\(n\)} & {\(t\)} & {Mean} & {Bias} & {RMSE} & {SE} & {Coverage}\\
\(750\) & \(0.25\) & \(0.12\) & \(2.70\) & \(2.54\) & \(2.52\) & \(93.2\)\\
 & \(1\) & \(0.22\) & \(12.19\) & \(3.41\) & \(3.43\) & \(95.2\)\\
 & \(2\) & \(0.28\) & \(16.69\) & \(3.77\) & \(3.83\) & \(95.4\)\\
\(1500\) & \(0.25\) & \(0.12\) & \(-1.26\) & \(1.82\) & \(1.78\) & \(93.0\)\\
 & \(1\) & \(0.22\) & \(-3.30\) & \(2.56\) & \(2.43\) & \(93.4\)\\
 & \(2\) & \(0.27\) & \(-10.69\) & \(2.81\) & \(2.71\) & \(93.0\)\\
\end{tabular}

  \medskip
  \begin{flushleft}
    Mean: average of estimates; Bias: Monte-Carlo bias, \(10^{-4}\); RMSE: root mean squared error, \(10^{-2}\); SE: average of standard error estimates, \(10^{-2}\); Coverage: \(95\%\) confidence interval coverage, \(\%\).
  \end{flushleft}
\end{table}

\begin{table}
  \caption{Simulation results for cumulative incidences \(\theta_2(1,t)\), \(\theta_2(0,t)\), and \(\theta_2(t)\).}
  \label{tab:sim-cif-app-2}
  \footnotesize
  \centering
  
\begin{tabular}{rlrlrrrrrr}
{\(n\)} & {Estimand} & {\(t\)} & {Type} & {Mean} & {Bias} & {RMSE} & {SE} & {Coverage} & {Reduction}\\
\(750\) & \(\theta_2(0,t)\) & \(0.25\) & \(+\) & \(0.23\) & \(13.18\) & \(3.37\) & \(3.27\) & \(93.8\) & \(0.94\)\\
 &  &  & \(-\) & \(0.23\) & \(11.83\) & \(3.41\) & \(3.28\) & \(93.4\) & \(\cdot\)\\
 &  & \(1\) & \(+\) & \(0.44\) & \(-2.74\) & \(3.87\) & \(3.94\) & \(95.1\) & \(5.21\)\\
 &  &  & \(-\) & \(0.43\) & \(-6.66\) & \(3.97\) & \(4.05\) & \(95.3\) & \(\cdot\)\\
 &  & \(2\) & \(+\) & \(0.55\) & \(3.00\) & \(4.10\) & \(4.01\) & \(93.9\) & \(10.84\)\\
 &  &  & \(-\) & \(0.54\) & \(1.31\) & \(4.33\) & \(4.25\) & \(94.3\) & \(\cdot\)\\
 & \(\theta_2(1,t)\) & \(0.25\) & \(-\) & \(0.23\) & \(-3.32\) & \(3.28\) & \(3.28\) & \(94.7\) & \(\cdot\)\\
 &  & \(1\) & \(-\) & \(0.42\) & \(-14.16\) & \(4.06\) & \(4.02\) & \(94.1\) & \(\cdot\)\\
 &  & \(2\) & \(-\) & \(0.51\) & \(-10.60\) & \(4.36\) & \(4.23\) & \(93.9\) & \(\cdot\)\\
 & \(\theta_2(t)\) & \(0.25\) & \(+\) & \(0.00\) & \(-16.49\) & \(4.55\) & \(4.54\) & \(95.1\) & \(0.50\)\\
 &  &  & \(-\) & \(0.00\) & \(-15.15\) & \(4.58\) & \(4.55\) & \(95.1\) & \(\cdot\)\\
 &  & \(1\) & \(+\) & \(-0.02\) & \(-11.42\) & \(5.45\) & \(5.48\) & \(95.0\) & \(2.78\)\\
 &  &  & \(-\) & \(-0.01\) & \(-7.50\) & \(5.55\) & \(5.56\) & \(94.9\) & \(\cdot\)\\
 &  & \(2\) & \(+\) & \(-0.03\) & \(-13.60\) & \(5.71\) & \(5.69\) & \(94.8\) & \(5.72\)\\
 &  &  & \(-\) & \(-0.03\) & \(-11.91\) & \(5.94\) & \(5.86\) & \(94.5\) & \(\cdot\)\\
\(1500\) & \(\theta_2(0,t)\) & \(0.25\) & \(+\) & \(0.23\) & \(-1.05\) & \(2.26\) & \(2.31\) & \(95.4\) & \(0.91\)\\
 &  &  & \(-\) & \(0.23\) & \(-1.61\) & \(2.27\) & \(2.32\) & \(95.0\) & \(\cdot\)\\
 &  & \(1\) & \(+\) & \(0.44\) & \(6.17\) & \(2.81\) & \(2.80\) & \(94.1\) & \(5.13\)\\
 &  &  & \(-\) & \(0.44\) & \(5.88\) & \(2.86\) & \(2.87\) & \(94.5\) & \(\cdot\)\\
 &  & \(2\) & \(+\) & \(0.55\) & \(1.46\) & \(2.86\) & \(2.86\) & \(94.0\) & \(10.70\)\\
 &  &  & \(-\) & \(0.55\) & \(2.04\) & \(3.02\) & \(3.03\) & \(94.5\) & \(\cdot\)\\
 & \(\theta_2(1,t)\) & \(0.25\) & \(-\) & \(0.23\) & \(-7.69\) & \(2.30\) & \(2.32\) & \(94.7\) & \(\cdot\)\\
 &  & \(1\) & \(-\) & \(0.42\) & \(-12.76\) & \(2.84\) & \(2.84\) & \(94.4\) & \(\cdot\)\\
 &  & \(2\) & \(-\) & \(0.51\) & \(-12.98\) & \(2.97\) & \(3.01\) & \(95.1\) & \(\cdot\)\\
 & \(\theta_2(t)\) & \(0.25\) & \(+\) & \(0.00\) & \(-6.64\) & \(3.17\) & \(3.21\) & \(95.4\) & \(0.48\)\\
 &  &  & \(-\) & \(0.00\) & \(-6.08\) & \(3.17\) & \(3.21\) & \(95.6\) & \(\cdot\)\\
 &  & \(1\) & \(+\) & \(-0.02\) & \(-18.93\) & \(3.84\) & \(3.88\) & \(95.5\) & \(2.74\)\\
 &  &  & \(-\) & \(-0.02\) & \(-18.64\) & \(3.87\) & \(3.93\) & \(96.1\) & \(\cdot\)\\
 &  & \(2\) & \(+\) & \(-0.03\) & \(-14.44\) & \(3.92\) & \(4.05\) & \(95.7\) & \(5.65\)\\
 &  &  & \(-\) & \(-0.03\) & \(-15.02\) & \(4.04\) & \(4.17\) & \(96.0\) & \(\cdot\)\\
\end{tabular}

  \medskip
  \begin{flushleft}
    Type: fusion estimator (\(+\)) or RCT-only estimator (\(-\)); Mean: average of estimates; Bias: Monte-Carlo bias, \(10^{-4}\); RMSE: root mean squared error, \(10^{-2}\); SE: average of standard error estimates, \(10^{-2}\); Coverage: \(95\%\) confidence interval coverage, \(\%\); Reduction: average of percentage reduction in squared standard error estimates, \(\%\).
  \end{flushleft}
\end{table}

\begin{table}
  \caption{Simulation results for restricted mean times lost \(\gamma_1(1,t)\).}
  \label{tab:sim-rmtl-app-1}
  \footnotesize
  \centering
  
\begin{tabular}{rlrrrrr}
{\(n\)} & {\(t\)} & {Mean} & {Bias} & {RMSE} & {SE} & {Coverage}\\
\(750\) & \(0.25\) & \(0.02\) & \(-0.55\) & \(0.44\) & \(0.43\) & \(94.3\)\\
 & \(1\) & \(0.15\) & \(-4.06\) & \(2.45\) & \(2.48\) & \(94.3\)\\
 & \(2\) & \(0.40\) & \(-15.55\) & \(5.65\) & \(5.75\) & \(94.5\)\\
\(1500\) & \(0.25\) & \(0.02\) & \(-0.70\) & \(0.31\) & \(0.31\) & \(93.5\)\\
 & \(1\) & \(0.15\) & \(-7.23\) & \(1.84\) & \(1.76\) & \(94.1\)\\
 & \(2\) & \(0.40\) & \(-27.24\) & \(4.32\) & \(4.08\) & \(92.4\)\\
\end{tabular}

  \medskip
  \begin{flushleft}
    Mean: average of estimates; Bias: Monte-Carlo bias, \(10^{-4}\); RMSE: root mean squared error, \(10^{-2}\); SE: average of standard error estimates, \(10^{-2}\); Coverage: \(95\%\) confidence interval coverage, \(\%\).
  \end{flushleft}
\end{table}

\begin{table}
  \caption{Simulation results for restricted mean times lost \(\gamma_2(1,t)\), \(\gamma_2(0,t)\), and \(\gamma_2(t)\).}
  \label{tab:sim-rmtl-app-2}
  \footnotesize
  \centering
  
\begin{tabular}{rlrlrrrrrr}
{\(n\)} & {Estimand} & {\(t\)} & {Type} & {Mean} & {Bias} & {RMSE} & {SE} & {Coverage} & {Reduction}\\
\(750\) & \(\gamma_2(0,t)\) & \(0.25\) & \(+\) & \(0.04\) & \(-1.11\) & \(0.58\) & \(0.57\) & \(94.6\) & \(0.50\)\\
 &  &  & \(-\) & \(0.04\) & \(-1.32\) & \(0.59\) & \(0.57\) & \(94.2\) & \(\cdot\)\\
 &  & \(1\) & \(+\) & \(0.30\) & \(-21.05\) & \(2.98\) & \(2.96\) & \(94.9\) & \(2.85\)\\
 &  &  & \(-\) & \(0.30\) & \(-23.22\) & \(3.04\) & \(3.00\) & \(94.7\) & \(\cdot\)\\
 &  & \(2\) & \(+\) & \(0.79\) & \(-75.58\) & \(6.32\) & \(6.32\) & \(94.7\) & \(6.21\)\\
 &  &  & \(-\) & \(0.79\) & \(-80.45\) & \(6.54\) & \(6.53\) & \(94.2\) & \(\cdot\)\\
 & \(\gamma_2(1,t)\) & \(0.25\) & \(-\) & \(0.04\) & \(-2.76\) & \(0.59\) & \(0.57\) & \(93.5\) & \(\cdot\)\\
 &  & \(1\) & \(-\) & \(0.29\) & \(-33.89\) & \(3.13\) & \(3.01\) & \(93.8\) & \(\cdot\)\\
 &  & \(2\) & \(-\) & \(0.76\) & \(-94.02\) & \(6.93\) & \(6.58\) & \(92.9\) & \(\cdot\)\\
 & \(\gamma_2(t)\) & \(0.25\) & \(+\) & \(0.00\) & \(-1.65\) & \(0.81\) & \(0.80\) & \(94.1\) & \(0.26\)\\
 &  &  & \(-\) & \(0.00\) & \(-1.44\) & \(0.81\) & \(0.80\) & \(94.3\) & \(\cdot\)\\
 &  & \(1\) & \(+\) & \(-0.01\) & \(-12.84\) & \(4.16\) & \(4.09\) & \(93.7\) & \(1.53\)\\
 &  &  & \(-\) & \(-0.00\) & \(-10.66\) & \(4.21\) & \(4.12\) & \(94.0\) & \(\cdot\)\\
 &  & \(2\) & \(+\) & \(-0.03\) & \(-18.44\) & \(8.87\) & \(8.82\) & \(94.7\) & \(3.31\)\\
 &  &  & \(-\) & \(-0.03\) & \(-13.57\) & \(9.09\) & \(8.97\) & \(94.4\) & \(\cdot\)\\
\(1500\) & \(\gamma_2(0,t)\) & \(0.25\) & \(+\) & \(0.04\) & \(-0.98\) & \(0.40\) & \(0.40\) & \(94.1\) & \(0.48\)\\
 &  &  & \(-\) & \(0.04\) & \(-1.06\) & \(0.41\) & \(0.41\) & \(94.2\) & \(\cdot\)\\
 &  & \(1\) & \(+\) & \(0.30\) & \(-12.13\) & \(2.09\) & \(2.10\) & \(94.8\) & \(2.80\)\\
 &  &  & \(-\) & \(0.30\) & \(-12.59\) & \(2.12\) & \(2.13\) & \(94.3\) & \(\cdot\)\\
 &  & \(2\) & \(+\) & \(0.79\) & \(-36.12\) & \(4.53\) & \(4.50\) & \(94.0\) & \(6.18\)\\
 &  &  & \(-\) & \(0.79\) & \(-36.57\) & \(4.65\) & \(4.65\) & \(94.3\) & \(\cdot\)\\
 & \(\gamma_2(1,t)\) & \(0.25\) & \(-\) & \(0.04\) & \(-1.95\) & \(0.41\) & \(0.41\) & \(93.7\) & \(\cdot\)\\
 &  & \(1\) & \(-\) & \(0.29\) & \(-19.07\) & \(2.14\) & \(2.14\) & \(94.7\) & \(\cdot\)\\
 &  & \(2\) & \(-\) & \(0.76\) & \(-52.09\) & \(4.72\) & \(4.68\) & \(94.5\) & \(\cdot\)\\
 & \(\gamma_2(t)\) & \(0.25\) & \(+\) & \(0.00\) & \(-0.96\) & \(0.56\) & \(0.56\) & \(94.7\) & \(0.25\)\\
 &  &  & \(-\) & \(0.00\) & \(-0.89\) & \(0.56\) & \(0.57\) & \(94.7\) & \(\cdot\)\\
 &  & \(1\) & \(+\) & \(-0.00\) & \(-6.94\) & \(2.88\) & \(2.91\) & \(95.9\) & \(1.49\)\\
 &  &  & \(-\) & \(-0.00\) & \(-6.47\) & \(2.89\) & \(2.93\) & \(95.9\) & \(\cdot\)\\
 &  & \(2\) & \(+\) & \(-0.03\) & \(-15.97\) & \(6.17\) & \(6.28\) & \(95.8\) & \(3.28\)\\
 &  &  & \(-\) & \(-0.03\) & \(-15.51\) & \(6.25\) & \(6.38\) & \(96.0\) & \(\cdot\)\\
\end{tabular}

  \medskip
  \begin{flushleft}
    Type: fusion estimator (\(+\)) or RCT-only estimator (\(-\)); Mean: average of estimates; Bias: Monte-Carlo bias, \(10^{-4}\); RMSE: root mean squared error, \(10^{-2}\); SE: average of standard error estimates, \(10^{-2}\); Coverage: \(95\%\) confidence interval coverage, \(\%\); Reduction: average of percentage reduction in squared standard error estimates, \(\%\).
  \end{flushleft}
\end{table}

\section{Details on the real data example}
\label{sec:data-example-app}

In both RCTs used in the data example, the rate of severe adverse events was relatively low, and the vast majority of participants were censored by the end of the study.
Considering the three-point major adverse cardiovascular event (MACE, a composite event of cardiovascular death, non-fatal myocardial infarction, and non-fatal stroke) as the primary event, only \(12\) out of \(1649\) subjects randomized to placebo in SUSTAIN-6 experienced the competing non-cardiovascular death event.
The crude hazard estimate of MACE is \(0.37\) events per \(100\) person-years, where \(1\) year counts as \(365.25\) days.
The number of non-cardiovascular deaths was \(133\) out of \(4672\) for the placebo group in LEADER, with a corresponding hazard of \(0.79\) events per \(100\) person-year.
While MACE was the primary outcome in the original analyses of both studies, we turned to the composite event of non-fatal myocardial infarction and non-fatal stroke as the event of interest.
The main reason is precisely that a greater number of competing events would allow us to better illustrate our method.

The SUSTAIN-6 trial followed participants for a maximum of \(104\) weeks since randomization with an end-of-trial visit at week \(109\).
The LEADER trial, on the other hand, planned a much longer follow-up period of up to \(54\) months.
Therefore, without assuming transportability of the conditional cause-specific hazards of both non-fatal cardiovascular outcome and all-cause death, none of the parameters considered would be identifiable beyond week \(104\).
In the data example, we chose to estimate parameters at \(4\) evenly spaced time points up to week \(104\).

For transportability of the cause-specific hazard of the composite event under placebo, we needed to control for the baseline covariates that are shifted prognostic variables between the RCT population and the external control population.
In the data example, we employed the list of baseline characteristics in Table~1 of \citetsuppmat{marso2016semaglutide}.
All cause-specific hazards were fitted by the Cox proportional hazards model with a linear combination of the baseline covariates as the logarithm of the multiplicative risk.
The concentration of low-density lipoprotein cholesterol was measured in \(\text{mmol}\cdot \text{l}^{-1}\) and subsequently log-transformed to reduce skewness.
History of hemorrhagic stroke was removed from the list, as its presence caused extreme numeric instability during the fitting of the Cox model.
Patients with missing baseline covariates were removed from the data.
Exact numbers of missing subjects per treatment groups are displayed in Table~\ref{tab:missing-app}.

The data included \(11\) tied event times between times to non-fatal cardiovascular event in the joint placebo arm and times to all-cause death in the placebo arm of SUSTAIN-6.
Since our estimator is presented under Assumption~\ref{asn:discontinuity}, we broke the ties by jittering the observed event times.
Specifically, a random sample of noise was drawn from the uniform distribution between \(0\) and \(10^{-5}\) and then added to the observed event times.
The event times used in the analysis were recorded in days as integers.
Therefore, such small perturbations should not have meaningful consequences for the results.

To stabilize the estimators, we set a threshold for inverse weights inside the integrals \(\hat\ell_j(a)(O)\).
For sample size \(n\), the inverse weights above the value \(n^{1/2}\log(n)/5\) were set to that value.

\begin{table}
  \caption{Subjects with missing baseline covariates in SUSTAIN-6 and LEADER.}
  \label{tab:missing-app}
  \footnotesize
  \centering
  \begin{tabular}{lrrr}
    & \multicolumn{2}{c}{SUSTAIN-6} & LEADER \\
    & Semaglutide \(1.0\) mg & Placebo & Placebo \\
    \(N\) & \(822\) & \(1649\) & \(4672\) \\
    Missing (\(N\)) & \(9\) & \(24\) & \(96\) \\
    Missing (\%) & \(1.09\) & \(1.46\) & \(2.05\)
  \end{tabular}
\end{table}

\begin{table}
  \caption{Treatment-specific cumulative incidences in the real data example.}
  \label{tab:analysis-cif-app}
  \footnotesize
  \centering
  \begin{tabular}{lrlrrr}
    Estimand & \(t\) (weeks) & Type & Estimate (\%) & 95\%-CI (\%) & Reduction\\
    \(\theta_1(0,t)\) & 26 & \(+\) & \(1.84\) & \((1.31,2.36)\) & \(23.17\)\\
             &  & \(-\) & \(1.97\) & \((1.29,2.65)\) & .\\
             & 52 & \(+\) & \(3.10\) & \((2.43,3.77)\) & \(23.20\)\\
             &  & \(-\) & \(3.21\) & \((2.34,4.09)\) & .\\
             & 78 & \(+\) & \(4.87\) & \((4.04,5.70)\) & \(21.30\)\\
             &  & \(-\) & \(4.66\) & \((3.60,5.72)\) & .\\
             & 104 & \(+\) & \(6.42\) & \((5.46,7.38)\) & \(21.78\)\\
             &  & \(-\) & \(6.25\) & \((5.02,7.48)\) & .\\
    \(\theta_2(0,t)\) & 26 & \(+\) & \(0.74\) & \((0.32,1.17)\) & \(-0.00\)\\
             &  & \(-\) & \(0.74\) & \((0.32,1.17)\) & .\\
             & 52 & \(+\) & \(1.23\) & \((0.69,1.78)\) & \(-0.00\)\\
             &  & \(-\) & \(1.23\) & \((0.69,1.78)\) & .\\
             & 78 & \(+\) & \(1.85\) & \((1.19,2.52)\) & \(-0.00\)\\
             &  & \(-\) & \(1.85\) & \((1.19,2.52)\) & .\\
             & 104 & \(+\) & \(2.92\) & \((2.08,3.76)\) & \(-0.00\)\\
             &  & \(-\) & \(2.92\) & \((2.08,3.76)\) & .\\
    \(\theta_1(1,t)\) & 26 & . & \(1.58\) & \((0.72,2.44)\) & .\\
             & 52 & . & \(2.49\) & \((1.42,3.55)\) & .\\
             & 78 & . & \(2.88\) & \((1.73,4.03)\) & .\\
             & 104 & . & \(3.69\) & \((2.40,4.98)\) & .\\
    \(\theta_2(1,t)\) & 26 & . & \(0.25\) & \((-0.12,0.62)\) & .\\
             & 52 & . & \(0.85\) & \((0.17,1.54)\) & .\\
             & 78 & . & \(1.90\) & \((0.88,2.93)\) & .\\
             & 104 & . & \(2.71\) & \((1.49,3.93)\) & .\\
  \end{tabular}
  \medskip
  \begin{flushleft}
    {Type: fusion estimator (\(+\)) or RCT-only estimator (\(-\)); CI: confidence interval; Reduction: percentage reduction CI length, \(\%\).}
  \end{flushleft}
\end{table}

\begin{table}
  \caption{Treatment-specific restricted mean times lost in the real data example.}
  \label{tab:analysis-rmtl-app}
  \footnotesize
  \centering
  \begin{tabular}{lrlrrr}
    Estimand & \(t\) (weeks) & Type & Estimate (weeks) & 95\%-CI (weeks) & Reduction\\
\(\gamma_1(0,t)\) & 26 & \(+\) & \(0.27\) & \((0.18,0.35)\) & \(20.83\)\\
 &  & \(-\) & \(0.27\) & \((0.17,0.38)\) & .\\
 & 52 & \(+\) & \(0.94\) & \((0.71,1.17)\) & \(22.17\)\\
 &  & \(-\) & \(0.98\) & \((0.68,1.27)\) & .\\
 & 78 & \(+\) & \(2.03\) & \((1.63,2.43)\) & \(22.43\)\\
 &  & \(-\) & \(2.05\) & \((1.54,2.57)\) & .\\
 & 104 & \(+\) & \(3.50\) & \((2.90,4.11)\) & \(22.35\)\\
 &  & \(-\) & \(3.49\) & \((2.71,4.27)\) & .\\
\(\gamma_2(0,t)\) & 26 & \(+\) & \(0.08\) & \((0.03,0.14)\) & \(-0.00\)\\
 &  & \(-\) & \(0.08\) & \((0.03,0.14)\) & .\\
 & 52 & \(+\) & \(0.33\) & \((0.16,0.49)\) & \(-0.00\)\\
 &  & \(-\) & \(0.33\) & \((0.16,0.49)\) & .\\
 & 78 & \(+\) & \(0.73\) & \((0.42,1.03)\) & \(-0.00\)\\
 &  & \(-\) & \(0.73\) & \((0.42,1.03)\) & .\\
 & 104 & \(+\) & \(1.36\) & \((0.90,1.83)\) & \(-0.00\)\\
 &  & \(-\) & \(1.36\) & \((0.90,1.83)\) & .\\
\(\gamma_1(1,t)\) & 26 & . & \(0.16\) & \((0.06,0.26)\) & .\\
 & 52 & . & \(0.72\) & \((0.39,1.05)\) & .\\
 & 78 & . & \(1.44\) & \((0.83,2.04)\) & .\\
 & 104 & . & \(2.35\) & \((1.45,3.25)\) & .\\
\(\gamma_2(1,t)\) & 26 & . & \(0.03\) & \((-0.03,0.10)\) & .\\
 & 52 & . & \(0.21\) & \((0.02,0.39)\) & .\\
 & 78 & . & \(0.52\) & \((0.15,0.89)\) & .\\
 & 104 & . & \(1.12\) & \((0.51,1.73)\) & .\\
  \end{tabular}

  \medskip
  \begin{flushleft}
    {Type: fusion estimator (\(+\)) or RCT-only estimator (\(-\)); CI: confidence interval; Reduction: percentage reduction CI length, \(\%\).}
  \end{flushleft}
\end{table}



\begin{table}
  \caption{Cumulative incidence differences after removing history of cardiovascular diseases from the baseline variables.}
  \label{tab:analysis-cif-sensitivity-no-history-app}
  \footnotesize
  \centering
  \begin{tabular}{lrlrrr}
    Estimand & \(t\) (weeks) & Type & Estimate (\%) & 95\%-CI (\%) & Reduction\\
    \(\theta_1(t)\) & 26 & \(+\) & \(-0.00\) & \((-0.88,0.87)\) & \(18.11\)\\
 &  & \(-\) & \(-0.56\) & \((-1.63,0.51)\) & .\\
 & 52 & \(+\) & \(-0.66\) & \((-1.79,0.47)\) & \(16.79\)\\
 &  & \(-\) & \(-0.96\) & \((-2.32,0.39)\) & .\\
 & 78 & \(+\) & \(-1.85\) & \((-3.10,-0.60)\) & \(18.39\)\\
 &  & \(-\) & \(-2.03\) & \((-3.56,-0.49)\) & .\\
 & 104 & \(+\) & \(-2.44\) & \((-3.86,-1.03)\) & \(19.02\)\\
 &  & \(-\) & \(-2.88\) & \((-4.62,-1.14)\) & .\\
\(\theta_2(t)\) & 26 & \(+\) & \(-0.53\) & \((-1.06,0.00)\) & \(0.00\)\\
 &  & \(-\) & \(-0.52\) & \((-1.06,0.01)\) & .\\
 & 52 & \(+\) & \(-0.40\) & \((-1.25,0.45)\) & \(-0.00\)\\
 &  & \(-\) & \(-0.40\) & \((-1.25,0.45)\) & .\\
 & 78 & \(+\) & \(0.07\) & \((-1.13,1.26)\) & \(-0.00\)\\
 &  & \(-\) & \(0.07\) & \((-1.13,1.27)\) & .\\
 & 104 & \(+\) & \(-0.18\) & \((-1.64,1.28)\) & \(-0.00\)\\
 &  & \(-\) & \(-0.17\) & \((-1.63,1.29)\) & .\\
\end{tabular}
  \medskip
  \begin{flushleft}
    {Type: fusion estimator (\(+\)) or RCT-only estimator (\(-\)); CI: confidence interval; Reduction: percentage reduction CI length, \(\%\).}
  \end{flushleft}
\end{table}

\begin{table}
  \caption{Restricted mean time lost differences after removing history of cardiovascular diseases from the baseline variables.}
  \label{tab:analysis-rmtl-sensitivity-no-history-app}
  \footnotesize
  \centering
  \begin{tabular}{lrlrrr}
    Estimand & \(t\) (weeks) & Type & Estimate (weeks) & 95\%-CI (weeks) & Reduction\\
    \(\gamma_1(t)\) & 26 & \(+\) & \(-0.05\) & \((-0.16,0.06)\) & \(24.42\)\\
 &  & \(-\) & \(-0.12\) & \((-0.27,0.02)\) & .\\
 & 52 & \(+\) & \(-0.13\) & \((-0.48,0.22)\) & \(19.66\)\\
 &  & \(-\) & \(-0.33\) & \((-0.76,0.10)\) & .\\
 & 78 & \(+\) & \(-0.48\) & \((-1.11,0.16)\) & \(18.37\)\\
 &  & \(-\) & \(-0.75\) & \((-1.53,0.03)\) & .\\
 & 104 & \(+\) & \(-0.97\) & \((-1.93,-0.02)\) & \(18.10\)\\
 &  & \(-\) & \(-1.36\) & \((-2.53,-0.19)\) & .\\
\(\gamma_2(t)\) & 26 & \(+\) & \(-0.05\) & \((-0.13,0.02)\) & \(0.00\)\\
 &  & \(-\) & \(-0.05\) & \((-0.13,0.02)\) & .\\
 & 52 & \(+\) & \(-0.13\) & \((-0.37,0.11)\) & \(-0.00\)\\
 &  & \(-\) & \(-0.13\) & \((-0.37,0.11)\) & .\\
 & 78 & \(+\) & \(-0.22\) & \((-0.67,0.24)\) & \(-0.00\)\\
 &  & \(-\) & \(-0.21\) & \((-0.67,0.25)\) & .\\
 & 104 & \(+\) & \(-0.24\) & \((-0.99,0.51)\) & \(-0.00\)\\
 &  & \(-\) & \(-0.24\) & \((-0.99,0.51)\) & .\\
\end{tabular}
  \medskip
  \begin{flushleft}
    {Type: fusion estimator (\(+\)) or RCT-only estimator (\(-\)); CI: confidence interval; Reduction: percentage reduction CI length, \(\%\).}
  \end{flushleft}
\end{table}

\begin{table}
  \caption{Cumulative incidence differences after removing controls from SUSTAIN-6.}
  \label{tab:analysis-cif-fewer-controls-app}
  \footnotesize
  \centering
  \begin{tabular}{lrlrrr}
    Estimand & \(t\) (weeks) & Type & Estimate (\%) & 95\%-CI (\%) & Reduction\\
\(\theta_1(t)\) & 26 & \(+\) & \(-0.21\) & \((-1.16,0.73)\) & \(48.70\)\\
 &  & \(-\) & \(-1.22\) & \((-3.07,0.62)\) & .\\
 & 52 & \(+\) & \(-0.97\) & \((-2.20,0.25)\) & \(45.48\)\\
 &  & \(-\) & \(-1.74\) & \((-3.99,0.50)\) & .\\
 & 78 & \(+\) & \(-2.35\) & \((-3.73,-0.96)\) & \(46.55\)\\
 &  & \(-\) & \(-2.74\) & \((-5.32,-0.15)\) & .\\
 & 104 & \(+\) & \(-3.10\) & \((-4.67,-1.52)\) & \(48.35\)\\
 &  & \(-\) & \(-4.16\) & \((-7.21,-1.12)\) & .\\
\(\theta_2(t)\) & 26 & \(+\) & \(-1.03\) & \((-2.22,0.15)\) & \(0.01\)\\
 &  & \(-\) & \(-1.03\) & \((-2.21,0.16)\) & .\\
 & 52 & \(+\) & \(-1.17\) & \((-2.75,0.41)\) & \(0.01\)\\
 &  & \(-\) & \(-1.16\) & \((-2.74,0.42)\) & .\\
 & 78 & \(+\) & \(-1.41\) & \((-3.52,0.70)\) & \(0.01\)\\
 &  & \(-\) & \(-1.39\) & \((-3.50,0.72)\) & .\\
 & 104 & \(+\) & \(-1.54\) & \((-3.99,0.92)\) & \(0.01\)\\
 &  & \(-\) & \(-1.51\) & \((-3.97,0.95)\) & .\\
\end{tabular}
  \medskip
  \begin{flushleft}
    {Type: fusion estimator (\(+\)) or RCT-only estimator (\(-\)); CI: confidence interval; Reduction: percentage reduction CI length, \(\%\).}
  \end{flushleft}
\end{table}

\begin{table}
  \caption{Restricted mean time lost differences after removing controls from SUSTAIN-6.}
  \label{tab:analysis-rmtl-fewer-controls-app}
  \footnotesize
  \centering
  \begin{tabular}{lrlrrr}
    Estimand & \(t\) (weeks) & Type & Estimate (weeks) & 95\%-CI (weeks) & Reduction\\
\(\gamma_1(t)\) & 26 & \(+\) & \(-0.10\) & \((-0.23,0.03)\) & \(59.69\)\\
 &  & \(-\) & \(-0.29\) & \((-0.60,0.03)\) & .\\
 & 52 & \(+\) & \(-0.27\) & \((-0.65,0.12)\) & \(51.75\)\\
 &  & \(-\) & \(-0.64\) & \((-1.44,0.15)\) & .\\
 & 78 & \(+\) & \(-0.71\) & \((-1.41,-0.01)\) & \(49.08\)\\
 &  & \(-\) & \(-1.28\) & \((-2.65,0.09)\) & .\\
 & 104 & \(+\) & \(-1.36\) & \((-2.41,-0.31)\) & \(47.99\)\\
 &  & \(-\) & \(-2.10\) & \((-4.11,-0.08)\) & .\\
\(\gamma_2(t)\) & 26 & \(+\) & \(-0.16\) & \((-0.34,0.02)\) & \(0.01\)\\
 &  & \(-\) & \(-0.16\) & \((-0.34,0.02)\) & .\\
 & 52 & \(+\) & \(-0.41\) & \((-0.93,0.10)\) & \(0.01\)\\
 &  & \(-\) & \(-0.41\) & \((-0.92,0.10)\) & .\\
 & 78 & \(+\) & \(-0.78\) & \((-1.70,0.14)\) & \(0.01\)\\
 &  & \(-\) & \(-0.77\) & \((-1.69,0.15)\) & .\\
 & 104 & \(+\) & \(-1.24\) & \((-2.68,0.19)\) & \(0.01\)\\
 &  & \(-\) & \(-1.23\) & \((-2.66,0.20)\) & .\\
\end{tabular}
  \medskip
  \begin{flushleft}
    {Type: fusion estimator (\(+\)) or RCT-only estimator (\(-\)); CI: confidence interval; Reduction: percentage reduction CI length, \(\%\).}
  \end{flushleft}
\end{table}

\section{Implications of weaker transportability assumptions}
\label{sec:weaker-app}

In the main text, we have showcased how the transportability of a conditional cause-specific hazard improves the precision of estimators for cumulative incidence functions and restricted mean times lost.

One consideration is whether this assumption can be reasonably weakened according to the parameter of interest.
To ground ideas, consider the target parameter \(\theta_1(0)=\E\{F_{11}(\tau\mid 0,X)\mid D=1\}\).
If we view the parameter as the mean of a binary outcome \(\E\{\I\{T(0)\leq \tau,J(0)=1\}\mid D=1\}\), a straightforward transportability assumption would be
\[
  \Pr\{T(0)\leq \tau,J(0)=1\mid X=x,D=1\} = \Pr\{T(0)\leq \tau,J(0)=1\mid X=x,D=0\},
\]
which is
\[
  \int_{0}^{\tau}S_1(0)(t\!-\!\mid x)\d\Alpha_{11}(0)(t\mid x)\d t=\int_{0}^{\tau}S_0(0)(t\!-\!\mid x)\d\Alpha_{01}(0)(t\mid x)\d t
\]
for \(x\in\mathcal{X}_1\cap\mathcal{X}_0\), where \(S_{d}(0)(t\mid x)=[\Pi\{\Alpha_{d1}(0)+\Alpha_{d2}(0)\}](t\mid x)\).

There are two peculiarities to point out.
The first is whether it makes sense at all to only restrict the value of conditional cumulative incidence function of cause 1 at the time point \(\tau\).
It is very unnatural to only assume transportability for a single time point.
If this assumption holds, we should also expect the cumulative incidence functions in the time interval around that time point to be quite comparable across populations, especially when the event time distribution is continuous.
Moreover, we would not generally expect a substantial decrease in the semiparametric efficiency bound of the parameter, if the compatibility of the two population exists for a mere single time point on a specific scale defined by the parameter.

The second is a result of the cumulative incidence function of cause 1 being a functional of the cause-specific hazards of both event types.
Therefore, by making this assumption, we are also putting restrictions on the cause 2 hazards between the two populations.
However, reasoning for comparability of the cumulative incidence functions is arguably more difficult than doing so separately for the two event rates.
Note that this observation also applies to transportability assumptions on subdistribution hazards \citepsuppmat{fine1999proportional}, for example, for \(t\in(0,\tau]\),
\[
  \Pr\{T(0)\leq t,J(0)=1\mid X=x,D=1\} = \Pr\{T(0)\leq t,J(0)=1\mid X=x,D=0\}.
\]

Apart from the transportability of the cumulative incidence function, we may also consider the transportability of the all-cause survival function that \(S_1(0)(t\mid x)=S_0(0)(t\mid x)\).
However, that the sum of two cause-specific hazards is equal across the populations can result from many combinations of event rates whose interpretations are drastically different.
For instance, this assumption holds if the cause 1 hazard under placebo in the RCT population equals the cause 2 hazard in the external control population, while their competing risks are completely eliminated.
Since the estimands used in competing risks analysis often seek to separate the treatment effects on different causes, a transportability assumption that does not acknowledge the nature of competing risks may be hard to justify.


\bibliographysuppmat{./competing-risk-external-control.bib}


\end{document}